\documentclass{article}

\usepackage[english]{babel}
\usepackage{natbib}
\usepackage[letterpaper,top=2cm,bottom=2cm,left=3cm,right=3cm,marginparwidth=1.75cm]{geometry}
\usepackage{amsmath, amsthm, amssymb}
\usepackage{graphicx}
\usepackage[colorlinks=true, allcolors=blue]{hyperref}
\usepackage{subcaption}
\usepackage{mathtools}
\usepackage{wrapfig}
\newcommand{\blind}{1}
\usepackage{bibunits}

\defaultbibliographystyle{abbrvnat}
\usepackage[toc,page]{appendix}
\newcommand{\jasa}{0}
\usepackage{booktabs}
\usepackage{bbm}

\usepackage{algorithm}
\usepackage{algpseudocode}

\usepackage{multirow}

\DeclareMathOperator*{\argmax}{argmax}

\newtheorem{theorem}{Theorem}[section]
\newtheorem{proposition}[theorem]{Proposition}

\newtheorem{corollary}{Corollary}[section]

\newtheorem{definition}{Definition}
\newtheorem*{assumption*}{Assumption}
\newtheorem{assumption}{Assumption}

\newif\ifbiometrika
\biometrikafalse   

\providecommand{\keywords}[1]{\textbf{Key words:} #1}

\title{Laplace Approximations for Mixed-Effects and Gaussian Process Quantile Regression}
\author{
  Andrea Nava\footnotemark[1]  \footnotemark[2] \footnotemark[3]\\
  \and
  Fabio Sigrist\footnotemark[1] \footnotemark[2] 
}

\date{}

\begin{document}
\maketitle

\footnotetext[1]{Seminar for Statistics, ETH Zurich}
\footnotetext[2]{Lucerne University of Applied Sciences and Arts}
\footnotetext[3]{Corresponding author: navaan@ethz.ch}

\begin{abstract}
Laplace approximations are a standard tool for computationally efficient inference in latent Gaussian models, but they fail for quantile regression with the asymmetric Laplace likelihood because the observed Hessian vanishes almost everywhere. We show that this obstacle can be overcome without smoothing the likelihood: the relevant local curvature is given not by the observed Hessian, but by the Fisher information when the model is correctly specified and by the population curvature of the expected loss under misspecification. On this basis, we develop a Laplace approximation framework for quantile regression with mixed-effects and Gaussian process models. We propose practical curvature estimators, including the triangular kernel curvature (TKC) estimator, that yield approximations for posterior distributions and marginal likelihoods, and we establish their asymptotic validity. Empirically, the proposed methods are scalable and numerically stable, and for latent Gaussian models, they achieve accuracy comparable to or better than MCMC and variational competitors at substantially lower computational costs. More broadly, the framework clarifies how Laplace approximations can be justified for non-smooth generalized posteriors through local quadratic behavior of the expected loss.

\end{abstract}

\keywords{Bayesian quantile regression, latent Gaussian models, non-smooth generalized Bayes models, random effects, scalable inference}

\section{Introduction}

Quantile regression provides a flexible way to model aspects of a distribution beyond the mean. The goal of quantile regression is to estimate the conditional $\tau$-quantile of $Y|X=x$ defined as the value $Q_{\tau}(Y|X=x)$ such that
\begin{equation}
\mathbb{P}[Y \leq Q_{\tau}(Y|X=x)|X=x] = \tau,
\end{equation}
where $Y$ is a response variable and $X$ are input variables.
This is valuable in applications where conditional mean estimation is insufficient, for instance, when analyzing tail outcomes (see Figure~\ref{fig:mm_hsb} for an example), constructing prediction intervals with minimal distributional assumptions, or robustness to outliers is required. \citet{1aa6b708-cbcf-320f-b3f3-298154ee2aeb} introduced quantile regression via empirical risk minimization of the so-called pinball loss, an asymmetrically weighted version of the absolute error:
\begin{equation}
\rho_\tau(y ,\hat y) =
\begin{cases}
\tau (y - \hat y) & \text{if } y -\hat y \ge 0, \\
(\tau-1)(y-\hat y) & \text{if } y -\hat y < 0,
\end{cases}
\end{equation}
where $\hat y$ is the predicted $\tau$-quantile and $y$ the observed data. One can easily show that the $\tau$-quantile minimizes the expected pinball loss, and that this loss is a proper scoring rule in the sense of \citet{Schervish2018}. \citet{YU2001437} later introduced a Bayesian quantile regression model via the asymmetric Laplace likelihood defined as
\begin{equation}\label{al_likelihood}
p(y \mid \mu, \lambda) = \frac{\tau(1-\tau)}{\lambda} \exp \left(-\frac{1}{\lambda}\rho_\tau(y-\mu) \right),
\end{equation}
where $\mu\in\mathbb{R}$ and $\lambda>0$ are location and scale parameters, and the $\tau$-quantile of $Y$ is $\mu$. 
\begin{figure}[ht!]
    \centering
    \includegraphics[width=1\linewidth]{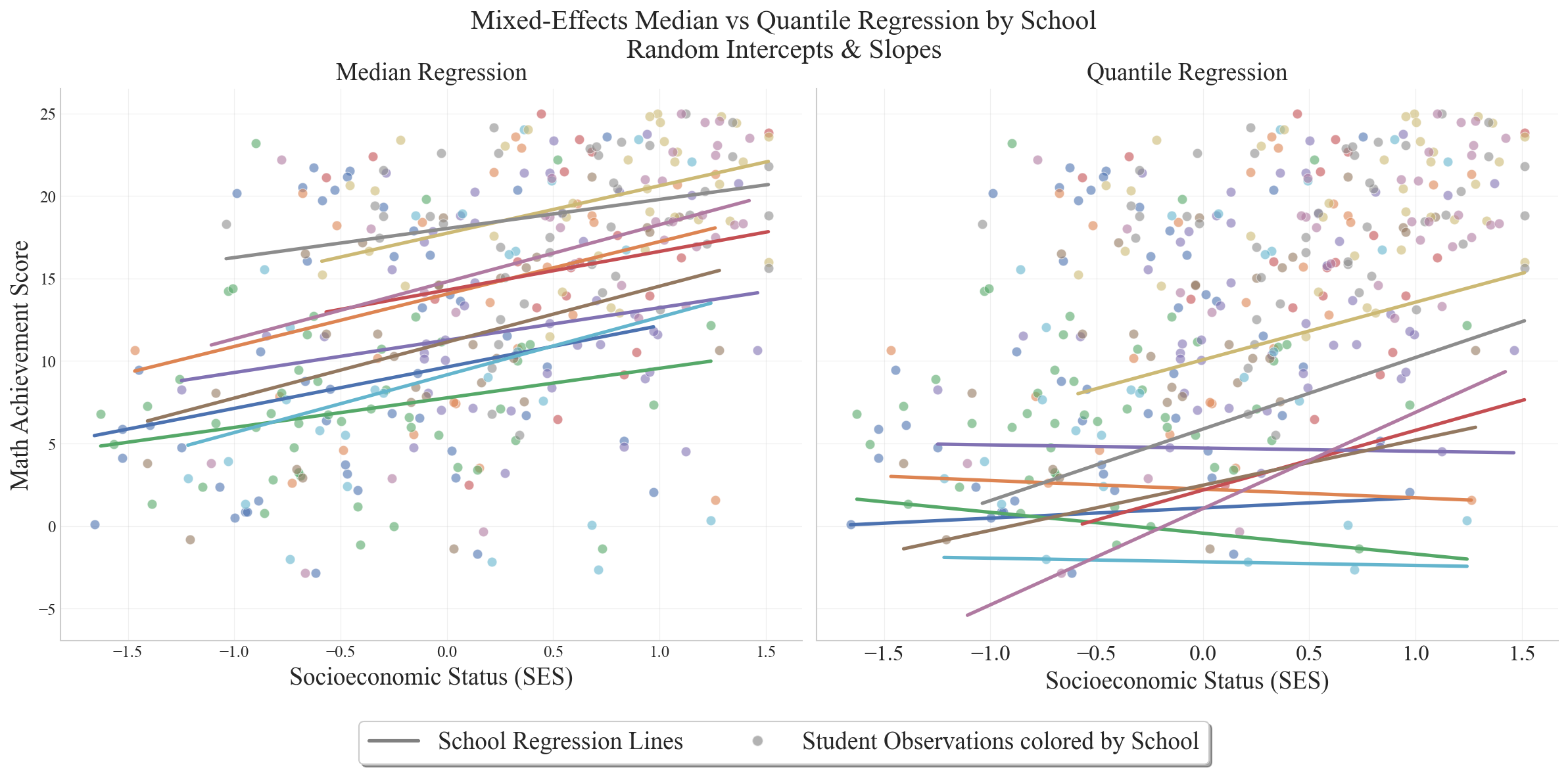}
    \caption{\textbf{A motivating example: High school math scores.} A school-level random intercept and random slope quantile regression model ($\tau=0.5$ and $\tau=0.05$) fitted with our proposed Laplace approximation for the Math Score of students from 160 different high-schools from the \textit{High School and Beyond (HSB) dataset}. The plots show the fitted function for 10 randomly selected schools. The left plot shows that the socioeconomic status has a positive association with the median scores. The right plot shows the fitted function for the 5\% quantile, which suggests that socioeconomic status is less strongly associated with math scores among lower-performing students.}
    \label{fig:mm_hsb}
\end{figure}

In this work, we introduce novel methods for quantile regression with latent Gaussian models including mixed-effects \citep{mcculloch2004generalized} and Gaussian process \citep{williams2006gaussian} models using the asymmetric Laplace likelihood. Laplace approximations are computationally very efficient \citep{nickisch2008approximations} and converge asymptotically to the correct quantity, depending on the asymptotic setting. A traditional Laplace approximation for Bayesian models \citep{tierney1986accurate} approximates the sum of the log-likelihood and log-prior by a second-order Taylor expansion centered at the posterior mode. This leads to an approximate multivariate normal posterior distribution and an approximate marginal likelihood. However, for the asymmetric Laplace likelihood, the observed Hessian of the log-likelihood is zero almost everywhere, and the Laplace approximation can thus not be applied. For this reason, widely used mixed-effects and latent Gaussian model libraries, such as \texttt{lme4}, \texttt{glmmTMB}, and \texttt{R-INLA}, which are based on a Laplace approximation, do not support quantile regression with the asymmetric Laplace likelihood. Resolving this limitation thus not only provides a theoretical basis for Laplace approximations in Bayesian quantile regression, but also leads to a scalable, open-source software implementation for practitioners. 

Le Cam's theory of local asymptotic normality (LAN) \citep{le2000asymptotics} shows that a local quadratic expansion of the log-likelihood does not require the likelihood function to be twice differentiable. Instead, differentiability in quadratic mean is sufficient. Under this weaker condition, the Fisher information, defined as the variance of the score function, characterizes the local curvature. Since the asymmetric Laplace likelihood satisfies this condition, this yields an asymptotically justified Laplace approximation, under the regularity conditions stated below, even though the observed Hessian is degenerate. Furthermore, in misspecified settings, we can rely on the theory of non-smooth M-estimators \citep{van2000asymptotic} to show that a meaningful curvature exists and equals the true data-generating density evaluated at the target quantile. We use these two curvature notions to construct Laplace approximations for latent Gaussian quantile regression and to derive practical plug-in estimators.  

This paper makes three main contributions. First, we construct Laplace approximations for latent Gaussian quantile regression by replacing the degenerate observed Hessian with curvature arising from the local quadratic behavior of the expected loss. Second, we prove asymptotic validity under both correct specification, where the relevant curvature is the Fisher information, and misspecification, where it is governed by the population curvature of the expected pinball loss. Third, we propose practical plug-in curvature estimators, including the triangular kernel curvature estimator, and show that the resulting methodology provides scalable and numerically stable inference for mixed-effects and Gaussian process quantile regression.

The proposed framework also suggests a possible route to Laplace approximations for non-smooth generalized Bayesian models beyond the asymmetric Laplace likelihood, in which curvature is understood through the local behavior of the expected loss rather than through the observed Hessian alone. Other examples where the observed Hessian is degenerate almost everywhere but the expected criterion can still exhibit local curvature include piecewise-linear losses such as the $\varepsilon$-insensitive loss in support vector regression and the interval score for interval regression.

\subsection{Related Work}
For linear mixed-effects models, \cite{geraci2014linear} introduced linear quantile mixed models (LQMMs), where the marginal likelihood is approximated using numerical quadrature. In Markov chain Monte Carlo (MCMC) methods, the asymmetric Laplace likelihood is often handled via its location-scale mixture representation, which facilitates the construction of a Gibbs sampler \citep{kozumi2011gibbs,yue2011bayesian}. Standard gradient-based MCMC methods, such as Hamiltonian Monte Carlo, may perform poorly since the gradients of the asymmetric Laplace log-likelihood are piecewise constant. For this reason, \cite{liu2025bayesian} recently introduced an MCMC approach based on a smoothed likelihood, obtained by convolving the asymmetric Laplace likelihood with density kernels. This approach resembles the work of \cite{he2023smoothed}, but in a sampling setting rather than an optimization one. 



Despite these developments, scalable and accurate quantile regression for latent Gaussian models remains challenging. \cite{geraci2017nonlinear} proposed a Laplace approximation for nonlinear quantile mixed-models, based on a smoothed version of the asymmetric Laplace likelihood. However, smoothing the likelihood introduces an additional bandwidth tuning parameter and, critically, it alters the exact connection between the asymmetric Laplace log-likelihood and the pinball loss that justifies Bayesian quantile regression as described below. \cite{ehm2016quantiles} characterize the pinball loss as the unique proper scoring rule (up to monotone transformation) for eliciting quantiles. This has profound implications for inference. \cite{10.1214/13-BA817} establish posterior consistency precisely because the log-likelihood of an asymmetric Laplace equals the pinball loss (up to constants). Remarkably, this property is special to the asymmetric Laplace likelihood. For instance, a Gaussian likelihood reparametrized to target quantiles, and any other smooth likelihood, do not enjoy the same consistency guarantees, because they lack the connection to the unique proper scoring rule for quantiles.

\cite{hartmann2019laplace} propose a Laplace approximation for the Student-t likelihood in which the Hessian is replaced by the expected Hessian to improve numerical conditioning as the Student-t likelihood is not log-concave. However, their approach differs from ours in several ways. First, we use the original definition of the Fisher information directly since the expected Hessian is not available for the asymmetric Laplace log-likelihood. Second, while their motivation is primarily to improve conditioning, our use of the Fisher information is essential to enable the Laplace approximation in a setting where the standard method breaks down entirely. Third, they do not address the issue that the log-likelihood is typically misspecified.

While Bayesian asymptotics under misspecification have been studied in the literature \citep{Miss-BvM, chernozhukov2003mcmc}, the connection between these theoretical results and the justification of Laplace approximations for non–twice-differentiable likelihoods under misspecification has not been made explicitly.

The remainder of this paper is organized as follows. Section~\ref{Laplace} introduces the  models we consider,  introduces our novel Laplace approximations, and derives theoretical results. Sections~\ref{Experiments} and~\ref{real_world_sec} present simulated and real-world data experiments. Section~\ref{Discussion} concludes with limitations and directions for future research.

\section{Bayesian Quantile Regression and Laplace Approximations}\label{Laplace}
We model the conditional $\tau$-quantiles $\boldsymbol{\mu}=(\mu_1, ..., \mu_n)^T \in \mathbb{R}^n$ of a response variable $\mathbf{y}=(y_1, ..., y_n)^T \in \mathbb{R}^n$ using a mixed-effects model:\footnote{From here on, we use lowercase symbols for both random variables and their realizations for notational simplicity.}
\begin{equation}\label{me_model_def}
\boldsymbol{\mu} = F(\mathbf{X}) + \mathbf{Zb}, \quad \mathbf{b} \sim \mathcal{N}(0, \boldsymbol{K_{\theta}}),
\end{equation}
where $F(\mathbf{X}) \in \mathbb{R}^n$ is a fixed effect, represented as the row-wise evaluation of a function $F(\cdot): \mathbb{R}^p \rightarrow \mathbb{R}$ on a fixed design matrix $\mathbf{X} \in \mathbb{R}^{n \times p}$. $F(\mathbf{X})$ can be modeled as a linear function, $F(\mathbf{X})=\mathbf{X\boldsymbol{\beta}}$, or a nonlinear function, such as boosted trees \citep{sigrist2022gaussian} and neural networks \citep{simchoni2023integrating}. We denote the parameters of this function $F(\cdot)$ by $\beta$, regardless of whether $F(\cdot)$ is linear or not. The vector $\mathbf{b} \in \mathbb{R}^m$ is a latent Gaussian random effect vector and can consist of grouped random effects \citep{mcculloch2004generalized} and/or a finite-dimensional version of a Gaussian process \citep{williams2006gaussian}. We assume that its covariance matrix $\boldsymbol{K_{\theta}}$ is governed by some hyperparameters $\boldsymbol{\theta} \in \mathbb{R}^k$. Further, $\mathbf{Z} \in \mathbb{R}^{n \times m}$ is a deterministic random effects design matrix that maps the random effects $\mathbf{b}$ to the corresponding observations. Often, this is a binary incidence matrix, but it can also contain predictor variables. Note that modeling the $\tau$-quantiles $\boldsymbol{\mu}$ via \eqref{me_model_def} is understood in the usual quantile regression sense and is equivalent to assuming $y_i=\mu_i+\varepsilon_i,$
where $Q_\tau(\varepsilon_i\mid \mu_i)=0$. 

We follow \citet{YU2001437} and adopt a generalized Bayes approach \citep{bissiri2016general} using the asymmetric Laplace likelihood $p(y \mid \mu, \lambda)$ given in \eqref{al_likelihood}. The Gibbs posterior of such a model is then proportional to $p(\mathbf{y}|\mathbf{b}, \boldsymbol{\beta},\lambda)\pi(\mathbf{b}|\boldsymbol{\theta})$, where $p(\mathbf{y}|\mathbf{b}, \boldsymbol{\beta},\lambda)=\prod_{i=1}^n p(y_i \mid \mu_i, \lambda)$ and $\pi(\mathbf{b}|\boldsymbol{\theta})$ is the prior density. Neither the posterior nor the marginal likelihood of the hyperparameters is available in closed form. A classical Laplace approximation \citep{tierney1986accurate} approximates the sum of the log-likelihood $ \log p(\mathbf{y}|\mathbf{b}, \boldsymbol{\beta},\lambda)$ and the quadratic log-prior $\log\pi(\mathbf{b}|\boldsymbol{\theta})$ by a second-order Taylor expansion centered at the posterior mode
\begin{equation}\label{mode_def}
\hat{\mathbf{b}} = \arg\max_{\mathbf{b}} \log p(\mathbf{y}|\mathbf{b}, \boldsymbol{\beta},\lambda) + \log \pi(\mathbf{b}|\boldsymbol{\theta}).
\end{equation}
 This leads to an approximate multivariate normal posterior distribution $\mathcal{N}(\mathbf{\hat{b}}, \boldsymbol{\hat{\Sigma}})$,
where the covariance matrix $\boldsymbol{\hat{\Sigma}} \in \mathbb{R}^{m\times m}$ is the inverse of the negative Hessian of $\log p(\mathbf{y}|\mathbf{b}, \boldsymbol{\beta},\lambda) + \log \pi(\mathbf{b}|\boldsymbol{\theta})$ evaluated at $\mathbf{\hat{b}}$:
\begin{equation}
\boldsymbol{\hat{\Sigma}} = (\boldsymbol{K_{\theta}^{-1}} + \mathbf{W})^{-1},\boldsymbol{K_{\theta}^{-1}}=-\nabla^2_{\mathbf{b}} \text{log}\pi(\mathbf{b})|_{\mathbf{b} = \mathbf{\hat b}}, \mathbf{W} := -\nabla^2_{\mathbf{b}} \log p(\mathbf{y}|\mathbf{b}, \boldsymbol{\beta},\lambda)|_{\mathbf{b} = \mathbf{\hat{b}}}.
\end{equation}
From the Laplace approximation, one also obtains an approximate marginal likelihood $p(\mathbf{y}|\boldsymbol{\beta},\lambda, \boldsymbol{\theta}) = \int p(\mathbf{y}|\mathbf{b}, \boldsymbol{\beta},\lambda)\pi(\mathbf{b}|\boldsymbol{\theta})d\mathbf{b}$:
\begin{equation}\label{mar_like_laplace}
   p(\mathbf{y}|\boldsymbol{\beta},\lambda, \boldsymbol{\theta}) \approx p(\mathbf{y}|\hat{\mathbf{b}}, \boldsymbol{\beta},\lambda)\pi(\hat{\mathbf{b}}|\boldsymbol{\theta})|\det(\hat{\boldsymbol{\Sigma}})|^{1/2} \big(2\pi\big)^{m/2},
\end{equation}
which can be used to estimate hyperparameters. However, for the asymmetric Laplace likelihood, the standard Laplace approximation is not applicable since the second derivative of the log-likelihood is zero almost everywhere. 

In the following, we write $p(y \mid\mathbf{b}, \mathbf{\psi})$ for a generic likelihood, where $\psi$ denotes all parameters other than the latent random effects $\mathbf{b}$. Depending on the context, $p(y \mid\mathbf{b}, \mathbf{\psi})$ is the asymmetric Laplace likelihood or another loss-induced likelihood, such as one associated with the $\varepsilon$-insensitive loss or the interval score. For the asymmetric Laplace model, $\mathbf{\psi}=(\mathbf{\beta},\lambda)$ with $\tau$ fixed, and we write $(\mathbf{\beta},\lambda)$ explicitly when convenient.

\subsection{Fisher-Laplace Approximation}\label{FL_approx}
The Fisher information measures the amount of information that a random variable carries about an unknown parameter and describes the geometry of the likelihood.
\begin{definition}[Fisher information]
\label{def:fisher_info}
For a parametric model $p(y|\mathbf{b}, \mathbf{\psi})$ with true parameter $\mathbf{b}_0 \in \mathbb{R}^m$ and nuisance parameters $\psi$, the Fisher information is 
\begin{equation}
    \mathbf{I}_{\mathbf{b}} \coloneqq \mathbb{E}_{y|\mathbf{b}_0}\left[\nabla_{\mathbf{b}} \log p(y|\mathbf{b},\mathbf{\psi})\nabla_{\mathbf{b}} \log p(y|\mathbf{b},\mathbf{\psi})^\top  \right]
\end{equation}
where $\nabla_{\mathbf{b}} \log p(y|\mathbf{b}, \mathbf{\psi})$ is the score function.
\end{definition}
Under regularity conditions, $\mathbf{I}_{\mathbf{b}} = -\mathbb{E}_{y|\mathbf{b}_0}\left[\nabla_{\mathbf{b}}^2 \log p(y|\mathbf{b},\mathbf{\psi})\right]$ holds. This latter representation is often used in the asymptotic justification of the Laplace approximation \citep{van2000asymptotic}. However, for the asymmetric Laplace likelihood, the dual representation breaks down. But while the Hessian is zero, the Fisher information for the model in \eqref{me_model_def} with the asymmetric Laplace likelihood is 
\begin{equation}\label{eq:fisher_info}
\mathbf I_{n, \mathbf b}=\mathbb{E}_{\mathbf{y}|\mathbf{b}_0}\left[\nabla_{\mathbf{b}} \log p(\mathbf{y}|\mathbf{b},\mathbf{\psi})\nabla_{\mathbf{b}} \log p(\mathbf{y}|\mathbf{b},\mathbf{\psi})^\top  \right]=\mathbf{Z}^T\text{diag}\left(\frac{\tau(1-\tau)}{\lambda^2}\right)\mathbf{Z},
\end{equation}
where $\mathbf D=\text{diag}\left(\tau(1-\tau)/\lambda^2\right)\in \mathbb{R}^{n\times n}$ is a diagonal matrix with the constant $\tau(1-\tau)/\lambda^2$ on the diagonal. This suggests performing a quadratic approximation for the log-likelihood using the Fisher information instead of the Hessian. 

The formal justification for such an approximation is given by the local asymptotic normality (LAN) property of likelihoods that are differentiable in quadratic mean (DQM). 
Intuitively, DQM at $\mathbf{b}$ means that the likelihood is differentiable at $\mathbf{b}$ for most realizations of the random variable $Y$\footnote{Formally, a likelihood $p(y|\mathbf{b})$ is DQM at $\mathbf{b}$ if there exists a random variable $\ell_\mathbf{b}(y)$ with $\mathbb{E}_{y\sim\mathbf{b}}[\ell_{\mathbf{b}}(y)] = 0$ and $\mathbb{E}_{y \sim\mathbf{b}}[\ell_{\mathbf{b}}(y)^2] < \infty$ such that 
$$\int \left(\sqrt{p(y| {\mathbf{b} + \mathbf{t}}}) - \sqrt{p(y|\mathbf{b})} - \frac{\mathbf{t}^\top}{2}\ell_{\mathbf{b}}(y)\sqrt{p(y|\mathbf{b})}\right)^2 \mu(dy) = o(\|\mathbf{t}\|^2)$$
as $\mathbf{t} \to 0$, where $\mu$ is a dominating measure and $\ell_{\mathbf{b}}(\cdot)$ is the score function in quadratic mean.}. This property is sufficient to allow for an asymptotic quadratic expansion of the likelihood ratio process, using the Fisher information, known as LAN expansion  \citep{le2000asymptotics}. In the i.i.d. setting, the LAN property implies the following quadratic representation for every compact set $K\subset\mathbb R^m$ and every $\mathbf{t}\in K$:
\begin{equation}
    \log\frac{p(\mathbf{y}|\mathbf{b} + \mathbf{t}/\sqrt{n}, \mathbf{\psi})}{p(\mathbf{y}|\mathbf{b}, \mathbf{\psi})} = \mathbf{t}^\top\frac{1}{\sqrt{n}} \sum_{i=1}^n \nabla_{\mathbf{b}} \log p(y_i| \mathbf{b}, \mathbf{\psi}) -\frac{1}{2}\mathbf{t}^\top\mathbf{I}_{\mathbf{b}}\mathbf{t} + o_p(1)
\end{equation}
 with non-singular Fisher information $\mathbf{I}_\mathbf{b}$, and where $o_p(1)$ indicates that the remainder goes to zero in $p$-probability as the sample size $n \rightarrow \infty$. The root-$n$ scaling is natural in the i.i.d.\ case because then the Fisher information is additive, i.e., the information in $n$ observations satisfies $\mathbf I_{n,\mathbf{b}}=n\,\mathbf I_{\mathbf b}$.
 
In our setting, the Fisher information no longer necessarily scales as a multiple of $n$. Instead, the full-sample Fisher information takes the form
$\mathbf I_{n, \mathbf{b}}=\mathbf Z^\top \mathbf D\,\mathbf Z$. In this case, the natural local scaling is determined by the inverse square root of the information matrix, that is $
\mathbf b + \mathbf I_{n, \mathbf{b}}^{-1/2}\mathbf{t}
$, with the requirement that $\mathbf I_{n, \mathbf{b}}^{-1/2}$ converges to the zero matrix as $n \rightarrow \infty$. The LAN property, expressed using information-scaled neighborhoods, is then for every compact set $K\subset\mathbb R^m$ and every $\mathbf{t}\in K$,
\begin{equation}
\label{eq:lan_info_scaled}
\log\frac{p(\mathbf y \mid \mathbf b + \mathbf I_{n, \mathbf b}^{-1/2}\mathbf{t},\mathbf{\psi})}
{p(\mathbf y \mid \mathbf b,\mathbf{\psi})}
=
\mathbf{t}^\top \mathbf{I}_{n, \mathbf b}^{-1/2}
\sum_{i=1}^n \nabla_{\mathbf b}\log p(y_i\mid\mathbf b,\mathbf{\psi})
-\frac12 \mathbf{t}^\top \mathbf{t}
+ o_p(1).
\end{equation}
For example, for a single-level grouped random effects model, the Fisher information is
$\mathbf I_{n, \mathbf b} =\frac{\tau(1-\tau)}{\lambda^2}\,\mathbf Z^\top \mathbf Z$ for the asymmetric Laplace likelihood. Since $\mathbf Z^\top \mathbf Z$ is diagonal with entries equal to the numbers of random effect occurrences $n_j$ for random effects $j$, $j=1,\dots,m$, the information-scaled neighborhood is proportional to $(\mathbf Z^\top \mathbf Z)^{-1/2}=\mathrm{diag}\!\left(1/\sqrt{n_1},\dots,1/\sqrt{n_m}\right)$.
Thus, the classical root-$n$ scaling is recovered group-wise as $1/\sqrt{n_j}$, and the local curvature remains $\tau(1-\tau)/\lambda^2$.

This provides a principled curvature estimate and enables a Laplace approximation, even in the absence of a second derivative. In summary, the Fisher-Laplace approximation uses the expected Fisher information to approximate the local curvature of the log-likelihood to obtain 
 $
 \boldsymbol{\hat{\Sigma}} = (\boldsymbol{K_{\theta}^{-1}} + \mathbf{I}_{n, \mathbf{b}})^{-1},
 $
which is used as approximate posterior covariance matrix and in \eqref{mar_like_laplace} for the approximate marginal likelihood.

We formalize this approach by proving a consistency result for the Fisher-Laplace approximation to the marginal likelihood. In particular, we show that the relative error between the Laplace approximation and the true marginal likelihood vanishes in probability. Note that classical consistency proofs for Laplace approximations (e.g., \cite{tierney1986accurate}) rely on stringent regularity conditions, requiring the log-likelihood to be four times continuously differentiable. These conditions fail to hold in our setting due to the non-smooth, piecewise-linear nature of the asymmetric Laplace likelihood. As a result, we develop a different approach.
\begin{theorem}[Consistency of the Fisher-Laplace approximation]\label{thm:FL_consistency}
Assume the model is correctly specified, and let $\mathbf{b}_0$ denote the true parameter and $\mathbb{P}_0$ the corresponding data generating distribution. 
Let $\ell_n(\mathbf{b}) := \sum_{i=1}^n \log p(y_i \mid \mathbf{b}, \mathbf{\psi})$ denote the log-likelihood, $z_n := \int_{\mathbb{R}^m} p(\mathbf{y}\mid\mathbf{b}, \mathbf{\psi}) \pi(\mathbf{b})\,d\mathbf{b}$ the marginal likelihood, and $\hat{\mathbf b}_n$ the posterior mode. Define the Laplace approximation
\begin{equation}
    z_n^{LA}  := p(\mathbf y \mid \hat{\mathbf b}_n, \mathbf{\psi})\,  \pi(\hat{\mathbf b}_n)\, \big(2\pi\big)^{m/2} \big|\det(\mathbf I_{n, \mathbf b_0} + \mathbf{K}^{-1})\big|^{-1/2},
\end{equation}
where $\mathbf I_{n,\mathbf b_0}$ defined in \eqref{eq:fisher_info} is the Fisher information at $\mathbf b_0$ and $\mathbf{K}^{-1}:= - \nabla_{\mathbf{b}}^2 \log \pi(\mathbf{b})\big|_{\mathbf b=\mathbf b_0}$. Assume:
\begin{enumerate}

\item \textbf{(LAN representation)}  
The log-likelihood admits a LAN expansion around $\mathbf b_0$ with symmetric positive definite Fisher information $\mathbf I_{n, \mathbf b_0}$, in the sense that for every compact set $K \subset \mathbb{R}^m$,
\begin{equation}
\sup_{\mathbf{t} \in K}
\Big|
\log\frac{p(\mathbf y \mid \mathbf b_0 + \mathbf{I}_{n,\mathbf{b_0}}^{-1/2}\mathbf{t}, \mathbf{\psi})}{p(\mathbf y \mid \mathbf b_0, \mathbf{\psi})}
-
\mathbf{t}^\top\mathbf{I}_{n,\mathbf{b_0}}^{-1/2}  \sum_{i=1}^n \nabla_{\mathbf b}\log p(y_i \mid \mathbf b_0, \mathbf{\psi})
+
\tfrac12 \mathbf{t}^\top \mathbf{t}
\Big|
\;\xrightarrow{\mathbb{P}_0}\; 0 .
\end{equation}
and the normalized score $
\Delta_n := \mathbf{I}_{n,\mathbf{b_0}}^{-1/2}\sum_{i=1}^n \nabla_{\mathbf b}\log p(y_i \mid \mathbf b_0, \mathbf{\psi})
$
is tight, i.e.\ $\Delta_n=O_{\mathbb P_0}(1)$.
\item \textbf{(Asymptotic linearity of the mode)}  
The posterior mode $\hat{\mathbf b}_n$ exists and satisfies
\[
\mathbf{I}_{n,\mathbf{b_0}}^{1/2}(\hat{\mathbf b}_n - \mathbf b_0)
=
\mathbf I_{n, \mathbf b_0}^{-1/2}
\sum_{i=1}^n \nabla_{\mathbf b}\log p(y_i \mid \mathbf b_0, \mathbf{\psi})
+
o_{\mathbb{P}_0}(1).
\]

\item \textbf{(Separation and information growth)}  
Let $\|\mathbf A\|_{\mathrm{op}} := \sup_{\|x\|_2=1}\|\mathbf A x\|_2$ denote the operator norm induced by the Euclidean norm. There exists $\delta>0$ and a sequence $M_n \to \infty$ with $M_n\| \mathbf{I}_{n, \mathbf{b}_0}^{-1/2}\|_{\mathrm{op}} \rightarrow 0$ and
$ \log\det(\mathbf I_{n,\mathbf b_0}) = o(n)$ such that
\[
\mathbb P_{0}\!\left(
\inf_{\|\mathbf{I}_{n,\mathbf{b_0}}^{1/2}(\mathbf b-\hat{\mathbf b}_n)\| \ge M_n}
\frac{1}{n}\big[\ell_n(\mathbf b)-\ell_n(\hat{\mathbf b}_n)\big]
\le -\delta
\right)
\;\longrightarrow\; 1 .
\]

\item \textbf{(Prior regularity)}  
The prior density $\pi(\mathbf b)$ is positive, continuous and twice continuously differentiable in a neighborhood of $\mathbf b_0$.
\end{enumerate}

Then, for every $\varepsilon>0$,
\[
\mathbb P_0\!\left(\left|\frac{z_n^{LA}}{z_n}-1\right|>\varepsilon\right)\longrightarrow 0
\qquad\text{as } n\to\infty.
\]
\end{theorem}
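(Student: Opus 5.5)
We plan to write the ratio $z_n/z_n^{LA}$ as an integral in the information-scaled local coordinate centered at the mode and show that it converges to $1$. Substitute $\mathbf b=\hat{\mathbf b}_n+\mathbf I_{n,\mathbf b_0}^{-1/2}\mathbf h$, which has Jacobian $\det(\mathbf I_{n,\mathbf b_0})^{-1/2}$. This gives
\[
\frac{z_n}{p(\mathbf y\mid\hat{\mathbf b}_n,\psi)\pi(\hat{\mathbf b}_n)}=\det(\mathbf I_{n,\mathbf b_0})^{-1/2}\int \exp\bigl\{\ell_n(\hat{\mathbf b}_n+\mathbf I_{n,\mathbf b_0}^{-1/2}\mathbf h)-\ell_n(\hat{\mathbf b}_n)\bigr\}\frac{\pi(\hat{\mathbf b}_n+\mathbf I_{n,\mathbf b_0}^{-1/2}\mathbf h)}{\pi(\hat{\mathbf b}_n)}\,d\mathbf h .
\]
Assumption 3 forces $\|\mathbf I_{n,\mathbf b_0}^{-1/2}\|_{\mathrm{op}}\to0$, so $\det(\mathbf I_{n,\mathbf b_0}+\mathbf K^{-1})/\det(\mathbf I_{n,\mathbf b_0})=\det(\mathbf 1+\mathbf I^{-1/2}\mathbf K^{-1}\mathbf I^{-1/2})\to1$. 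It therefore suffices to show that the integral converges in probability to $(2\pi)^{m/2}$. We split the domain into $\{\|\mathbf h\|\le M_n\}$ and its complement.

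\textbf{Local region.} By Assumption 2, $\hat{\mathbf t}_n:=\mathbf I^{1/2}(\hat{\mathbf b}_n-\mathbf b_0)=\Delta_n+o_P(1)$, and $\Delta_n$ is tight by Assumption 1. Write the log-likelihood ratio as the difference of the two LAN expansions at $\mathbf t=\hat{\mathbf t}_n+\mathbf h$ and at $\mathbf t=\hat{\mathbf t}_n$. On any compact set of $\mathbf h$ this gives
\[
\mathbf h^\top\Delta_n-\tfrac12\|\hat{\mathbf t}_n+\mathbf h\|^2+\tfrac12\|\hat{\mathbf t}_n\|^2+o_P(1)=-\tfrac12\|\mathbf h\|^2+o_P(1),
\]
uniformly in $\mathbf h$. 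Tightness of $\hat{\mathbf t}_n$ is what lets us restrict to compacts. For the prior, Assumption 4 together with $\hat{\mathbf b}_n\to\mathbf b_0$ and $\|\mathbf I^{-1/2}\mathbf h\|\le M_n\|\mathbf I^{-1/2}\|_{\mathrm{op}}\to0$ gives a prior ratio of $1+o_P(1)$ uniformly on $\|\mathbf h\|\le M_n$. On $\|\mathbf h\|\le R$ with $R$ fixed, the integral therefore converges to $\int_{\|\mathbf h\|\le R}e^{-\|\mathbf h\|^2/2}\,d\mathbf h$.

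\textbf{Intermediate annulus $R\le\|\mathbf h\|\le M_n$: the main obstacle.} LAN is stated only on fixed compacts, so we need a tail bound there. We plan to use a diagonal argument: choose $R_n\to\infty$ slowly enough that the uniform LAN error on $\{\|\mathbf t\|\le R_n\}$ still vanishes, which is possible since it vanishes on each fixed compact. If $M_n$ grows faster than $R_n$, we additionally need domination of the form $\ell_n(\hat{\mathbf b}_n+\mathbf I^{-1/2}\mathbf h)-\ell_n(\hat{\mathbf b}_n)\le-c\|\mathbf h\|$ on the annulus. For the asymmetric Laplace likelihood this follows from concavity of $\ell_n$. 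The concave function lies below its chord, and at the sphere $\|\mathbf h\|=R$ it is already at most $-R^2/4$, so along rays the decrease is at least linear: $\le-(R/4)\|\mathbf h\|$. The annulus contribution is then bounded by $\int_{\|\mathbf h\|>R}e^{-R\|\mathbf h\|/4}\,d\mathbf h$, which tends to $0$ as $R\to\infty$. We would make this concavity step explicit, and mention that a general likelihood would require it as an extra condition.

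\textbf{Outer region $\|\mathbf h\|>M_n$.} Assumption 3 gives $\ell_n-\ell_n(\hat{\mathbf b}_n)\le-n\delta$ with probability tending to one. The contribution, including the Jacobian factor $\det(\mathbf I)^{1/2}$ that undoes the substitution, is then at most
\[
\det(\mathbf I_{n,\mathbf b_0})^{1/2}\,e^{-n\delta}\int\frac{\pi(\mathbf b)}{\pi(\hat{\mathbf b}_n)}\,d\mathbf b=\exp\bigl\{\tfrac12\log\det\mathbf I_{n,\mathbf b_0}-n\delta\bigr\}\,O_P(1)\to0,
\]
using $\log\det\mathbf I_{n,\mathbf b_0}=o(n)$, a proper prior, and $\pi(\hat{\mathbf b}_n)\to\pi(\mathbf b_0)>0$.

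Combining the three regions and letting $R\to\infty$ shows that the integral tends to $(2\pi)^{m/2}$ in probability. Hence $z_n^{LA}/z_n\to1$ in $\mathbb P_0$-probability.
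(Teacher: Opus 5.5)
Your argument is correct and follows the same skeleton as the paper's proof:
\begin{itemize}
\item the substitution $\mathbf b=\hat{\mathbf b}_n+\mathbf I_{n,\mathbf b_0}^{-1/2}\mathbf h$;
\item recentering the LAN expansion at the mode by differencing the expansions at $\hat{\mathbf t}_n+\mathbf h$ and $\hat{\mathbf t}_n$ (Step~1);
\item uniform convergence on fixed compacts (Step~2);
\item the tail bound $e^{-n\delta}\det(\mathbf I_{n,\mathbf b_0})^{1/2}$ from separation and $\log\det\mathbf I_{n,\mathbf b_0}=o(n)$ (Step~3);
\item the determinant correction for $\mathbf K^{-1}$ (Step~4).
\end{itemize}

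The real difference is your treatment of the annulus $R\le\|\mathbf h\|\le M_n$, and there you are more careful than the paper. Its Step~3 combines ``the uniform convergence on $K_n$ from Step~2'' with the tail bound, where $K_n=\{\|\mathbf t\|\le M_n\}$. But Step~2 only gives uniform convergence on fixed compacts. Even sup-convergence on a growing ball would not by itself control the integral over that ball, since its volume grows. Assumptions (i)--(iv) say nothing about $\ell_n$ on that region, because $M_n$ is handed to you by (iii) and may grow faster than any radius on which LAN can be made uniform.

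Your chord argument closes this gap. It applies to the concave function $\mathbf h\mapsto\ell_n(\hat{\mathbf b}_n+\mathbf I_{n,\mathbf b_0}^{-1/2}\mathbf h)-\ell_n(\hat{\mathbf b}_n)$, which vanishes at the origin and lies below $-R^2/4$ on the sphere of radius $R$. The cost is that it uses log-concavity of the asymmetric Laplace likelihood. That is an extra structural input, not one of the listed assumptions, as you rightly note.

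It actually gives you more than you use. With a bounded prior such as the Gaussian one, $\pi(\hat{\mathbf b}_n+\mathbf I_{n,\mathbf b_0}^{-1/2}\mathbf h)/\pi(\hat{\mathbf b}_n)\le\sup\pi/\pi(\hat{\mathbf b}_n)=O_{\mathbb P_0}(1)$. The bound $e^{-R\|\mathbf h\|/4}$ then handles all of $\{\|\mathbf h\|\ge R\}$ at once, so you no longer need the separation condition or $\log\det\mathbf I_{n,\mathbf b_0}=o(n)$; only $\|\mathbf I_{n,\mathbf b_0}^{-1/2}\|_{\mathrm{op}}\to0$ is still used. The paper's formulation is meant to cover non-concave likelihoods as well. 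For that it would need an additional hypothesis, for instance a domination bound $\ell_n(\hat{\mathbf b}_n+\mathbf I_{n,\mathbf b_0}^{-1/2}\mathbf h)-\ell_n(\hat{\mathbf b}_n)\le-c\|\mathbf h\|^2$ on $R\le\|\mathbf h\|\le M_n$, holding with probability tending to one.

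Finally, Assumption~(iii) as typeset (``$\inf\ldots\le-\delta$'') is essentially vacuous. You read it as the uniform bound $\sup_{\|\mathbf I_{n,\mathbf b_0}^{1/2}(\mathbf b-\hat{\mathbf b}_n)\|\ge M_n}[\ell_n(\mathbf b)-\ell_n(\hat{\mathbf b}_n)]\le-n\delta$. The paper's own proof uses the same reading through the event $S_n$, and it is clearly the intended meaning.
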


\paragraph{\textbf{Discussion of assumptions.}}
Assumption~(i) is a standard LAN condition written on the natural information scale. It states that, after rescaling by $\mathbf I_{n,\mathbf b_0}^{-1/2}$, the log-likelihood ratio is asymptotically quadratic with unit curvature; in the regular i.i.d.\ case, such expansions follow from differentiability in quadratic mean \citep[Theorem~7.2]{van2000asymptotic}, which holds for the asymmetric Laplace likelihood. In regression settings, verifying the same expansion additionally requires standard regularity assumptions on the design and a central limit theorem for the normalized score $\Delta_n$ \citep{koenker2005quantile}. Assumption~(ii) is an asymptotic linearity condition for the posterior mode, ensuring that $\hat{\mathbf b}_n$ lies on the same local scale as the LAN expansion and allowing the quadratic approximation to be recentered at $\hat{\mathbf b}_n$. For the asymmetric Laplace log-likelihood, Assumption~(ii) holds under standard quantile-regression regularity conditions \citep{koenker2005quantile}, namely identifiability and suitable design regularity (e.g.\ full-rank conditions) together with the usual Bahadur-type asymptotic linearity of regression quantiles. Since the asymmetric Laplace criterion coincides with the pinball loss up to constants, the posterior mode has the same leading first-order expansion, and the prior contributes only a lower-order term under information growth. Assumption~(iii) is a separation condition ensuring that the contribution to the marginal likelihood from outside the local quadratic region is negligible. The condition $M_n\to\infty$ together with $M_n\|\mathbf I_{n,\mathbf b_0}^{-1/2}\|_{\mathrm{op}}\to 0$ means that the neighborhood expands in the local $\mathbf t$-coordinates while still shrinking in the original $\mathbf b$-coordinates. Our formulation also makes the required information growth explicit, which is useful in settings where the information need not accumulate simply as a scalar multiple of $n$. For the asymmetric Laplace log-likelihood, Assumption~(iii) holds provided that information accumulates in all identifiable random-effect directions, so that $
\|I_{n,b_0}^{-1/2}\|_{\mathrm{op}} \to 0$ and $\log \det(I_{n,b_0}) = o(n)$, and that the empirical pinball-loss objective is uniformly separated from its maximizer outside the local information-scaled neighborhood. The latter conditions are standard quantile-regression assumptions. For single-level grouped random effects, the information-growth requirement is typically guaranteed by $
\min_j n_j \to \infty$. For crossed random effects, the analogous requirement is that the design is identifiable and sufficiently replicated so that the smallest eigenvalue of \(Z^\top D Z\) diverges in all identifiable directions.  Assumption~(iv) is the usual prior regularity condition ensuring that the prior behaves smoothly and does not vanish near the truth. This clearly holds for Gaussian priors. These assumptions are discussed in more detail in Appendix~\ref{app:discussion_assumptions_misspec}. A proof of Theorem~\ref{thm:FL_consistency} is deferred to Appendix~\ref{appendix:bvm_theorem}.

While Theorem~\ref{thm:FL_consistency} shows that the Fisher-Laplace approximation is asymptotically valid under correct model specification, in practice the asymmetric Laplace likelihood serves only as a pseudo-likelihood. In the following section, we address this. However, estimating the scale parameter $\lambda$ of the asymmetric Laplace likelihood helps to mitigate the extent of misspecification. Figure \ref{fig:estimated_scale} in Appendix \ref {fisher_laplace_missp} illustrates this empirically: using an estimated $\hat\lambda$ yields a Fisher curvature that tracks the correct asymptotic curvature much more closely than fixing $\lambda=1$. 

\subsection{Laplace Approximation under Model Misspecification}
\label{sec:misspecified_laplace}

In general parametric modeling, one posits a family of distributions $\mathcal{P} = \{\mathbb{P}_{\mathbf{b}} : \mathbf{b} \in B\}$ indexed by a parameter $\mathbf{b} \in B \subseteq \mathbb  {R}^m$. The model is \emph{misspecified} if the true data-generating distribution $\mathbb{P}^*$ does not belong to $\mathcal{P}$. Under standard regularity conditions, the maximum likelihood estimator (MLE) $\hat{\mathbf{b}}_n := \argmax_{\mathbf{b} \in B} \log p(\mathbf{y}|\mathbf{b})$ converges to the \emph{pseudo-true parameter} defined as:
\[
 \mathbf{b}^* = \arg\min_{\mathbf{b} \in B} \text{KL}(\mathbb{P}^* \| \mathbb{P}_{\mathbf{b}})  = \arg\min_{\mathbf{b} \in B}  -\mathbb{E}_{\mathbb{P}^*}\frac{\log \text{d}\mathbb{P}_{\mathbf{b}}}{\log \text{d}\mathbb{P}^*}.
\]
For quantile regression, the asymmetric Laplace log-likelihood coincides (up to constants) with the pinball loss, and hence minimizing the KL divergence is equivalent to minimizing the expected pinball loss under the true data-generating process.
Consequently, even under misspecification, the pseudo-true parameter remains the true conditional $\tau$-quantile, since the pinball loss is a proper scoring rule for quantiles \citep{ehm2016quantiles}, assuming the model class is capable of representing the true quantile and the parameters are identifiable.

A Bayesian analogue of this perspective is provided by the \emph{generalized Bayes} framework of \citet{bissiri2016general}, in which a loss function replaces the log-likelihood and posterior beliefs are updated according to $\pi_n(\mathbf{b}  \mid y_{1:n}) \propto \pi(\mathbf{b} )\,\exp\!\Big(
    -\sum_{i=1}^n \ell(y_i, \mathbf{b} )
\Big),$
for any suitably chosen loss function $\ell(\cdot,\cdot)$. \cite{bissiri2016general} show that this
construction yields coherent belief updating. 
In such misspecified settings, we can analyze the model using tools from M-estimation theory. Under regularity conditions \citep{van2000asymptotic}, the asymptotic behavior of M-estimators can be studied by considering the population objective function $\mathbf{b}  \mapsto \mathbb{E}_{y \sim \mathbb{P}^*}[-\log p(y|\mathbf{b}, \psi )]$ rather than the sample objective $\mathbf{b}  \mapsto -\log p(\mathbf{y}|\mathbf{b}, \mathbf{\psi})$. This approach allows us to compute derivatives of the expected loss, thus overcoming the degenerate Hessian of the pinball loss, and yielding a meaningful notion of curvature: 
\begin{equation}
\mathbf{H}_{\mathbf{b}} :=  \nabla^2_{\mathbf b}\, \mathbb{E}_{y \sim \mathbb{P}^*}\!\left[ -\log p(y \mid \mathbf{b}, \mathbf{\psi}) \right].
\label{eq:H_b_general}
\end{equation}
For the asymmetric Laplace likelihood and the model in \eqref{me_model_def}, we have
\begin{equation}
\mathbf{H}_{n, \mathbf{b}} =  \nabla^2_{\mathbf b}\, \mathbb{E}_{\mathbf{y} \sim \mathbb{P}^*}\!\left[ -\log p(\mathbf{y} \mid \boldsymbol{\mu}, \lambda) \right] =\mathbf{Z}^T\text{diag}\left(\frac{ \boldsymbol{f}^*(\boldsymbol{\mu}) }{\lambda}\right)\mathbf{Z}
\label{eq:H_b_ald}
\end{equation}
where $\boldsymbol{f}^*(\boldsymbol{\mu})$ is the density of the data-generating $\mathbb{P}^*$ evaluated at $\boldsymbol{\mu} = F(\mathbf{X}) + \mathbf{Zb}$. When the model is correctly specified, $\mathbb{P}^* = \text{AL}(\tau, \lambda)$, $\mathbf{H}_{n, \mathbf{b}}$ equals the Fisher information. The quantity $\mathbf{H}_{n, \mathbf{b}}$ also provides an intuitive interpretation of why the curvature relates to an estimator's asymptotic variance: a larger density at the target parameter implies more available information and consequently smaller variance. In addition, this highlights why estimation of extreme quantiles is intrinsically challenging, as the density at extreme quantiles is typically small. 

Next, we present a result that makes the above arguments rigorous. Theorem~\ref{thm:misspec_laplace} provides a consistency result for a Laplace approximation to the marginal likelihood based on $\mathbf H_{n, \mathbf b^*}$ in \eqref{eq:H_b_ald} in the misspecified case.

\begin{theorem}[Consistency of the Laplace approximation under misspecification]\label{thm:misspec_laplace}
Let $\mathbb{P^*}$ denote the true data generating distribution, and let $\mathbf b^*$ denote the pseudo-true parameter minimizing the expected negative log-likelihood with respect to $\mathbb{P^*}$. Denote by 
$
\ell_n(\mathbf b) := \sum_{i=1}^n \log p(y_i \mid \mathbf b, \mathbf{\psi})
$
the log-likelihood, $z_n := \int_{\mathbb R^m} p(\mathbf y \mid \mathbf b, \mathbf{\psi})\,\pi(\mathbf b)\,d\mathbf b$ the marginal likelihood, and $\hat{\mathbf b}_n$ the posterior mode. Define the Laplace approximation
\begin{equation}
    z_n^{LA} :=  p(\mathbf y \mid \hat{\mathbf b}_n, \mathbf{\psi})\, \pi(\hat{\mathbf b}_n)\, \big(2\pi\big)^{m/2} \big|\det(\mathbf H_{n, \mathbf b^*} + \mathbf{K}^{-1})\big|^{-1/2},
\end{equation}
where $\mathbf H_{n, \mathbf b^*}$ is the expected Hessian defined in \eqref{eq:H_b_ald} evaluated at $\mathbf{b^*}$ and $\mathbf{K}^{-1}:= - \nabla_{\mathbf{b}}^2 \log \pi(\mathbf{b})\big|_{\mathbf b=\mathbf b^*}$. Assume that Assumptions (i) - (iv) of Theorem~\ref{thm:FL_consistency} hold with $\mathbf{H}_{n,\mathbf{b^*}}$ replacing $\mathbf I_{n,\mathbf b_0}$ throughout. Then, for every $\varepsilon>0$,
\[
\mathbb P^*\!\left(\left|\frac{z_n^{LA}}{z_n}-1\right|>\varepsilon\right)\longrightarrow 0
\qquad\text{as } n\to\infty.
\]
\end{theorem}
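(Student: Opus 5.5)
The plan is to reduce Theorem~\ref{thm:misspec_laplace} to the argument for Theorem~\ref{thm:FL_consistency}. The proof of that theorem uses only Assumptions (i)--(iv) and never uses correct specification or the identity between Fisher information and curvature. I will therefore rerun it with $\mathbf H_{n,\mathbf b^*}$ in place of $\mathbf I_{n,\mathbf b_0}$, $\mathbf b^*$ in place of $\mathbf b_0$, and $\mathbb P^*$ in place of $\mathbb P_0$, checking that each step is purely analytic or probabilistic under the stated assumptions. Concretely, write $\mathbf H_n:=\mathbf H_{n,\mathbf b^*}$ and $\Delta_n:=\mathbf H_n^{-1/2}\sum_i\nabla_{\mathbf b}\log p(y_i\mid\mathbf b^*,\psi)$. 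Here the "score" is understood as a measurable subgradient of the pinball loss, which has mean zero under $\mathbb P^*$ because $\mathbf b^*$ minimizes the expected loss.

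First, I change variables $\mathbf b=\hat{\mathbf b}_n+\mathbf H_n^{-1/2}\mathbf t$ in $z_n$. This gives
\[
\frac{z_n}{p(\mathbf y\mid\hat{\mathbf b}_n,\psi)\pi(\hat{\mathbf b}_n)}=\det(\mathbf H_n)^{-1/2}\int \exp\{\ell_n(\hat{\mathbf b}_n+\mathbf H_n^{-1/2}\mathbf t)-\ell_n(\hat{\mathbf b}_n)\}\frac{\pi(\hat{\mathbf b}_n+\mathbf H_n^{-1/2}\mathbf t)}{\pi(\hat{\mathbf b}_n)}d\mathbf t .
\]
Next, I recenter the LAN expansion (i) at the mode. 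Assumption (ii) gives $\mathbf h_n:=\mathbf H_n^{1/2}(\hat{\mathbf b}_n-\mathbf b^*)=\Delta_n+o_p(1)$, and $\mathbf h_n$ is tight. Applying (i) at $\mathbf h_n+\mathbf t$ and at $\mathbf h_n$ and subtracting gives
\[
\ell_n(\hat{\mathbf b}_n+\mathbf H_n^{-1/2}\mathbf t)-\ell_n(\hat{\mathbf b}_n)=\mathbf t^\top(\Delta_n-\mathbf h_n)-\tfrac12\|\mathbf t\|^2-\mathbf t^\top\mathbf h_n+\ldots .
\]
The cross terms cancel, leaving $-\tfrac12\|\mathbf t\|^2+o_p(1)$ uniformly on compacts. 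Since compacts can be taken to grow slowly, a diagonal argument extends this to $\|\mathbf t\|\le M_n'$ for some $M_n'\to\infty$; I will use $\min(M_n,M_n')$ as the radius. On that ball the prior ratio tends to $1$ uniformly by (iv), because $\|\mathbf H_n^{-1/2}\|_{op}M_n\to0$ and $\hat{\mathbf b}_n\to\mathbf b^*$. By dominated convergence on growing balls, the inner integral is therefore $(2\pi)^{m/2}(1+o_p(1))$. The remaining step replaces $\det(\mathbf H_n)$ by $\det(\mathbf H_n+\mathbf K^{-1})$: the ratio of these determinants is $\det(\mathbf I+\mathbf H_n^{-1/2}\mathbf K^{-1}\mathbf H_n^{-1/2})\to1$, because $\|\mathbf H_n^{-1/2}\|_{op}\to0$.

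The main obstacle is the tail region $\|\mathbf t\|\ge M_n$, together with the intermediate annulus between the radius where the local expansion is controlled and $M_n$. On the far tail, (iii) gives $\ell_n(\mathbf b)-\ell_n(\hat{\mathbf b}_n)\le -n\delta$ with probability tending to one. The tail integral is then at most $e^{-n\delta}\int\pi(\mathbf b)\,d\mathbf b/\pi(\hat{\mathbf b}_n)$ in the original coordinates. Rescaling by the factor $\det(\mathbf H_n)^{1/2}=e^{o(n)}$ coming from the growth condition, this is $e^{-n\delta+o(n)}\to0$; this uses that the prior is proper and that $\pi(\hat{\mathbf b}_n)$ is bounded away from zero by (iv).

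If the radius where LAN is controlled is strictly smaller than $M_n$, I will close the gap using concavity of $\ell_n$, which holds for the pinball loss. A concave function whose value on the sphere $\|\mathbf t\|=r$ is at most $-r^2/4$ stays below $-r\|\mathbf t\|/4$ outside that sphere. This linear bound gives an integrable exponential envelope on the annulus, whose integral vanishes as $r\to\infty$. Combining the local, annulus and tail pieces yields $z_n^{LA}/z_n\to1$ in $\mathbb P^*$-probability.
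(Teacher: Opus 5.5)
Your proof is correct and shares the paper's skeleton. Both arguments pass to local coordinates $\mathbf t=\mathbf H_{n,\mathbf b^*}^{1/2}(\mathbf b-\hat{\mathbf b}_n)$ and recentre the LAN expansion at the mode via (ii) so the linear terms cancel. Both then take a Gaussian integral on a local ball where the prior ratio tends to one, and bound the far tail $\|\mathbf t\|\ge M_n$ by $e^{-n\delta}\det(\mathbf H_{n,\mathbf b^*})^{1/2}$ using (iii). Finally, both use the factorisation $\det(\mathbf H_{n,\mathbf b^*}+\mathbf K^{-1})=\det(\mathbf H_{n,\mathbf b^*})\det(\mathbf I+\mathbf H_{n,\mathbf b^*}^{-1/2}\mathbf K^{-1}\mathbf H_{n,\mathbf b^*}^{-1/2})$.

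Where you genuinely differ is the intermediate shell. The paper proves uniform convergence of the local kernel only on a fixed compact $K$. Its Step~3 then invokes ``uniform convergence on $K_n=\{\|\mathbf t\|\le M_n\}$'' and applies (iii) on $K_n^c$. The region $\{R\le\|\mathbf t\|\le M_n\}$ is therefore never controlled. Nor can (iii) be moved to a smaller radius, since shrinking the radius enlarges the set on which the drop is required.

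Your argument closes exactly this gap. You extend the LAN remainder diagonally to a slowly growing radius $r_n$. You then use the concavity envelope $\phi(\mathbf t)\le -r_n\|\mathbf t\|/4$ for $\|\mathbf t\|\ge r_n$, where $\phi(\mathbf t):=\ell_n(\hat{\mathbf b}_n+\mathbf H_{n,\mathbf b^*}^{-1/2}\mathbf t)-\ell_n(\hat{\mathbf b}_n)$; this follows from $\phi(\mathbf 0)=0$ and the chord inequality.

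The price is an ingredient outside (i)--(iv): concavity of $\mathbf b\mapsto\ell_n(\mathbf b)$. It holds here because $\rho_\tau$ is convex and $\boldsymbol\mu$ is affine in $\mathbf b$. So your proof is tied to log-concave likelihoods such as the asymmetric Laplace, whereas the paper's is nominally generic but incomplete on the shell.

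It also buys more than rigour. The envelope holds on all of $\{\|\mathbf t\|\ge r_n\}$, so with a bounded prior (e.g.\ Gaussian) it also handles the far tail. The separation part of (iii) then becomes superfluous, and only $\|\mathbf H_{n,\mathbf b^*}^{-1/2}\|_{\mathrm{op}}\to0$ is used. This matters because (iii) demands a drop of order $n$ in $\ell_n$ just outside a neighbourhood of $\hat{\mathbf b}_n$ that shrinks in $\mathbf b$-coordinates. The Lipschitz pinball loss cannot produce such a drop when the rows of $\mathbf Z$ are bounded.

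One minor slip: subtracting the two LAN expansions gives $\mathbf t^\top(\Delta_n-\mathbf h_n)-\tfrac12\|\mathbf t\|^2+o_p(1)$, without the extra $-\mathbf t^\top\mathbf h_n$ in your display. Your conclusion $-\tfrac12\|\mathbf t\|^2+o_p(1)$ is nonetheless correct.
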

 
Compared to Theorem~\ref{thm:FL_consistency}, the key difference is that the local curvature is now governed by the expected Hessian $\mathbf H_{n,\mathbf b^\ast}$ of the population criterion rather than by the Fisher information. The assumptions are the direct misspecified analogues of those in Theorem~\ref{thm:FL_consistency}. Assumption~(i) requires a local asymptotic quadratic expansion on the $\mathbf H_{n,\mathbf b^\ast}$-scale, and Assumption~(ii) requires asymptotic linearity of the posterior mode on the same scale. These two assumptions are standard quantile regression and M-estimation assumptions \citep{koenker2005quantile, Miss-BvM}. Assumption~(iii) guarantees separation and information growth with $\mathbf H_{n,\mathbf b^\ast}$, and Assumption~(iv) is the same prior regularity condition as before. The assumptions are discussed in detail after Theorem~\ref{thm:FL_consistency} and in Appendix~\ref{app:discussion_assumptions_misspec}. A proof is given in Appendix~\ref{appendix:bvm_theorem}.

Next, we show that the Laplace-approximated posterior under misspecification converges to the limiting generalized Bayes posterior.
\begin{corollary}[Bernstein-von Mises under misspecification]\label{cor:bvm_noprior}
Define the local parameter $\mathbf t := \mathbf H_{n,\mathbf b^*}^{1/2}(\mathbf b-\hat{\mathbf b}_n)$,
and let $q_n(\mathbf t)$ denote the posterior density of $\mathbf t$ induced by the posterior $\pi_n(\mathbf b\mid \mathbf y)$.
Under the assumptions of Theorem~\ref{thm:misspec_laplace}, it holds that
\[
\int_{\mathbb R^m}
\left|
q_n(\mathbf t)
-
\mathcal N\!\left(\mathbf t;\,0,\mathbf I_m\right)
\right|\,d\mathbf t
\;\xrightarrow{\mathbb P^*}\; 0.
\]
\end{corollary}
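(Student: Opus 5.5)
We prove the corollary by writing the posterior density of $\mathbf t$ explicitly and reusing the split from the proof of Theorem~\ref{thm:misspec_laplace}. Write $\mathbf H_n := \mathbf H_{n,\mathbf b^*}$ and $\mathbf b_n(\mathbf t) := \hat{\mathbf b}_n + \mathbf H_n^{-1/2}\mathbf t$. The change of variables gives
\[
q_n(\mathbf t) = \frac{p(\mathbf y\mid \mathbf b_n(\mathbf t),\psi)\,\pi(\mathbf b_n(\mathbf t))\,\det(\mathbf H_n)^{-1/2}}{z_n}.
\]
Define the unnormalized ratio
\[
g_n(\mathbf t) := \frac{p(\mathbf y\mid \mathbf b_n(\mathbf t),\psi)\,\pi(\mathbf b_n(\mathbf t))}{p(\mathbf y\mid \hat{\mathbf b}_n,\psi)\,\pi(\hat{\mathbf b}_n)}.
\]
Then
\[
q_n(\mathbf t) = g_n(\mathbf t)\,\frac{z_n^{LA}}{z_n}\,(2\pi)^{-m/2}\,\det(\mathbf H_n)^{-1/2}\det(\mathbf H_n+\mathbf K^{-1})^{1/2}.
\]
By Theorem~\ref{thm:misspec_laplace}, $z_n^{LA}/z_n \to 1$ in $\mathbb P^*$-probability. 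The determinant ratio equals $\det(\mathbf I_m + \mathbf H_n^{-1/2}\mathbf K^{-1}\mathbf H_n^{-1/2})^{1/2}$. This tends to $1$ because $\|\mathbf H_n^{-1/2}\|_{\mathrm{op}}\to0$ by Assumption~(iii) and $\mathbf K^{-1}$ is fixed. It therefore suffices to show
\[
\int \bigl|\, g_n(\mathbf t) - e^{-\|\mathbf t\|^2/2} \,\bigr|\, d\mathbf t \xrightarrow{\mathbb P^*} 0,
\]
since the $L^1$ error of $q_n$ is then bounded by this quantity times $1+o_{\mathbb P^*}(1)$, plus $o_{\mathbb P^*}(1)\int e^{-\|\mathbf t\|^2/2}\,d\mathbf t$.

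\textbf{Step 1: local region.} On $\{\|\mathbf t\|\le M\}$ for fixed $M$, recenter the LAN expansion of Assumption~(i) at $\hat{\mathbf b}_n$. Write $\mathbf b_n(\mathbf t) = \mathbf b^* + \mathbf H_n^{-1/2}(\mathbf t + \hat{\mathbf s}_n)$ with $\hat{\mathbf s}_n := \mathbf H_n^{1/2}(\hat{\mathbf b}_n-\mathbf b^*)$. By Assumption~(ii), $\hat{\mathbf s}_n = \Delta_n + o_{\mathbb P^*}(1)$, and $\hat{\mathbf s}_n$ is tight, so $\mathbf t+\hat{\mathbf s}_n$ lies in a compact set with high probability. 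Subtracting the two LAN expansions gives
\[
\log\frac{p(\mathbf y\mid\mathbf b_n(\mathbf t),\psi)}{p(\mathbf y\mid \hat{\mathbf b}_n,\psi)} = (\mathbf t+\hat{\mathbf s}_n)^\top\Delta_n - \tfrac12\|\mathbf t+\hat{\mathbf s}_n\|^2 - \hat{\mathbf s}_n^\top\Delta_n + \tfrac12\|\hat{\mathbf s}_n\|^2 + o(1).
\]
The right-hand side simplifies to $-\tfrac12\|\mathbf t\|^2 + \mathbf t^\top(\Delta_n-\hat{\mathbf s}_n) + o(1) = -\tfrac12\|\mathbf t\|^2 + o_{\mathbb P^*}(1)$, uniformly in $\|\mathbf t\|\le M$. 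The prior ratio tends to $1$ uniformly there by Assumption~(iv), because $\mathbf b_n(\mathbf t)-\hat{\mathbf b}_n\to0$ and $\hat{\mathbf b}_n\to\mathbf b^*$. Hence $\sup_{\|\mathbf t\|\le M}|g_n(\mathbf t) - e^{-\|\mathbf t\|^2/2}| \to 0$ in probability, and the integral over this ball vanishes.

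\textbf{Step 2: tails.} A standard diagonal argument lets the radius grow slowly: there is $M_n'\to\infty$, which we may take with $M_n'\le M_n$, such that Step 1 holds on $\|\mathbf t\|\le M_n'$. The Gaussian tail $\int_{\|\mathbf t\|>M_n'} e^{-\|\mathbf t\|^2/2}\,d\mathbf t$ vanishes. For the remaining mass of $g_n$, use total mass instead of pointwise bounds:
\[
\int g_n\,d\mathbf t = \frac{z_n}{z_n^{LA}}\,(2\pi)^{m/2}\,\det(\mathbf I_m + \mathbf H_n^{-1/2}\mathbf K^{-1}\mathbf H_n^{-1/2})^{-1/2} \to (2\pi)^{m/2},
\]
while $\int_{\|\mathbf t\|\le M_n'} g_n\,d\mathbf t \to (2\pi)^{m/2}$ by Step 1. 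So $\int_{\|\mathbf t\|>M_n'} g_n\,d\mathbf t\to0$ in probability. This Scheffé-type step is the main obstacle, and it is avoided by reusing Theorem~\ref{thm:misspec_laplace} rather than redoing the separation argument of Assumption~(iii). If the diagonal extension of Step 1 is delicate, the fallback is to fix $M$, let $n\to\infty$, then let $M\to\infty$: the tail mass of $g_n$ is at most $\int g_n - \int_{\|\mathbf t\|\le M} g_n \to (2\pi)^{m/2} - \int_{\|\mathbf t\|\le M} e^{-\|\mathbf t\|^2/2}\,d\mathbf t$, which is arbitrarily small for large $M$. Combining the local and tail bounds yields the stated $L^1$ convergence.
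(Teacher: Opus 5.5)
Your proof is correct. It shares the paper's skeleton: change of variables to $\mathbf t$, an unnormalised kernel $g_n$, the recentred LAN expansion giving uniform convergence on fixed compacts, and then normalisation. Where you genuinely depart from the paper is in how you control the tails.

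\textbf{The paper's route.} It keeps $\pi(\hat{\mathbf b}_n+\mathbf H_{n,\mathbf b^*}^{-1/2}\mathbf t)$ undivided in $g_n$. It then imports the claim $\int|g_n-g_0^{(n)}|\,d\mathbf t\to0$ from Steps~2--3 of the proof of Theorem~\ref{thm:misspec_laplace}. That claim rests on two ingredients: uniform convergence on a fixed compact, and the bound $e^{-n\delta}|\det\mathbf H_{n,\mathbf b^*}|^{1/2}$ on $\{\|\mathbf t\|\ge M_n\}$ coming from Assumption~(iii). Only afterwards does the paper show $a_n\to a$ and pass to normalised densities by the triangle inequality.

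\textbf{Your route.} You never touch Assumption~(iii) directly. You use only the \emph{statement} $z_n^{LA}/z_n\to1$ of the theorem, which via $\int q_n=1$ becomes $\int g_n\to(2\pi)^{m/2}$. A Scheff\'e-type mass-conservation argument for nonnegative functions then upgrades local uniform convergence to $L^1$ convergence.

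\textbf{What each buys.} Your version makes the corollary a clean consequence of the theorem's statement rather than its internals. It also never requires uniform control of $g_n$ on the whole growing ball $\{\|\mathbf t\|\le M_n\}$. The paper's route implicitly needs the quadratic approximation on the shell between a fixed compact and radius $M_n$. Step~2 of the theorem's proof, which is shown only for fixed compact $K$, does not literally supply this. Your fixed-$M$ fallback, or your diagonal choice of $M_n'$, covers that region automatically. The price is that your argument inherits the theorem and could not be used to prove it, since there the convergence of the total mass is exactly what is being established. The paper's route, by contrast, gives an explicit exponential tail bound directly from the likelihood.

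Two minor points. The restriction $M_n'\le M_n$ plays no role in your argument. The remainders in Step~1 should be written as $o_{\mathbb P^*}(1)$ uniformly over the compact, not $o(1)$.
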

A proof for Corollary~\ref{cor:bvm_noprior} can be found in Appendix~\ref{app_proof_BVM}. The key insight from Theorem~\ref{thm:misspec_laplace} and Corollary~\ref{cor:bvm_noprior} is that, also under model misspecification, the log-likelihood exhibits concentration and a quadratic behavior in a local neighborhood of the parameter of interest. Therefore, a Laplace approximation performed with the expected curvature can capture the dominant behavior of the asymptotic posterior.

\subsection{Triangular Kernel Curvature Estimator}\label{sec:cc}
In the following, we propose a way to estimate the population curvature from data. Our triangular kernel curvature (TKC) estimator defined below approximates $\mathbf{H}_{\mathbf{b}^*}=\mathbf{Z}^T\text{diag}\left(\boldsymbol{f}^*(\boldsymbol{\mu^*})/\lambda\right)\mathbf{Z}$
by estimating the scalar density term $\boldsymbol{f}^*(\boldsymbol{\mu^*})/\lambda$ from local log-likelihood differences around the mode. Specifically, we first compute the decrease in log-likelihood (DLL) when moving $\pm\Delta \mu$, $\Delta \mu > 0$, away from $\hat{\boldsymbol{\mu}} =  F(\mathbf{X}) + \mathbf{Z\hat b}$, where $\hat{\mathbf{b}}$ is the mode:
$$
    DLL^U_{\Delta \mu} (\hat{\boldsymbol{\mu}}) = \text{log} p(\mathbf{y}|\hat{\boldsymbol{\mu}}) - \text{log} p(\mathbf{y}|\hat{\boldsymbol{\mu}} + \Delta \mu), ~~
    DLL^L_{\Delta \mu} (\hat{\boldsymbol{\mu}}) = \text{log} p(\mathbf{y}|\hat{\boldsymbol{\mu}}) - \text{log} p(\mathbf{y}|\hat{\boldsymbol{\mu}} - \Delta \mu).
$$
Then, we use a centered second difference approach to obtain the approximation
\begin{equation}\label{eq:cc}
   \hat{C}_{\Delta \mu}(\hat{\boldsymbol{\mu}}) = \frac{DLL^U_{\Delta \mu} (\hat{\boldsymbol{\mu}}) + DLL^L_{\Delta \mu} (\hat{\boldsymbol{\mu}})}{n\Delta \mu^2}\in \mathbb{R}
\end{equation}
and
\begin{equation}\label{cc_approx}
    \hat{\mathbf{H}} = \mathbf{Z}^T\text{diag}\left(\hat{C}_{\Delta \mu}(\hat{\boldsymbol{\mu}})\right)\mathbf{Z}\in \mathbb{R}^{n\times n}.
\end{equation}

In practice, we choose $\Delta_\mu$ to balance locality and stability. If $\Delta_\mu$ is too small, the piecewise-linear structure of the asymmetric Laplace log-likelihood dominates; if it is too large, the log-likelihood is no longer well approximated by a quadratic function since it becomes increasingly asymmetric. We therefore search over candidate values for $\Delta_\mu$ exceeding a minimum likelihood-drop threshold and select the one that gives the best quadratic fit, measured by the R-squared between the true log-likelihood and its quadratic approximation based on $\hat{C}_{\Delta \mu}(\hat{\boldsymbol{\mu}})$ at values $\hat{\boldsymbol{\mu}} \pm 0.5 \Delta \mu$ and $\hat{\boldsymbol{\mu}} \pm \Delta \mu$. A minimum likelihood-drop threshold implies smaller values of $\Delta \mu$ as the sample size $n$ grows, since the posterior mass becomes concentrated in an increasingly smaller neighborhood around the mode for larger sample sizes. This approach thus yields a local approximation window that adapts to the increasing concentration of the likelihood, and the curvature estimator remains in a regime where the local quadratic approximation is accurate as $n$ increases.  We study the sensitivity of the TKC estimator to the choice of the minimum likelihood-drop threshold in Appendix~\ref{app:sensitivity_tkc}. We find that quantile prediction and hyperparameter estimation accuracy are insensitive to this tuning parameter and remain highly stable across threshold magnitudes ranging from $10^{-2}$ to $10^2$.

Figure~\ref{fig:tkc_curvature} compares our proposed approximations to the log-likelihood for a single-level random effects model for a correctly specified and a misspecified likelihood. We consider a quantile level of $\tau=0.8$, $n=200$ observations, and one Gaussian random effect $b_0\sim\mathcal N(0,1)$. In the correctly specified setting (left), the observations are generated from the asymmetric Laplace model (with scale $\lambda=0.1$).
In the misspecified setting (right), the observations are generated by adding Gaussian noise with standard deviation $0.1$ (shifted so that $b_0$ corresponds to the $\tau$-quantile).
Under misspecification, the Fisher-Laplace curvature (with an estimated scale parameter) deviates from the shape of the log-likelihood, whereas the TKC approximates it accurately in a neighborhood of the mode.
\begin{figure}[ht!]
    \centering
    \includegraphics[width=1\linewidth]{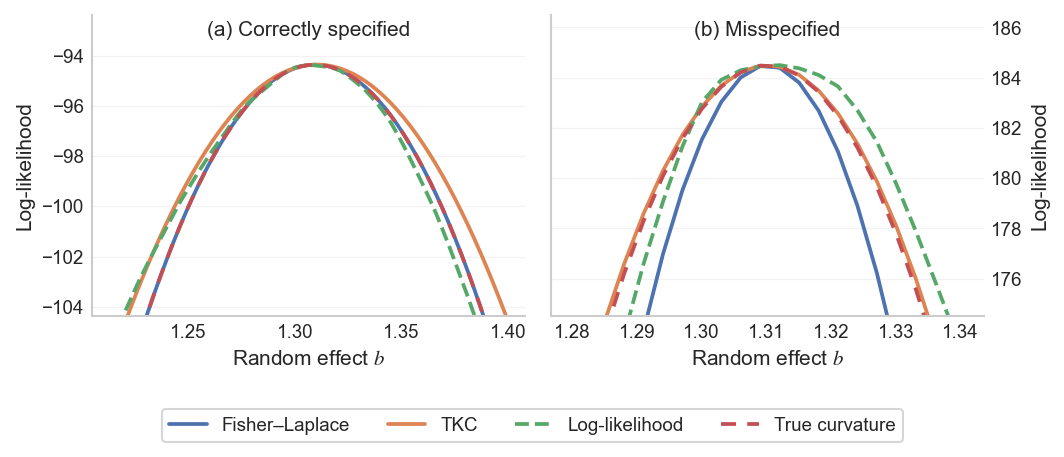}
    \caption{ Comparison of TKC and Fisher-Laplace approximations to the log-likelihood for a single-level random effects model for a correctly specified likelihood (left) and a misspecified likelihood (right). The curve labeled ``true curvature'' corresponds to the asymptotic curvature at the true $b_0$.}
    \label{fig:tkc_curvature}
\end{figure}

\subsubsection{Connection to Kernel Density Estimation}

The triangular kernel curvature estimator~\eqref{eq:cc} can be interpreted as a nonparametric kernel density estimator of the residuals $\mathbf y - \hat{\boldsymbol{\mu}}$ evaluated at zero. A direct calculation (see Appendix~\ref{app:proof_cc}) shows that the TKC estimator equals a kernel density estimator with triangular kernel and bandwidth $h=\Delta \mu$:
\begin{equation}
\label{eq:cc_triangular_kernel}
\hat C_{h}(\hat{\boldsymbol{\mu}}) = \frac{1}{n} \sum_{i=1}^n h^{-1} \left(1-\frac{|( y_i-\hat{{\mu}}_i|}{h}\right) \mathbbm{1}\{|( y_i-\hat{{\mu}}_i|<h\}.
\end{equation}
This is not coincidental, as the asymptotic curvature is equal to the data generating density evaluated at the quantile. This connection allows us to derive asymptotic rates for the bandwidth $\Delta \mu$ for consistent curvature estimation. For this, we impose the following homoscedasticity assumption at the target quantile level.
\begin{assumption}[Constant curvature (CC) at the $\tau$-quantile]
\hypertarget{ass:cc}{}
For $\boldsymbol{\mu}=F(\mathbf X)+\mathbf Z\mathbf b$, the conditional density of $Y_i$ under $\mathbb P^*$ evaluated at $\mu_i$ is constant across observations: $f_1^*(\mu_1)=\cdots=f_n^*(\mu_n)=:c^*(\boldsymbol{\mu})>0.$
\end{assumption}
This condition is satisfied, for example, in the standard setup $y_i=\mu_i+\varepsilon_i$ with $Q_\tau(\varepsilon_i\mid \mu_i)=0$, whenever there is a common conditional error density at zero, $f_{\varepsilon_i\mid \mu_i}(0)=c$ for all $i$. Under this assumption, the diagonal matrix in \eqref{eq:H_b_ald} reduces to a scalar multiple of the identity, and the population curvature simplifies to
\[
\mathbf H_{n, \mathbf b}
=
\frac{c^*(\boldsymbol{\mu})}{\lambda}\,\mathbf Z^\top \mathbf Z.
\]
Note that Theorem~\ref{thm:misspec_laplace} does not require the constant-curvature assumption. Assumption \hyperlink{ass:cc}{(CC)} is introduced only for the following consistency result of the TKC estimator and the plug-in consistency result in Propositions~\ref{prop:cc_consistency} and~\ref{prop:laplace_plugin_cc}, respectively. Empirically, we find the method remains robust even when this assumption is violated (e.g., under the heteroscedastic noise scenarios evaluated in Section~\ref{sec:simulated_exp}). This robustness may partly be explained by the fact that the scalar TKC estimator captures an average local curvature across observations. Although the estimator does not recover observation-specific curvature under heteroscedasticity, this scalar approximation may still be effective for the determinant term in the Laplace marginal likelihood when the variation in individual curvatures is moderate.

\begin{proposition}[Consistency of the triangular kernel curvature estimator]
\label{prop:cc_consistency}
Assume the true data density $\boldsymbol{f}^*(\boldsymbol{\mu}^*)$ at the true quantile $\boldsymbol{\mu}^*$ is positive and continuous, and that the constant curvature assumption~\hyperlink{ass:cc}{(CC)} holds. If the bandwidth $\Delta \mu$ satisfies $n\Delta \mu \to \infty$ and $\Delta \mu \to 0$ as $n \to \infty$, then the triangular kernel curvature estimator $\hat{C}_{\Delta \mu}(\boldsymbol{\mu}^*)$ at $\boldsymbol{\mu}^*$ (\ref{eq:cc_triangular_kernel}) satisfies:
\begin{equation}
\hat{C}_{\Delta \mu}(\boldsymbol{\mu}^*) \xrightarrow{\mathbb{P}^*} c^*( \boldsymbol{\mu}^*)/\lambda
\end{equation}
\end{proposition}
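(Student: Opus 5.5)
The plan is to treat $\hat C_{\Delta\mu}(\boldsymbol{\mu}^*)$ as a kernel density estimator of the residuals $r_i := y_i-\mu_i^*$ at zero. I would then show that its mean tends to $c^*(\boldsymbol{\mu}^*)/\lambda$ and that its variance vanishes, and conclude with Chebyshev's inequality.

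\textbf{Step 1: reduction to a triangular kernel.} With $h=\Delta\mu$ and the asymmetric Laplace log-likelihood $\log p(y_i\mid\mu_i)=\mathrm{const}-\rho_\tau(y_i-\mu_i)/\lambda$, the summand of $DLL^U_h+DLL^L_h$ for observation $i$ equals
\[
\lambda^{-1}\big[\rho_\tau(r_i-h)+\rho_\tau(r_i+h)-2\rho_\tau(r_i)\big].
\]
The linear parts of $\rho_\tau$ cancel in this second difference, leaving the second difference of $|\cdot|/2$, namely $(h-|r_i|)_+$. Hence
\[
\hat C_h(\boldsymbol{\mu}^*)=\frac{1}{\lambda}\cdot\frac1n\sum_{i=1}^n K_h(r_i),\qquad K_h(r)=h^{-1}\big(1-|r|/h\big)_+ .
\]
This is \eqref{eq:cc_triangular_kernel}, with the factor $1/\lambda$ that the log-likelihood differences carry. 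It therefore suffices to show $\frac1n\sum_i K_h(r_i)\to c^*(\boldsymbol{\mu}^*)$ in $\mathbb P^*$-probability.

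\textbf{Step 2: bias.} I work conditionally on the design (and on the realized random effects, so that $\boldsymbol{\mu}^*$ is fixed). The residuals $r_i$ are then independent, with densities $g_i(r)=f_i^*(\mu_i^*+r)$. By (CC), $g_i(0)=c^*(\boldsymbol{\mu}^*)$ for all $i$. Since $\int K_h=1$ and $K_h$ is supported on $[-h,h]$,
\[
\Big|\mathbb E\,K_h(r_i)-c^*\Big|\le \sup_{|u|\le h}|g_i(u)-g_i(0)|.
\]
This tends to $0$ as $h\to0$ by continuity. Averaging over $i$, the bias term vanishes provided the continuity is uniform in $i$. That is automatic in the standard setup $y_i=\mu_i+\varepsilon_i$ with a common error density, and is otherwise the natural reading of the continuity hypothesis.

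\textbf{Step 3: variance.} By independence, $\operatorname{Var}\big(\frac1n\sum_i K_h(r_i)\big)=n^{-2}\sum_i\operatorname{Var}K_h(r_i)$. Each term satisfies
\[
\operatorname{Var}K_h(r_i)\le \mathbb E K_h(r_i)^2\le h^{-1}\sup_{|u|\le h}g_i(u)\int K^2 ,
\]
where $\int K^2=2/3$. The densities are locally bounded near $0$ uniformly in $i$ (again by continuity at a common value $c^*$). Hence the variance is $O\big(1/(nh)\big)$, which tends to $0$ because $n\Delta\mu\to\infty$. Chebyshev's inequality, together with Step 2, then gives convergence in probability conditionally on the design. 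Since the conditional probability bound is deterministic, the same holds unconditionally.

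\textbf{Main obstacle.} The only delicate point is the uniformity across observations: the residuals are independent but not identically distributed, so pointwise continuity of each $f_i^*$ is not enough on its own. I would make explicit that the conditional error densities are equicontinuous and uniformly bounded in a fixed neighbourhood of $0$, which (CC) with a common error law provides. Under that condition the standard kernel density estimation bias–variance argument goes through verbatim.
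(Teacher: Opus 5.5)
Your proposal is correct and follows the paper's proof essentially step for step: the second-difference identity reduces $\hat C_h(\boldsymbol\mu^*)$ to $\lambda^{-1}$ times a triangular-kernel density estimate of the residuals at zero, continuity and (CC) control the bias, a variance bound of order $1/(nh)$ handles the stochastic term, and Chebyshev's inequality concludes. Your explicit requirement that the densities $f_i^*$ be equicontinuous and uniformly bounded near $\mu_i^*$, and your conditioning on the random effects to obtain independence, make precise a uniformity-in-$i$ point that the paper's proof uses only implicitly when it averages the per-observation limits.
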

The proof follows from connecting our triangular kernel curvature approach to a triangular kernel density estimator in combination with standard arguments for consistency of kernel density estimators (see Appendix~\ref{app:proof_cc}).
The analysis presented here assumes that the true conditional quantile is known when constructing the estimator. In practice, both the density and the quantile must be estimated simultaneously. This plug-in setting requires slower bandwidth decay rates than in the oracle case where $\boldsymbol{\mu}^*$ is known \citep{POWELL1984303, Kato2012}. To derive admissible bandwidth rates in the plug-in setting, one would need model-specific convergence rates for the quantile estimator $\hat{\boldsymbol{\mu}}$, which is beyond the scope of this work.

The following result shows that the TKC Laplace approximation is consistent when replacing $H_{n, \mathbf b^*}$ in Theorem~\ref{thm:misspec_laplace} with a plug-in curvature matrix.
\begin{proposition}[Laplace-approximated marginal log-likelihood consistency with plug-in curvature]
\label{prop:laplace_plugin_cc}
Assume condition~\hyperlink{ass:cc}{(CC)} holds.
Let $z_n$ be the marginal likelihood, and let $z_n^{LA}$ be the Laplace approximation of Theorem~\ref{thm:misspec_laplace} based on the population curvature
$
\mathbf H_{n,\mathbf b^*}=\frac{c^*(\boldsymbol{\mu}^*)}{\lambda}\,\mathbf Z^\top\mathbf Z.
$
Let $\hat c_n(\hat{\boldsymbol{\mu}}_n)$ be a curvature estimator satisfying
$
\hat c_n(\hat{\boldsymbol{\mu}}_n)\xrightarrow{\mathbb P^*} c^*(\boldsymbol{\mu}^*)
$ and $
\mathbb P^*\!\big(\hat c_n(\hat{\boldsymbol{\mu}}_n)>0\big)\to 1.
$
Define the plug-in curvature matrix
$
\hat{\mathbf H}_{n,\hat{\mathbf b}}:=\frac{\hat c_n(\hat{\boldsymbol{\mu}}_n)}{\lambda}\,\mathbf Z^\top\mathbf Z
$
and let $$
   \hat z_n^{LA} :=  p(\mathbf y \mid \hat{\mathbf b}_n, \mathbf{\psi})\, \pi(\hat{\mathbf b}_n)\, \big(2\pi\big)^{m/2} \big|\det(\hat{\mathbf H}_{n,\hat{\mathbf b}}+ \mathbf{K}^{-1})\big|^{-1/2},
   $$
denote the corresponding plug-in Laplace approximation.
If $z_n^{LA}/z_n\xrightarrow{\mathbb P^*}1,$
then also
$
\frac{\hat z_n^{LA}}{z_n}\xrightarrow{\mathbb P^*}1.
$
\end{proposition}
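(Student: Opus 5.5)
The plan is to write $\hat z_n^{LA}/z_n = (\hat z_n^{LA}/z_n^{LA})\cdot(z_n^{LA}/z_n)$. The second factor tends to $1$ in $\mathbb P^*$-probability by hypothesis, so it suffices to show $\hat z_n^{LA}/z_n^{LA}\xrightarrow{\mathbb P^*}1$ and then apply the continuous mapping theorem to the product. The two approximations share the factor $p(\mathbf y\mid\hat{\mathbf b}_n,\psi)\,\pi(\hat{\mathbf b}_n)(2\pi)^{m/2}$, so on the event $\{\hat c_n>0\}$, whose probability tends to one,
\[
\frac{\hat z_n^{LA}}{z_n^{LA}}=\left(\frac{\det(c^*\mathbf A_n+\mathbf K^{-1})}{\det(\hat c_n\mathbf A_n+\mathbf K^{-1})}\right)^{1/2},\qquad \mathbf A_n:=\lambda^{-1}\mathbf Z^\top\mathbf Z,
\]
with $c^*:=c^*(\boldsymbol\mu^*)$ and $\hat c_n:=\hat c_n(\hat{\boldsymbol\mu}_n)$. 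Here $\mathbf K^{-1}$ is the same deterministic matrix in both expressions. If the plug-in version evaluated $\mathbf K^{-1}$ at $\hat{\mathbf b}_n$, one would add a continuity step using Assumption (iv) together with $\hat{\mathbf b}_n\to\mathbf b^*$, which follows from Assumption (ii).

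The key step is a determinant ratio bound that is uniform in $n$, even though $\mathbf A_n$ grows. Write $r_n:=\hat c_n/c^*\xrightarrow{\mathbb P^*}1$, and set $\mathbf M_n:=c^*\mathbf A_n$ and $\mathbf P:=\mathbf K^{-1}$, both positive semidefinite. Since $\mathbf H_{n,\mathbf b^*}=\mathbf M_n$ must be positive definite by Assumption (i), $\mathbf P$ is positive definite for a Gaussian prior. The claim is that for $r>0$,
\[
\min(r,1)\,(\mathbf M_n+\mathbf P)\preceq r\mathbf M_n+\mathbf P\preceq \max(r,1)\,(\mathbf M_n+\mathbf P).
\]
This holds because $r\mathbf M_n+\mathbf P-\min(r,1)(\mathbf M_n+\mathbf P)=(r-\min(r,1))\mathbf M_n+(1-\min(r,1))\mathbf P\succeq 0$, and the upper bound follows in the same way. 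Monotonicity of the determinant on the positive semidefinite cone then gives
\[
\min(r_n,1)^m\le \frac{\det(\hat c_n\mathbf A_n+\mathbf P)}{\det(c^*\mathbf A_n+\mathbf P)}\le \max(r_n,1)^m .
\]
Both bounds tend to $1$ in probability, so the ratio, and hence its $-1/2$ power, converges to $1$.

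Combining the two factors via Slutsky's theorem, and using that the event $\{\hat c_n>0\}$ has probability tending to one, gives $\hat z_n^{LA}/z_n\xrightarrow{\mathbb P^*}1$. The main obstacle is making the comparison uniform in $n$: naive arguments based on continuity of $\det$ fail because the entries of $\mathbf Z^\top\mathbf Z$ diverge. The sandwich inequality above avoids this because it uses only the scalar ratio $r_n$ and the fixed dimension $m$. This is also exactly where Assumption (CC) is used, since it makes the plug-in curvature a scalar multiple of the population curvature.
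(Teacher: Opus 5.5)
Your proof is correct. It shares the paper's outer structure: the factorization $\hat z_n^{LA}/z_n=(\hat z_n^{LA}/z_n^{LA})(z_n^{LA}/z_n)$, cancellation of everything except the determinant factor, and Slutsky. The key step is handled differently.

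The paper sets $\varphi(x)=\log\det(\mathbf K^{-1}+x\,\mathbf Z^\top\mathbf Z)$ and invokes continuity of $\varphi$ together with the continuous mapping theorem for $\hat c_n/\lambda\xrightarrow{\mathbb P^*}c^*/\lambda$. As written, this treats $\mathbf Z^\top\mathbf Z$ as fixed. In fact it depends on $n$, and under Assumption (iii) its smallest eigenvalue must diverge. So $\varphi=\varphi_n$ changes with $n$, and the continuous mapping theorem does not apply without some uniformity in $n$, which the paper does not supply.

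Your Loewner sandwich supplies exactly that uniformity. Taking logarithms, it gives $|\varphi_n(\hat c_n/\lambda)-\varphi_n(c^*/\lambda)|\le m\,|\log(\hat c_n/c^*)|$ for every $n$. It uses only $\mathbf K^{-1}\succeq 0$, $\mathbf Z^\top\mathbf Z\succeq 0$ and the fixed dimension $m$. (The same conclusion also follows from $0\le\varphi_n'(x)=\mathrm{tr}\big((\mathbf K^{-1}+x\mathbf Z^\top\mathbf Z)^{-1}\mathbf Z^\top\mathbf Z\big)\le m/x$.)

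The paper's version is shorter. Yours buys three things:
\begin{enumerate}
\item It closes the uniformity gap.
\item It yields the explicit bound $|\log(\hat z_n^{LA}/z_n^{LA})|\le\tfrac m2|\log(\hat c_n/c^*)|$, which shows the real requirement is $m\,|\log(\hat c_n/c^*)|\to 0$, automatic for fixed $m$.
\item It extends beyond (CC): it works whenever $\underline r_n\mathbf H_{n,\mathbf b^*}\preceq\hat{\mathbf H}_{n,\hat{\mathbf b}}\preceq\bar r_n\mathbf H_{n,\mathbf b^*}$ with $\underline r_n,\bar r_n\xrightarrow{\mathbb P^*}1$. For example, this covers observation-specific plug-ins $\mathbf Z^\top\mathrm{diag}(\hat d_i)\mathbf Z$ with $\max_i|\hat d_i/d_i-1|\xrightarrow{\mathbb P^*}0$.
\end{enumerate}
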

The proof, deferred to Appendix~\ref{appendix:proof_plugin}, is based on the fact that the ratio between the plug-in and population Laplace approximations differs only through the determinant term of the curvature matrix. Under the \hyperlink{ass:cc}{(CC)} assumption, $\mathbf H_{n, \mathbf b^*}$ and $\hat{\mathbf H}_{n, \mathbf b^*}$ are scalar multiples of $\mathbf Z^\top\mathbf Z$, so this determinant difference depends only on $\hat c_n(\hat{\boldsymbol{\mu}}_n)$ and $c^*(\boldsymbol{\mu}^*)$. Consistency follows by a continuity and Slutsky argument.

While the connection between triangular kernel curvature and kernel density estimation provides theoretical insight into achieving consistency, in practice, asymptotic arguments alone are insufficient to determine an exact value of $\Delta \mu$ for finite samples. We therefore retain the theoretical intuition that $\Delta \mu$ should vanish asymptotically, but supplement this with a local search procedure described in Section~\ref{sec:cc} based on identifying well-conditioned quadratic approximations. This hybrid approach balances theoretical guidance with practical performance.

\subsubsection{Comparison to Smoothing Approaches}\label{sec:comparison_smoothing}

Several authors have proposed smoothing the asymmetric Laplace likelihood to facilitate optimization and sampling \citep{he2023smoothed, liu2025bayesian}. These approaches replace the asymmetric Laplace likelihood with a kernel-smoothed version to create curvature. Our approach leaves the likelihood unchanged and instead estimates the curvature of the expected loss, which is the object that governs the asymptotic quadratic approximation. As the smoothing bandwidth tends to zero, the Hessian of the smoothed likelihood converges to essentially the same density-based object, but our approach avoids smoothing bias in the posterior mode. An additional advantage of our framework is compatibility with higher-order kernel density estimators, which can improve convergence rates of the curvature estimator. In contrast, applying higher-order kernels with negative parts to smooth the likelihood itself would destroy log-concavity, rendering optimization non-convex and sampling intractable. Since we estimate the curvature of the {expected} loss rather than smoothing the loss itself, we would maintain the log-concave structure.

\subsection{Posterior Calibration under Misspecification}\label{sec:calibration}
Our Laplace approximation targets the generalized Bayes posterior induced by the asymmetric Laplace loss. Under model misspecification, this posterior need not be \emph{frequentist-calibrated}. In particular, the generalized posterior concentrates around the mode $\hat{\mathbf{b}}_n$ with asymptotic covariance $\mathbf{H}_{n, \mathbf{b}^*}^{-1}$ (see Corollary~\ref{cor:bvm_noprior}) rather than the frequentist asymptotic sandwich covariance $\mathbf{H}_{n, \mathbf{b}^*}^{-1} \mathbf{I}_{n, \mathbf{b}^*} \mathbf{H}_{n, \mathbf{b}^*}^{-1}$. Consequently, posterior credible intervals need not have correct asymptotic frequentist coverage. A remedy for this frequentist miscalibration is to apply a {sandwich variance correction} replacing $\mathbf{H}_{n, \mathbf{b}^*}^{-1}$ with $\mathbf{H}_{n, \mathbf{b}^*}^{-1} \mathbf{I}_{n, \mathbf{b}^*} \mathbf{H}_{n, \mathbf{b}^*}^{-1}$. 
In Section~\ref{sec:empirical_sandwich}, we demonstrate empirically that the TKC estimator enables effective sandwich covariance corrections and yields correct coverage for single-level grouped random effects models.

An alternative route to frequentist calibration is to temper the generalized posterior \citep{bissiri2016general}, $\pi_n(b \mid y) \propto p(y \mid b)^\alpha \pi(b),$ $\alpha \in (0,1]$. This inflates posterior uncertainty and can, in principle, be tuned to improve coverage under misspecification \citep{10.1093/biomet/asy054}. Ideally, we select $\alpha$ such that the tempered posterior covariance matches the sandwich covariance: $ \mathbf{H}_{n, \mathbf{b}^*}(\alpha)^{-1} \approx  \mathbf{H}_{n, \mathbf{b}^*}^{-1} \mathbf{I}_{n, \mathbf{b}^*} \mathbf{H}_{n, \mathbf{b}^*}^{-1}$.
In our setting, selecting $\alpha$ in a reliable data-driven way is difficult. We have attempted to select $\alpha$ using a data-splitting approach in which the random effects estimated on a validation dataset were treated as ground truth. However, this procedure proved ineffective, likely since the estimation error on the second dataset is of the same order as the confidence intervals we aim to calibrate.

\subsection{Uncertainty-Aware Prediction Intervals}
Quantile regression methods enable the construction of $(1-\alpha)\%$ prediction intervals $C_{1-\alpha}$ for new {responses} $y_{n+1}$ by estimating upper $(1-\alpha/2)$ and lower $(\alpha/2)$ quantiles of the distribution of $y_{n+1}|x_{n+1}$. To achieve calibrated {marginal} coverage, \citet{romano2019conformalized} propose conformalized quantile regression (CQR), which leverages the framework of conformal prediction to learn a scalar adjustment on a held-out calibration set. Furthermore \citet{rossellini2024integrating} extend this by multiplying the adjustment by a notion of {local epistemic uncertainty} to improve {conditional coverage}.
We can apply our GP quantile estimates within this framework, using the predicted quantiles as the starting point for the CQR procedure, and the {predictive variance as uncertainty measure} for the uncertainty-aware CQR. In Appendix~\ref{appendix:cqr}, we show that this approach improves conditional coverage, suggesting that even when our predictive distributions lack exact frequentist asymptotic calibration, they capture meaningful uncertainty information useful for downstream tasks.

\subsection{Computational Aspects}\label{Computational}
We employ the empirical Bayes (EB) method to estimate the (hyper-)parameters $\boldsymbol{\theta}$, $\boldsymbol{\beta}$, and $\lambda$ by maximizing the marginal likelihood. Since we work with a misspecified model, the interpretation of the marginal likelihood as the data density under a hierarchical model no longer holds. In this setting, we adopt the pragmatic justification provided by \cite{fong2020marginal}: the log-marginal likelihood corresponds to an {exhaustive leave-p-out cross-validation} criterion.

A Laplace approximation yields the following approximate marginal likelihood:
\begin{equation}
\label{eq:laplace}
\log p_{LA}(\mathbf{y}|\boldsymbol{\beta},\lambda, \boldsymbol{\theta}) = \log p(\boldsymbol{y}|\hat{\mathbf{b}},\boldsymbol{\beta}, \lambda) + \log \pi(\hat{\mathbf{b}}|\boldsymbol{\theta}) - \frac{1}{2}\log\det(\boldsymbol{\hat{\Sigma}^{-1}}) + \frac{m}{2}\log(2\pi),
\end{equation} 
where either
$$\boldsymbol{\hat{\Sigma}^{-1}} = \boldsymbol{K_{\theta}^{-1}} + \mathbf I_{n,\mathbf b} ~~ \text{or} ~~ \boldsymbol{\hat{\Sigma}^{-1}} = \boldsymbol{K_{\theta}^{-1}} + \hat{\mathbf{H}}_{n,b}$$
in the Fisher-Laplace in \eqref{eq:fisher_info} or the triangular kernel curvature approximation in \eqref{cc_approx}, respectively. The log-marginal likelihood $\log p_{LA}(\mathbf{y}|\boldsymbol{\beta},\lambda, \boldsymbol{\theta})$ is maximized with respect to $\boldsymbol{\theta}$, $\boldsymbol{\beta}$, and $\lambda$. In our experiments, this is done using a limited-memory BFGS (L-BFGS) algorithm which requires calculating gradients of $\log p_{LA}(\mathbf{y}|\boldsymbol{\beta},\lambda, \boldsymbol{\theta})$. Every time, the marginal likelihood or a gradient of it is calculated, the mode defined in \eqref{mode_def} needs to be (re-)determined. This optimization problem is challenging since the gradient of the asymmetric Laplace log-likelihood is piecewise constant and its second derivatives are zero almost everywhere. Despite this, the objective function is strongly log-concave, ensuring the existence of a unique global mode. In our experiments, we find $\hat{\mathbf{b}}$ with a quasi-Newton method using the Fisher information instead of a Hessian. For prediction, we use the Laplace-approximated posterior predictive distributions \citep[see, e.g., ][]{sigrist2022latent}.

For computational efficiency for Gaussian processes, our software implementation allows for using Vecchia approximations \citep{vecchia1988estimation, datta2016hierarchical}. Vecchia approximations often yield state-of-the-art approximation accuracy in spatial statistics \citep{guinness2019gaussian, rambelli2025accuracy}. To further speed up computations, we use the iterative methods presented in \citet{kundig2025scalable} for crossed random effects models. Besides, our software implementation also allows for using other GP approximations such as inducing points and Vecchia-inducing-points full-scale (VIF) approximations which can be more accurate for higher-dimensional inputs compared to Vecchia approximations \citep{gyger2025vecchia}.

\subsubsection{Software Implementation}
The methods presented in this article are implemented in the \texttt{GPBoost} library written in C++ with high-level Python and R interface packages\if1\blind{, see \url{https://github.com/fabsig/GPBoost}}\fi.

\section{Simulated Experiments}\label{Experiments}\label{sec:simulated_exp}
We compare our proposed Fisher and TKC Laplace approximations in simulation studies and real-world applications for both grouped random-effects and Gaussian process models. For each model type, we compare against state-of-the-art benchmark methods specifically designed for the respective model structure. In the simulation experiments, we assess the prediction accuracy using the root mean squared error (RMSE) with respect to the true latent quantile, and in the real-world experiments, we use the quantile loss (QL) since the true quantile is unknown:
\begin{align}
\text{RMSE} = \sqrt{\frac{1}{n_{test}}\sum_{i=1}^{n_{test}}\big(\hat{q}_\tau(x_i) - q_\tau(x_i)\big)^2}\quad,\quad
    \text{QL} = \frac{1}{n_{test}}\sum_{i=1}^{n_{test}} \rho_\tau\big(y_i - \hat{q}_\tau(x_i)\big),
\end{align}
where $q_\tau(x_i)$ is the true conditional $\tau$-th quantile, $\hat{q}_\tau(x_i)$ is the predicted conditional quantile, and $\rho_\tau(u) = u(\tau - \mathbbm{1}_{u < 0})$ is the pinball loss function. We present results for $\tau = 0.8$ in the main text and additionally report results for $\tau=0.95$ in Appendix~\ref{app:tau095}. Unless stated otherwise, all experiments employ out-of-sample validation using 10 replications with 75-25 train-test splits. We also evaluate the hyperparameter estimation accuracy using the RMSE between estimated and true parameters. 

For grouped random effects models, we compare against linear mixed models fitted using the \texttt{lqmm} package \citep{geraci2014linear} (version~1.5.8), Bayesian quantile regression using MCMC sampling via \texttt{brms} \citep{JSSv080i01} (version~2.22.0), and \texttt{bayesQR} \citep{benoit2017bayesqr} (version~2.4), which use a Metropolis sampler and a Gibbs sampler, respectively. We use the package-default priors for all Bayesian baselines. Note that \texttt{lqmm} does not support crossed random effects. Unless stated otherwise, we use 3,000 MCMC samples and 1,000 burn-in iterations. For the fully Bayesian methods, we use posterior means as point predictions and hyperparameter estimates. To keep the computational complexity within a reasonable range across all methods, we impose a wall-clock runtime limit of one hour for estimating a model and generating predictions on the corresponding test set; methods exceeding this limit are terminated and treated as not converged. 

For Gaussian process models, we compare against an inducing points variational GP (SVGP) approximation \citep{hensman2013gaussian, hensman2015scalable} in \texttt{GPyTorch} \citep{gardner2021gpytorchblackboxmatrixmatrixgaussian} (version~1.14) with $1{,}000$  inducing points. The inducing points are optimized jointly with the variational and kernel hyperparameters. We also consider the Vecchia approximation variational inference (VIVA) method \citep{cao2023variational}. Finally, we use a spline-based quantile regression method implemented in the \texttt{qgam} package \citep{Fasiolo03072021} (version~1.3.4) with default optimization settings in two configurations: one with only univariate smooth terms and 40 basis functions, and one where we additionally include interaction effects and 45 basis functions. For $n > 1{,}000$, we employ a Vecchia approximation with 20 nearest neighbors for the Fisher-Laplace and TKC approximations to maintain computational feasibility. 

For the Fisher-Laplace and TKC Laplace approximations, we use the \texttt{GPBoost} library version 1.6.7. The TKC Laplace approximation is performed with a minimum likelihood-drop threshold set to $0.1$ for mixed-effects models and $10$ for Gaussian process models. As discussed in Section~\ref{sec:cc}, this threshold is used to avoid numerical issues associated with estimating curvature in an excessively small neighborhood around the mode. As shown in Appendix~\ref{app:sensitivity_tkc}, the results are robust to this choice. All calculations were carried out using an Intel Xeon E3-1284L v4 processor (2.90\,GHz) and 31 GB of RAM. Code for reproducing all experiments is available at \if1\blind{\url{https://github.com/AndreaThomNava/LaGP}}\else{BLINDED PLACEHOLDER}\fi.

\subsection{Grouped Random Effects} 
\label{sec:single_level_re}
We generate data from a single-level grouped random effects model with $m=100$ groups and $n_j \in \{10, 100, 500\}$ occurrences per random effect, and the random effects follow a Gaussian distribution with variance $\sigma^2_u = 1$. We consider two noise scenarios: observations generated from the asymmetric Laplace distribution and from a Gaussian distribution, with the latter representing model misspecification (see additionally Appendix~\ref{app:student_t} for a Student-t noise scenario). The noise variance is chosen to maintain a signal-to-noise ratio of 5 across both scenarios. 

\begin{table}[ht!]
\centering
\begin{tabular}{lccccccc}
\toprule
\multicolumn{3}{c}{} & \multicolumn{5}{c}{RMSE} \\
\cmidrule(lr){4-8}
Noise & m & $n_j$ & \multicolumn{1}{c}{TKC} & \multicolumn{1}{c}{Fisher} & \multicolumn{1}{c}{BayesQR} & \multicolumn{1}{c}{BRMS} & \multicolumn{1}{c}{LQMM} \\
\midrule
ALD & 100 & 10 & 0.12$_\pm{\text{\tiny 0.005}}$ & 0.12$_\pm{\text{\tiny 0.005}}$ & \textbf{0.11$_\pm{\text{\tiny 0.0037}}$} & 0.34$_\pm{\text{\tiny 0.0048}}$ & 0.17$_\pm{\text{\tiny 0.004}}$ \\
ALD & 100 & 100 & \textbf{0.028$_\pm{\text{\tiny 7.0e-04}}$} & \textbf{0.028$_\pm{\text{\tiny 7.2e-04}}$} & \textbf{0.027$_\pm{\text{\tiny 6.9e-04}}$} & 0.33$_\pm{\text{\tiny 8.5e-04}}$ & 0.49$_\pm{\text{\tiny 0.18}}$ \\
ALD & 100 & 500 & \textbf{0.012$_\pm{\text{\tiny 2.9e-04}}$} & \textbf{0.012$_\pm{\text{\tiny 2.9e-04}}$} & \textbf{0.012$_\pm{\text{\tiny 3.0e-04}}$} & --- & 0.62$_\pm{\text{\tiny 0.2}}$ \\
N & 100 & 10 & 0.22$_\pm{\text{\tiny 0.0065}}$ & 0.22$_\pm{\text{\tiny 0.0062}}$ & \textbf{0.21$_\pm{\text{\tiny 0.0081}}$} & 0.48$_\pm{\text{\tiny 0.0069}}$ & \textbf{0.2$_\pm{\text{\tiny 0.0067}}$} \\
N & 100 & 100 & \textbf{0.072$_\pm{\text{\tiny 0.0018}}$} & \textbf{0.072$_\pm{\text{\tiny 0.0017}}$} & \textbf{0.072$_\pm{\text{\tiny 0.002}}$} & 0.47$_\pm{\text{\tiny 0.0014}}$ & 0.37$_\pm{\text{\tiny 0.14}}$ \\
N & 100 & 500 & \textbf{0.032$_\pm{\text{\tiny 0.0011}}$} & \textbf{0.032$_\pm{\text{\tiny 0.0011}}$} & \textbf{0.032$_\pm{\text{\tiny 0.0011}}$} & --- & 0.58$_\pm{\text{\tiny 0.16}}$ \\
\bottomrule
\end{tabular}
\caption{Single-level grouped random effects: Average quantile test RMSE $\pm$ standard error.
Boldface indicates methods whose RMSE is within two standard errors of the minimum.
Dashes (---) indicate non-convergence or computational failure.}
\label{tab:one_grouped_sim}
\end{table}

The accuracy and runtime results are presented in Tables~\ref{tab:one_grouped_sim} and~\ref{tab:time_groupsim}. Overall, the proposed TKC Laplace and Fisher-Laplace approximations and \texttt{BayesQR} are the most accurate methods. While these three methods yield comparable accuracy, the novel Laplace approximations have computation times that are several orders of magnitude lower than \texttt{BayesQR}. The fully Bayesian method in \texttt{brms} is less accurate and, for large group sizes, often fails to finish within the imposed time limit. \texttt{lqmm} is overall also less accurate and exhibits numerical instability for larger group sizes, but it yields high accuracy for very small groups in the misspecified setting. In Figure~\ref{fig:hypers_one_group}, we report estimated hyperparameters for the misspecified Gaussian setting (see Appendix~\ref{app:single_hyper} for the complete results). Note that \texttt{BayesQR} does not allow for estimating variance components. We find that our proposed Laplace approximations achieve the highest accuracy across all sample sizes, even under misspecification. 
\begin{table}[ht!]
\centering
\begin{tabular}{lccccccc}
\toprule
\multicolumn{3}{c}{} & \multicolumn{5}{c}{Runtime} \\
\cmidrule(lr){4-8}
Noise & m & $n_j$ & \multicolumn{1}{c}{TKC} & \multicolumn{1}{c}{Fisher} & \multicolumn{1}{c}{BayesQR} & \multicolumn{1}{c}{BRMS} & \multicolumn{1}{c}{LQMM} \\
\midrule
ALD & 100 & 10 & 0.19$_\pm{\text{\tiny 0.019}}$ & 0.15$_\pm{\text{\tiny 0.012}}$ & 48.5$_\pm{\text{\tiny 0.34}}$ & 479$_\pm{\text{\tiny 10}}$ & \textbf{0.043$_\pm{\text{\tiny 0.0028}}$} \\
ALD & 100 & 100 & 0.5$_\pm{\text{\tiny 0.037}}$ & \textbf{0.23$_\pm{\text{\tiny 0.028}}$} & 463$_\pm{\text{\tiny 7.2}}$ & 3163$_\pm{\text{\tiny 89.7}}$ & 0.43$_\pm{\text{\tiny 0.081}}$ \\
ALD & 100 & 500 & 1.8$_\pm{\text{\tiny 0.12}}$ & \textbf{0.74$_\pm{\text{\tiny 0.071}}$} & 2618$_\pm{\text{\tiny 23.9}}$ & --- & 3$_\pm{\text{\tiny 0.87}}$ \\
N & 100 & 10 & 0.18$_\pm{\text{\tiny 0.02}}$ & \textbf{0.097$_\pm{\text{\tiny 0.011}}$} & 49.8$_\pm{\text{\tiny 0.46}}$ & 452$_\pm{\text{\tiny 10.6}}$ & 0.15$_\pm{\text{\tiny 0.11}}$ \\
N & 100 & 100 & 0.48$_\pm{\text{\tiny 0.033}}$ & \textbf{0.26$_\pm{\text{\tiny 0.033}}$} & 464$_\pm{\text{\tiny 4.5}}$ & 2787$_\pm{\text{\tiny 116}}$ & \textbf{0.3$_\pm{\text{\tiny 0.027}}$} \\
N & 100 & 500 & 1.4$_\pm{\text{\tiny 0.14}}$ & \textbf{0.88$_\pm{\text{\tiny 0.16}}$} & 2362$_\pm{\text{\tiny 5.8}}$ & --- & 2.1$_\pm{\text{\tiny 0.27}}$ \\
\bottomrule
\end{tabular}
\caption{Single-level grouped random effects results for simulated data: Average runtime (in seconds) for estimation and prediction }
\label{tab:time_groupsim}
\end{table}

\begin{figure}[ht!]
    \centering
    \includegraphics[width=0.8\linewidth]{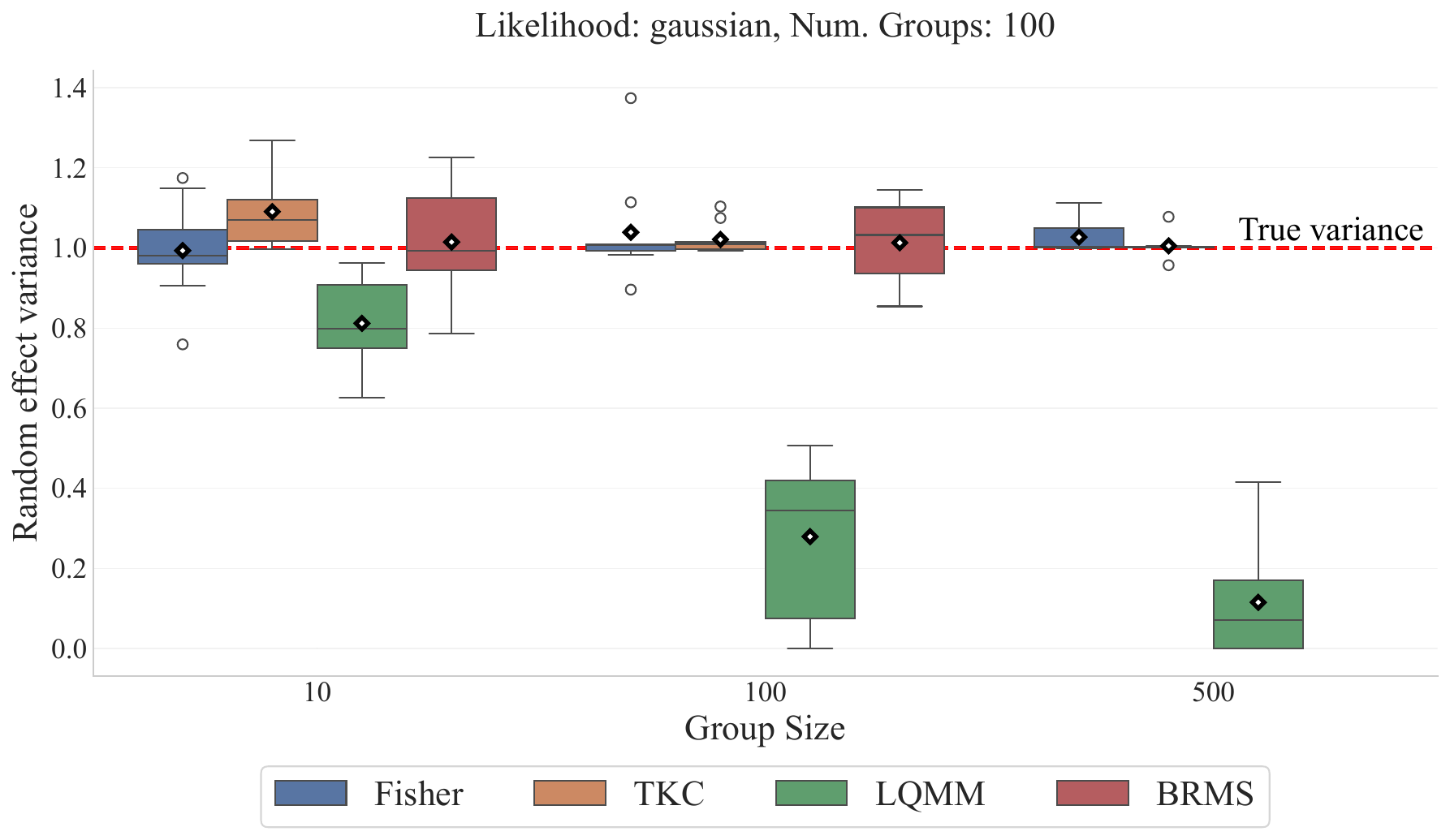}
    \caption{Single-level grouped random effects: Estimates of the random-effects variance $\sigma_u^2$ for varying group sizes $n_j$ in the Gaussian noise setting.}
    \label{fig:hypers_one_group}
\end{figure}


In Appendix \ref{app_crossed}, we also consider a two-factor crossed random effects model. In brief, we find that the TKC and Fisher-Laplace approximations match \texttt{bayesQR} in accuracy whenever \texttt{bayesQR} does not fail to converge, while having lower runtimes by orders of magnitude. And \texttt{brms} did not finish within the one-hour time limit in any crossed-effects configuration.

In Appendix \ref{marg_ll_accuracy}, we additionally analyze the accuracy of marginal likelihood approximations. In misspecified settings, the Fisher-Laplace approximation is less accurate than the TKC Laplace approximation, which converges to the correct quantity.

\subsection{Gaussian Process} 
We sample data from Gaussian processes using a Matérn 1.5 kernel with variance $\sigma^2 = 1$, input dimensions $d \in \{2, 5\}$ with length scale $\ell = 0.25$ and $ \ell = 0.25 \sqrt{5/2}$ for $d=2$ and $d=5$, respectively, and sample sizes $n \in \{1{,}000, 10{,}000\}$. We consider the following noise models: iid Gaussian, heteroskedastic Gaussian, and Student-t with 2 degrees of freedom. In the heteroskedastic Gaussian setting, the log-standard deviation is generated by an independent second Gaussian process with the same kernel specifications as the signal process. The quantile prediction accuracy results are summarized in Table~\ref{tab:gp_results} for the heteroskedastic Gaussian noise. The full results and the estimation accuracy and runtime results are reported in Appendix~\ref{app:gp_sim_runtime}. Overall, the proposed Fisher-Laplace (FL) and TKC Laplace approximations and the variational approximation (VI) yield the most accurate results. Interestingly, the TKC approximation, whose asymptotic justification depends on the homoscedastic assumption \hyperlink{ass:cc}{(CC)}, also yields very accurate results in the misspecified heteroscedastic setting. Both \texttt{qgam} baselines give consistently higher RMSEs; however, they are also the fastest methods considered.

\begin{table}[ht!]
\centering
\begin{tabular}{lllcccccc}
\toprule
\multicolumn{3}{c}{} & \multicolumn{6}{c}{RMSE} \\
\cmidrule(lr){4-9}
Noise & $n$ & d & TKC & FL & VI & VIVA & QGAM & QGAM Int. \\
\midrule
HetN & 1{,}000 & 2 & \textbf{0.24$_\pm{\text{\tiny 0.025}}$} & \textbf{0.25$_\pm{\text{\tiny 0.026}}$} & \textbf{0.24$_\pm{\text{\tiny 0.025}}$} & 0.52$_\pm{\text{\tiny 0.073}}$ & 0.75$_\pm{\text{\tiny 0.069}}$ & 0.36$_\pm{\text{\tiny 0.019}}$ \\
HetN & 1{,}000 & 5 & \textbf{0.66$_\pm{\text{\tiny 0.037}}$} & \textbf{0.65$_\pm{\text{\tiny 0.032}}$} & \textbf{0.63$_\pm{\text{\tiny 0.032}}$} & 0.71$_\pm{\text{\tiny 0.04}}$ & 0.99$_\pm{\text{\tiny 0.03}}$ & 0.99$_\pm{\text{\tiny 0.029}}$ \\
HetN & 10{,}000 & 2 & \textbf{0.28$_\pm{\text{\tiny 0.058}}$} & \textbf{0.28$_\pm{\text{\tiny 0.057}}$} & \textbf{0.31$_\pm{\text{\tiny 0.066}}$} & 0.52$_\pm{\text{\tiny 0.16}}$ & 1$_\pm{\text{\tiny 0.12}}$ & 0.75$_\pm{\text{\tiny 0.12}}$ \\
HetN & 10{,}000 & 5 & \textbf{0.81$_\pm{\text{\tiny 0.13}}$} & \textbf{0.79$_\pm{\text{\tiny 0.12}}$} & \textbf{0.85$_\pm{\text{\tiny 0.14}}$} & \textbf{0.83$_\pm{\text{\tiny 0.12}}$} & 1.1$_\pm{\text{\tiny 0.059}}$ & 1.1$_\pm{\text{\tiny 0.065}}$ \\
\bottomrule
\end{tabular}
\caption{GP results for for heteroscedastic Gaussian noise: Average quantile test RMSE $\pm$ standard error.}
\label{tab:gp_results}
\end{table}
\if1\jasa{\vspace{-0.7cm}}\fi

\subsection{Sandwich Correction using the TKC Estimator}\label{sec:empirical_sandwich}
In the following, we analyze the properties of calibrated confidence intervals using the sandwich variance correction described in Section~\ref {sec:calibration} with the TKC approximation given in \eqref{eq:cc}.
We generate data from a single-level grouped random effects model with $m=100$ groups and $n_j = 100$ observations each, under both Gaussian and Student's $t$ noise distributions, such that there is misspecification. For each random effect $b_j$, we construct $(1-\alpha)100\%$ confidence intervals centered at the posterior mode $\hat{b}_j$ with the sizes determined by the sandwich variance:
\begin{equation}
\text{CI}_{1-\alpha}(\hat{b_j}) = \hat{b}_j \pm z_{1 - \alpha/2} \cdot \text{SE}_{\text{sandwich}}(\hat{b}_j),
\end{equation}
where $z_{1 - \alpha/2}$ is the $1 - \alpha/2$ quantile of the standard normal distribution and $\text{SE}_{\text{sandwich}}(\hat{b}_j)$ denotes the standard error obtained from the sandwich variance estimator. For quantile regression at level $\tau$ and $\boldsymbol{\beta}=0$ (no fixed effect), the sandwich variance is given by
\begin{equation}
\text{Var}_{\text{sandwich}}(\hat{b}_j) = \frac{\tau(1-\tau)}{n_j\hat{C}_{\Delta b}(\hat b_j)^2},
\end{equation}
where $\hat{C}_{\Delta b}(\hat b_j)$ is our triangular kernel curvature estimate of the curvature.
We evaluate the empirical coverage, defined as the proportion of confidence intervals containing the true random effect:
\begin{equation}
\text{Coverage} = \frac{1}{m_j}\sum_{j=1}^{m_j} \mathbbm{1}\{b_j^{(\text{true})} \in \text{CI}_{1-\alpha}(\hat{b_j})\},
\end{equation}
and repeat this experiment for $K=10$ independent replications.
As shown in Figure~\ref{fig:coverage_sand}, the sandwich variance correction using the TKC estimate achieves correct nominal coverage across both noise models, whereas the other methods yield miscalibrated confidence intervals.
\begin{figure}[ht!]
    \centering
    \includegraphics[width=1\linewidth]{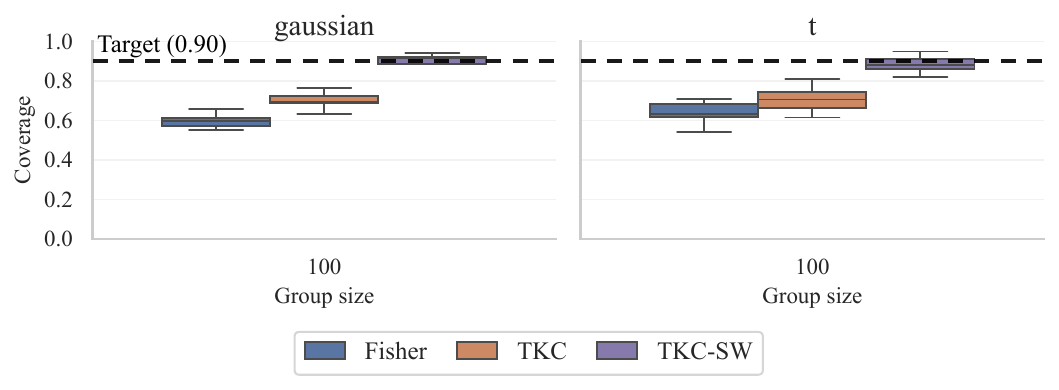}
   \caption{Empirical coverage of $90\%$ Wald-type confidence intervals for the group random effects $b_j$, using the TKC-based sandwich standard errors (TKC--SW) for Gaussian and Student's $t$ noise.}
\label{fig:coverage_sand}
\end{figure}
\if1\jasa{\vspace{-0.7cm}}\fi

\section{Real-World Applications}\label{real_world_sec}
In the following, we evaluate the proposed Fisher and TKC Laplace approximations for grouped random effects and Gaussian process models on several real-world datasets. We compare to the same benchmark methods as in the simulated experiments in Section \ref{sec:simulated_exp} using the same settings.

\subsection{Grouped Data Sets for Random Effects} 
We consider three single-level grouped random effects datasets: the Orthodont ($n = 108$, $m = 27$, $p=2$), Labor ($n = 358$, $m = 83$, $p=2$) and Cars ($n = 97,296$, $m = 15,226$, $p=2$) datasets. These datasets are described in Appendix \ref{desc_data}. They include small and large sample sizes, allowing us to assess both the accuracy and computational scalability of the methods. The comparison of the methods (TKC, Fisher-Laplace, \texttt{BayesQR}, \texttt{lqmm}, \texttt{brms}) is done using 5-fold cross-validation. Table~\ref{tab:grouped_real} reports the quantile loss. For the smaller datasets (Orthodont and Labor), all methods yield comparable accuracy, with \texttt{brms} exhibiting a higher loss on the Labor dataset. On the large dataset (Cars), the proposed TKC and FL approximations are the only methods that run reliably at this scale, whereas the remaining benchmarks fail to converge or do not finish within the computational budget. The runtimes are reported in Appendix~\ref{app:single_real_runtime}.

\begin{table}[htbp]
\centering
\begin{tabular}{lccccc}
\toprule
\multicolumn{1}{c}{} & \multicolumn{5}{c}{Quantile Loss} \\
\cmidrule(lr){2-6}
Dataset & \multicolumn{1}{c}{TKC} & \multicolumn{1}{c}{Fisher} & \multicolumn{1}{c}{BayesQR} & \multicolumn{1}{c}{BRMS} & \multicolumn{1}{c}{LQMM} \\
\midrule
Orthodont & \textbf{0.17$_\pm{\text{\tiny 0.077}}$} & \textbf{0.17$_\pm{\text{\tiny 0.071}}$} & \textbf{0.17$_\pm{\text{\tiny 0.057}}$} & \textbf{0.25$_\pm{\text{\tiny 0.11}}$} & \textbf{0.16$_\pm{\text{\tiny 0.074}}$} \\
Labor & \textbf{0.18$_\pm{\text{\tiny 0.034}}$} & \textbf{0.17$_\pm{\text{\tiny 0.028}}$} & \textbf{0.17$_\pm{\text{\tiny 0.04}}$} & 0.27$_\pm{\text{\tiny 0.051}}$ & \textbf{0.18$_\pm{\text{\tiny 0.035}}$} \\
Cars & \textbf{0.13$_\pm{\text{\tiny 0.0055}}$} & \textbf{0.13$_\pm{\text{\tiny 0.0029}}$} & --- & --- & --- \\
\bottomrule
\end{tabular}
\caption{Real-world data grouped random effects results: Average quantile loss $\pm$ standard error.
Boldface indicates methods within two standard errors of the minimum loss for each dataset.
Dashes (---) indicate non-convergence or failure to finish within the time budget.}
\label{tab:grouped_real}
\end{table}

In Appendix~\ref{app:real_crossed_runtime}, we additionally consider two crossed random effects datasets. We find that the TKC and Fisher-Laplace Laplace approximations yield similar accuracy with the TKC approximation being slower than the Fisher-Laplace approximation. Both \texttt{brms} and \texttt{bayesQR} do not finish within the allocated budget for both datasets.

\subsection{Spatial Datasets for Gaussian Processes}
We consider four spatial datasets with continuous responses and two-dimensional coordinate inputs: the Laegern dataset, two MODIS satellite data sets denoted as Heaton and MODIS, and the Lucas county House dataset. These datasets are described in Appendix \ref{desc_data}. For all datasets, we subsample $n_{train} = 9{,}000$ data points for training and $n_{test} = 1{,}000$ data points for testing, and repeat the procedure 5 times. The following preprocessing is applied to all datasets: the input variables are scaled to the unit interval $[0,1]$ using min-max normalization, and the response variables are standardized to zero mean and unit variance. This standardization is done based on the training data and facilitates comparison of quantile loss values across datasets with different scales. We use a Matérn 1.5 kernel, and adopt a Vecchia approximation using 20 neighbors for the Fisher and TKC Laplace approximations.

Table~\ref{tab:real_data_comparison} reports the test quantile loss for the proposed methods and the same benchmark methods as in the simulated experiments (see Section \ref{sec:simulated_exp}). We observe that the proposed TKC and Fisher-Laplace methods are consistently among the best-performing approaches across all datasets, while spline-based \texttt{qgam} baselines are noticeably less accurate. Moreover, the \texttt{qgam} variants are the fastest methods in terms of runtime (see Table~\ref{table:gp_runtime_real} in Appendix \ref{app:real_GP_runtime}), and TKC and Fisher-Laplace are faster than both VI and VIVA.
\begin{table}[ht!]
\centering
\begin{tabular}{lcccccc}
\toprule
\multicolumn{1}{c}{} & \multicolumn{6}{c}{Quantile Loss} \\
\cmidrule(lr){2-7}
Dataset & TKC & FL & VI & VIVA & QGAM & QGAM Int. \\
\midrule
Heaton & \textbf{0.084$_\pm$\text{\tiny 0.0024}} & \textbf{0.084$_\pm$\text{\tiny 7.7e-04}} & 0.094$_\pm$\text{\tiny 0.0022} & \textbf{0.084$_\pm$\text{\tiny 0.0023}} & 0.14$_\pm$\text{\tiny 0.0045} & 0.11$_\pm$\text{\tiny 0.0028} \\
House & \textbf{0.11$_\pm$\text{\tiny 0.007}} & \textbf{0.11$_\pm$\text{\tiny 0.0068}} & 0.13$_\pm$\text{\tiny 0.0059} & 0.15$_\pm$\text{\tiny 0.0055} & 0.18$_\pm$\text{\tiny 0.0036} & 0.16$_\pm$\text{\tiny 0.004} \\
Laegern & \textbf{0.18$_\pm$\text{\tiny 0.019}} & \textbf{0.17$_\pm$\text{\tiny 0.0065}} & 0.19$_\pm$\text{\tiny 0.0041} & \textbf{0.18$_\pm$\text{\tiny 0.0083}} & 0.23$_\pm$\text{\tiny 0.0042} & 0.22$_\pm$\text{\tiny 0.0067} \\
MODIS & \textbf{0.079$_\pm$\text{\tiny 0.0027}} & \textbf{0.078$_\pm$\text{\tiny 0.0031}} & 0.089$_\pm$\text{\tiny 0.0044} & \textbf{0.08$_\pm$\text{\tiny 0.0028}} & 0.17$_\pm$\text{\tiny 0.0068} & 0.12$_\pm$\text{\tiny 0.0044} \\
\bottomrule
\end{tabular}
\caption{Real-world data Gaussian process results: average quantile loss $\pm$ standard error.
Boldface indicates methods within two standard errors of the minimum loss for each dataset.}
\label{tab:real_data_comparison}
\end{table}
\if1\jasa{\vspace{-0.7cm}}\fi

\section{Conclusion and Future Work}
\label{Discussion}
In this work, we develop novel Laplace approximations for quantile regression with latent Gaussian models.  A key contribution of this work is to make explicit the connection between Bayesian asymptotic results under misspecification and classical M-estimation theory, and to show how this connection can be exploited to justify practical Laplace approximations even when the likelihood is not twice differentiable and potentially misspecified. This link has not previously been articulated in the context of Bayesian quantile regression. We provide theoretical justifications for the proposed Laplace approximations and demonstrate their competitive performance against state-of-the-art methods through comprehensive simulations and real data applications. The resulting methodology is computationally robust, deterministic, and scalable to large datasets. 

Despite these contributions, several limitations remain. A fundamental challenge for quantile regression \citep{koenker2017handbook} is that performance may deteriorate for extreme quantiles ($\tau \rightarrow 0$ or $\tau \rightarrow 1$) due to data sparsity. Future research directions include the use of higher-order kernels in curvature estimation, relaxing the constant curvature assumption by making the curvature estimation adaptive, the development of more refined posterior calibration techniques, and applying the same curvature-based Laplace principle to other non-smooth generalized Bayesian models including the $\varepsilon$-insensitive loss in support vector regression and the interval score for interval regression.

\ifbiometrika
\subsection*{Declaration of the use of generative AI and AI-assisted technologies}

During the preparation of this work the authors used ChatGPT in order to improve minor language formulations and to assist with selected code and plotting tasks. After using this tool/service the authors reviewed and edited the content as necessary and take full responsibility for the content of the publication.

\medskip
\noindent\textit{Conflicts of interest:} The authors declare no conflicts of interest.
\fi

\bibliographystyle{abbrvnat}
\bibliography{bib_name}

\begin{thebibliography}{45}
\providecommand{\natexlab}[1]{#1}
\providecommand{\url}[1]{\texttt{#1}}
\expandafter\ifx\csname urlstyle\endcsname\relax
  \providecommand{\doi}[1]{doi: #1}\else
  \providecommand{\doi}{doi: \begingroup \urlstyle{rm}\Url}\fi

\bibitem[Benoit and Van~den Poel(2017)]{benoit2017bayesqr}
D.~F. Benoit and D.~Van~den Poel.
\newblock bayesqr: A bayesian approach to quantile regression.
\newblock \emph{Journal of Statistical Software}, 76:\penalty0 1--32, 2017.

\bibitem[Bissiri et~al.(2016)Bissiri, Holmes, and Walker]{bissiri2016general}
P.~G. Bissiri, C.~C. Holmes, and S.~G. Walker.
\newblock A general framework for updating belief distributions.
\newblock \emph{Journal of the Royal Statistical Society Series B: Statistical Methodology}, 78\penalty0 (5):\penalty0 1103--1130, 2016.

\bibitem[Bürkner(2017)]{JSSv080i01}
P.-C. Bürkner.
\newblock brms: An r package for bayesian multilevel models using stan.
\newblock \emph{Journal of Statistical Software}, 80\penalty0 (1):\penalty0 1–28, 2017.

\bibitem[Cao et~al.(2023)Cao, Kang, Jimenez, Sang, Schaefer, and Katzfuss]{cao2023variational}
J.~Cao, M.~Kang, F.~Jimenez, H.~Sang, F.~T. Schaefer, and M.~Katzfuss.
\newblock Variational sparse inverse cholesky approximation for latent gaussian processes via double kullback-leibler minimization.
\newblock In \emph{International Conference on Machine Learning}, pages 3559--3576. PMLR, 2023.

\bibitem[Chernozhukov and Hong(2003)]{chernozhukov2003mcmc}
V.~Chernozhukov and H.~Hong.
\newblock An mcmc approach to classical estimation.
\newblock \emph{Journal of econometrics}, 115\penalty0 (2):\penalty0 293--346, 2003.

\bibitem[Datta et~al.(2016)Datta, Banerjee, Finley, and Gelfand]{datta2016hierarchical}
A.~Datta, S.~Banerjee, A.~O. Finley, and A.~E. Gelfand.
\newblock Hierarchical nearest-neighbor {G}aussian process models for large geostatistical datasets.
\newblock \emph{Journal of the American Statistical Association}, 111\penalty0 (514):\penalty0 800--812, 2016.

\bibitem[Ehm et~al.(2016)Ehm, Gneiting, Jordan, and Kr{\"u}ger]{ehm2016quantiles}
W.~Ehm, T.~Gneiting, A.~Jordan, and F.~Kr{\"u}ger.
\newblock Of quantiles and expectiles: consistent scoring functions, choquet representations and forecast rankings.
\newblock \emph{Journal of the Royal Statistical Society Series B: Statistical Methodology}, 78\penalty0 (3):\penalty0 505--562, 2016.

\bibitem[Fasiolo et~al.(2021)Fasiolo, Wood, Zaffran, Nedellec, and and]{Fasiolo03072021}
M.~Fasiolo, S.~N. Wood, M.~Zaffran, R.~Nedellec, and Y.~G. and.
\newblock Fast calibrated additive quantile regression.
\newblock \emph{Journal of the American Statistical Association}, 116\penalty0 (535):\penalty0 1402--1412, 2021.

\bibitem[Fong and Holmes(2020)]{fong2020marginal}
E.~Fong and C.~C. Holmes.
\newblock On the marginal likelihood and cross-validation.
\newblock \emph{Biometrika}, 107\penalty0 (2):\penalty0 489--496, 2020.

\bibitem[Gardner et~al.(2021)Gardner, Pleiss, Bindel, Weinberger, and Wilson]{gardner2021gpytorchblackboxmatrixmatrixgaussian}
J.~R. Gardner, G.~Pleiss, D.~Bindel, K.~Q. Weinberger, and A.~G. Wilson.
\newblock Gpytorch: Blackbox matrix-matrix gaussian process inference with gpu acceleration, 2021.

\bibitem[Geraci(2017)]{geraci2017nonlinear}
M.~Geraci.
\newblock Nonlinear quantile mixed models.
\newblock \emph{arXiv preprint arXiv:1712.09981}, 2017.

\bibitem[Geraci and Bottai(2014)]{geraci2014linear}
M.~Geraci and M.~Bottai.
\newblock Linear quantile mixed models.
\newblock \emph{Statistics and computing}, 24:\penalty0 461--479, 2014.

\bibitem[Guinness(2021)]{guinness2019gaussian}
J.~Guinness.
\newblock {{G}aussian process learning via {F}isher scoring of {V}ecchia’s approximation}.
\newblock \emph{Statistics and Computing}, 31\penalty0 (3):\penalty0 1--8, 2021.

\bibitem[Gyger et~al.(2025)Gyger, Furrer, and Sigrist]{gyger2025vecchia}
T.~Gyger, R.~Furrer, and F.~Sigrist.
\newblock {Vecchia-Inducing-Points Full-Scale Approximations for Gaussian Processes}.
\newblock \emph{arXiv preprint arXiv:2507.05064}, 2025.

\bibitem[Hartmann and Vanhatalo(2019)]{hartmann2019laplace}
M.~Hartmann and J.~Vanhatalo.
\newblock Laplace approximation and natural gradient for gaussian process regression with heteroscedastic student-t model.
\newblock \emph{Statistics and Computing}, 29\penalty0 (4):\penalty0 753--773, 2019.

\bibitem[He et~al.(2023)He, Pan, Tan, and Zhou]{he2023smoothed}
X.~He, X.~Pan, K.~M. Tan, and W.-X. Zhou.
\newblock Smoothed quantile regression with large-scale inference.
\newblock \emph{Journal of Econometrics}, 232\penalty0 (2):\penalty0 367--388, 2023.

\bibitem[Hensman et~al.(2013)Hensman, Fusi, and Lawrence]{hensman2013gaussian}
J.~Hensman, N.~Fusi, and N.~D. Lawrence.
\newblock {G}aussian processes for big data.
\newblock In \emph{Proceedings of the Twenty-Ninth Conference on Uncertainty in Artificial Intelligence}, pages 282--290, 2013.

\bibitem[Hensman et~al.(2015)Hensman, Matthews, and Ghahramani]{hensman2015scalable}
J.~Hensman, A.~Matthews, and Z.~Ghahramani.
\newblock Scalable variational {G}aussian process classification.
\newblock In \emph{Artificial intelligence and statistics}, pages 351--360. PMLR, 2015.

\bibitem[Kato(2012)]{Kato2012}
K.~Kato.
\newblock Asymptotic normality of powell's kernel estimator.
\newblock \emph{Annals of the Institute of Statistical Mathematics}, 64\penalty0 (2):\penalty0 255--273, 2012.

\bibitem[Kleijn and van~der Vaart(2012)]{Miss-BvM}
B.~Kleijn and A.~van~der Vaart.
\newblock {The Bernstein-Von-Mises theorem under misspecification}.
\newblock \emph{Electronic Journal of Statistics}, 6\penalty0 (none):\penalty0 354 -- 381, 2012.

\bibitem[Koenker(2005)]{koenker2005quantile}
R.~Koenker.
\newblock \emph{Quantile regression}, volume~38.
\newblock Cambridge university press, 2005.

\bibitem[Koenker and Bassett(1978)]{1aa6b708-cbcf-320f-b3f3-298154ee2aeb}
R.~Koenker and G.~Bassett.
\newblock Regression quantiles.
\newblock \emph{Econometrica}, 46\penalty0 (1):\penalty0 33--50, 1978.

\bibitem[Koenker et~al.(2017)Koenker, Chernozhukov, He, and Peng]{koenker2017handbook}
R.~Koenker, V.~Chernozhukov, X.~He, and L.~Peng, editors.
\newblock \emph{Handbook of Quantile Regression}.
\newblock Chapman and Hall/CRC, 2017.

\bibitem[Kozumi and Kobayashi(2011)]{kozumi2011gibbs}
H.~Kozumi and G.~Kobayashi.
\newblock Gibbs sampling methods for bayesian quantile regression.
\newblock \emph{Journal of statistical computation and simulation}, 81\penalty0 (11):\penalty0 1565--1578, 2011.

\bibitem[K{\"u}ndig and Sigrist(2025)]{kundig2025scalable}
P.~K{\"u}ndig and F.~Sigrist.
\newblock {Scalable Computations for Generalized Mixed Effects Models with Crossed Random Effects Using Krylov Subspace Methods}.
\newblock \emph{arXiv preprint arXiv:2505.09552}, 2025.

\bibitem[Le~Cam and Yang(2000)]{le2000asymptotics}
L.~M. Le~Cam and G.~L. Yang.
\newblock \emph{Asymptotics in statistics: some basic concepts}.
\newblock Springer Science \& Business Media, 2000.

\bibitem[Liu et~al.(2025)Liu, Li, and Pang]{liu2025bayesian}
B.~Liu, K.~Li, and T.~Pang.
\newblock Bayesian smoothed quantile regression.
\newblock \emph{arXiv preprint arXiv:2508.01738}, 2025.

\bibitem[McCulloch and Searle(2004)]{mcculloch2004generalized}
C.~E. McCulloch and S.~R. Searle.
\newblock \emph{Generalized, linear, and mixed models}.
\newblock John Wiley \& Sons, 2004.

\bibitem[Nickisch et~al.(2008)Nickisch, Rasmussen, et~al.]{nickisch2008approximations}
H.~Nickisch, C.~E. Rasmussen, et~al.
\newblock Approximations for binary gaussian process classification.
\newblock \emph{Journal of Machine Learning Research}, 9\penalty0 (10):\penalty0 2035--2078, 2008.

\bibitem[Powell(1984)]{POWELL1984303}
J.~L. Powell.
\newblock Least absolute deviations estimation for the censored regression model.
\newblock \emph{Journal of Econometrics}, 25\penalty0 (3):\penalty0 303--325, 1984.

\bibitem[Rambelli and Sigrist(2026)]{rambelli2025accuracy}
F.~Rambelli and F.~Sigrist.
\newblock An accuracy-runtime trade-off comparison of scalable {G}aussian process approximations for spatial data.
\newblock \emph{Journal of Agricultural, Biological and Environmental Statistics (in press)}, 2026.

\bibitem[Romano et~al.(2019)Romano, Patterson, and Candes]{romano2019conformalized}
Y.~Romano, E.~Patterson, and E.~Candes.
\newblock Conformalized quantile regression.
\newblock \emph{Advances in neural information processing systems}, 32, 2019.

\bibitem[Rossellini et~al.(2024)Rossellini, Barber, and Willett]{rossellini2024integrating}
R.~Rossellini, R.~F. Barber, and R.~Willett.
\newblock Integrating uncertainty awareness into conformalized quantile regression.
\newblock In \emph{International Conference on Artificial Intelligence and Statistics}, pages 1540--1548. PMLR, 2024.

\bibitem[Schervish et~al.(2012)Schervish, Kadane, and Seidenfeld]{Schervish2018}
M.~J. Schervish, J.~B. Kadane, and T.~Seidenfeld.
\newblock Characterization of proper and strictly proper scoring rules for quantiles.
\newblock \emph{Preprint, Carnegie Mellon University, March}, 18, 2012.

\bibitem[Sigrist(2022)]{sigrist2022gaussian}
F.~Sigrist.
\newblock Gaussian process boosting.
\newblock \emph{Journal of Machine Learning Research}, 23\penalty0 (232):\penalty0 1--46, 2022.

\bibitem[Sigrist(2023)]{sigrist2022latent}
F.~Sigrist.
\newblock {Latent Gaussian model boosting}.
\newblock \emph{IEEE Transactions on Pattern Analysis and Machine Intelligence}, 45\penalty0 (2):\penalty0 1894--1905, 2023.

\bibitem[Simchoni and Rosset(2023)]{simchoni2023integrating}
G.~Simchoni and S.~Rosset.
\newblock Integrating random effects in deep neural networks.
\newblock \emph{Journal of Machine Learning Research}, 24\penalty0 (156):\penalty0 1--57, 2023.

\bibitem[Sriram et~al.(2013)Sriram, Ramamoorthi, and Ghosh]{10.1214/13-BA817}
K.~Sriram, R.~Ramamoorthi, and P.~Ghosh.
\newblock {Posterior Consistency of Bayesian Quantile Regression Based on the Misspecified Asymmetric Laplace Density}.
\newblock \emph{Bayesian Analysis}, 8\penalty0 (2):\penalty0 479 -- 504, 2013.

\bibitem[Syring and Martin(2018)]{10.1093/biomet/asy054}
N.~Syring and R.~Martin.
\newblock Calibrating general posterior credible regions.
\newblock \emph{Biometrika}, 106\penalty0 (2):\penalty0 479--486, 12 2018.

\bibitem[Tierney and Kadane(1986)]{tierney1986accurate}
L.~Tierney and J.~B. Kadane.
\newblock Accurate approximations for posterior moments and marginal densities.
\newblock \emph{Journal of the american statistical association}, 81\penalty0 (393):\penalty0 82--86, 1986.

\bibitem[Van~der Vaart(2000)]{van2000asymptotic}
A.~W. Van~der Vaart.
\newblock \emph{Asymptotic statistics}, volume~3.
\newblock Cambridge university press, 2000.

\bibitem[Vecchia(1988)]{vecchia1988estimation}
A.~V. Vecchia.
\newblock Estimation and model identification for continuous spatial processes.
\newblock \emph{Journal of the Royal Statistical Society Series B: Statistical Methodology}, 50\penalty0 (2):\penalty0 297--312, 1988.

\bibitem[Williams and Rasmussen(2006)]{williams2006gaussian}
C.~K. Williams and C.~E. Rasmussen.
\newblock \emph{{Gaussian processes for machine learning}}.
\newblock MIT Press Cambridge, 2006.

\bibitem[Yu and Moyeed(2001)]{YU2001437}
K.~Yu and R.~A. Moyeed.
\newblock Bayesian quantile regression.
\newblock \emph{Statistics \& Probability Letters}, 54\penalty0 (4):\penalty0 437--447, 2001.

\bibitem[Yue and Rue(2011)]{yue2011bayesian}
Y.~R. Yue and H.~Rue.
\newblock Bayesian inference for additive mixed quantile regression models.
\newblock \emph{Computational Statistics \& Data Analysis}, 55\penalty0 (1):\penalty0 84--96, 2011.

\end{thebibliography}


\begin{thebibliography}{20}
\providecommand{\natexlab}[1]{#1}
\providecommand{\url}[1]{\texttt{#1}}
\expandafter\ifx\csname urlstyle\endcsname\relax
  \providecommand{\doi}[1]{doi: #1}\else
  \providecommand{\doi}{doi: \begingroup \urlstyle{rm}\Url}\fi

\bibitem[Bivand et~al.(2025)Bivand, Nowosad, and Lovelace]{spdata}
R.~Bivand, J.~Nowosad, and R.~Lovelace.
\newblock \emph{spData: Datasets for Spatial Analysis}, 2025.
\newblock R package version 2.3.4.

\bibitem[Chernozhukov and Hong(2003)]{chernozhukov2003mcmc}
V.~Chernozhukov and H.~Hong.
\newblock An mcmc approach to classical estimation.
\newblock \emph{Journal of econometrics}, 115\penalty0 (2):\penalty0 293--346, 2003.

\bibitem[Davis(1991)]{davis1991semi}
C.~S. Davis.
\newblock Semi-parametric and non-parametric methods for the analysis of repeated measurements with applications to clinical trials.
\newblock \emph{Statistics in medicine}, 10\penalty0 (12):\penalty0 1959--1980, 1991.

\bibitem[Gyger et~al.(2026)Gyger, Furrer, and Sigrist]{gyger2025iterativemethodsfullscalegaussian}
T.~Gyger, R.~Furrer, and F.~Sigrist.
\newblock {Iterative Methods for Full-Scale Gaussian Process Approximations for Large Spatial Data}.
\newblock \emph{SIAM/ASA Journal on Uncertainty Quantification}, 14\penalty0 (1):\penalty0 142--167, 2026.

\bibitem[Harper and Konstan(2015)]{harper2015movielens}
F.~M. Harper and J.~A. Konstan.
\newblock The movielens datasets: History and context.
\newblock \emph{Acm transactions on interactive intelligent systems (tiis)}, 5\penalty0 (4):\penalty0 1--19, 2015.

\bibitem[Heaton et~al.(2019)Heaton, Datta, Finley, Furrer, Guinness, Guhaniyogi, Gerber, Gramacy, Hammerling, Katzfuss, et~al.]{heaton2019case}
M.~J. Heaton, A.~Datta, A.~O. Finley, R.~Furrer, J.~Guinness, R.~Guhaniyogi, F.~Gerber, R.~B. Gramacy, D.~Hammerling, M.~Katzfuss, et~al.
\newblock A case study competition among methods for analyzing large spatial data.
\newblock \emph{Journal of agricultural, biological and environmental Statistics}, 24\penalty0 (3):\penalty0 398--425, 2019.

\bibitem[Kleijn and van~der Vaart(2012)]{Miss-BvM}
B.~Kleijn and A.~van~der Vaart.
\newblock {The Bernstein-Von-Mises theorem under misspecification}.
\newblock \emph{Electronic Journal of Statistics}, 6\penalty0 (none):\penalty0 354 -- 381, 2012.

\bibitem[Koenker(2005)]{koenker2005quantile}
R.~Koenker.
\newblock \emph{Quantile regression}, volume~38.
\newblock Cambridge university press, 2005.

\bibitem[Koenker and Bassett(1978)]{1aa6b708-cbcf-320f-b3f3-298154ee2aeb}
R.~Koenker and G.~Bassett.
\newblock Regression quantiles.
\newblock \emph{Econometrica}, 46\penalty0 (1):\penalty0 33--50, 1978.

\bibitem[Le~Cam and Yang(2000)]{le2000asymptotics}
L.~M. Le~Cam and G.~L. Yang.
\newblock \emph{Asymptotics in statistics: some basic concepts}.
\newblock Springer Science \& Business Media, 2000.

\bibitem[Liu and Pierce(1994)]{liu1994note}
Q.~Liu and D.~A. Pierce.
\newblock A note on gauss—hermite quadrature.
\newblock \emph{Biometrika}, 81\penalty0 (3):\penalty0 624--629, 1994.

\bibitem[Miller(2021)]{miller2021asymptotic}
J.~W. Miller.
\newblock Asymptotic normality, concentration, and coverage of generalized posteriors.
\newblock \emph{Journal of Machine Learning Research}, 22\penalty0 (168):\penalty0 1--53, 2021.

\bibitem[Nickisch et~al.(2008)Nickisch, Rasmussen, et~al.]{nickisch2008approximations}
H.~Nickisch, C.~E. Rasmussen, et~al.
\newblock Approximations for binary gaussian process classification.
\newblock \emph{Journal of Machine Learning Research}, 9\penalty0 (10):\penalty0 2035--2078, 2008.

\bibitem[Potthoff and Roy(1964)]{10.1093/biomet/51.3-4.313}
R.~F. Potthoff and S.~N. Roy.
\newblock A generalized multivariate analysis of variance model useful especially for growth curve problems*.
\newblock \emph{Biometrika}, 51\penalty0 (3-4):\penalty0 313--326, 12 1964.

\bibitem[Romano et~al.(2019)Romano, Patterson, and Candes]{romano2019conformalized}
Y.~Romano, E.~Patterson, and E.~Candes.
\newblock Conformalized quantile regression.
\newblock \emph{Advances in neural information processing systems}, 32, 2019.

\bibitem[Rossellini et~al.(2024)Rossellini, Barber, and Willett]{rossellini2024integrating}
R.~Rossellini, R.~F. Barber, and R.~Willett.
\newblock Integrating uncertainty awareness into conformalized quantile regression.
\newblock In \emph{International Conference on Artificial Intelligence and Statistics}, pages 1540--1548. PMLR, 2024.

\bibitem[Schneider et~al.(2017)Schneider, Morsdorf, Schmid, Petchey, Hueni, Schimel, and Schaepman]{schneider2017mapping}
F.~D. Schneider, F.~Morsdorf, B.~Schmid, O.~L. Petchey, A.~Hueni, D.~S. Schimel, and M.~E. Schaepman.
\newblock Mapping functional diversity from remotely sensed morphological and physiological forest traits.
\newblock \emph{Nature communications}, 8\penalty0 (1):\penalty0 1441, 2017.

\bibitem[Simchoni and Rosset(2023)]{simchoni2023integrating}
G.~Simchoni and S.~Rosset.
\newblock Integrating random effects in deep neural networks.
\newblock \emph{Journal of Machine Learning Research}, 24\penalty0 (156):\penalty0 1--57, 2023.

\bibitem[Van~der Vaart(2000)]{van2000asymptotic}
A.~W. Van~der Vaart.
\newblock \emph{Asymptotic statistics}, volume~3.
\newblock Cambridge university press, 2000.

\bibitem[van~der Vaart and Wellner(1996)]{van1996m}
A.~W. van~der Vaart and J.~A. Wellner.
\newblock M-estimators.
\newblock In \emph{Weak Convergence and Empirical Processes: With Applications to Statistics}, pages 284--308. Springer, 1996.

\end{thebibliography}
\clearpage

\ifbiometrika
\begin{center}
{\Large\bfseries Supplementary Material for}\\[0.75em]
{\Large\bfseries Laplace Approximations for Mixed-Effects and Gaussian Process Quantile Regression}\\[1em]
{\large Andrea Nava and Fabio Sigrist}
\end{center}

\bigskip
\setcounter{page}{1}
\fi

\appendix

\begin{bibunit}
\setcounter{figure}{0}
\renewcommand{\thefigure}{A\arabic{figure}}
\setcounter{table}{0}
\renewcommand{\thetable}{A\arabic{table}}

\if0\jasa{
\section*{Appendix}\label{app:supp}
}\fi
\section{Proofs}

\subsection{Proofs of Theorems \ref{thm:FL_consistency} and \ref{thm:misspec_laplace}}
\label{appendix:bvm_theorem}

We start by proving Theorem \ref{thm:misspec_laplace} which does not assume a well specified model. The proof of Theorem \ref{thm:FL_consistency} follows as a special case of Theorem \ref{thm:misspec_laplace}. The main difference will be in the tools we use to verify the assumptions of both theorems. For Theorem \ref{thm:misspec_laplace} the main tools stem from the theory of M-estimators \citep{van1996m}, while for Theorem \ref{thm:FL_consistency} we leverage results from the classical LAN theory \citep{le2000asymptotics}. The proof builds on techniques developed by \citet{chernozhukov2003mcmc} and \citet{miller2021asymptotic}, adapted to the present setting.

\begin{proof}[Proof of Theorem~\ref{thm:misspec_laplace}]
We split the proof into 4 steps.

\paragraph{Step 0: Notation.}
Let us rewrite the likelihood $p(\mathbf{y}\mid\mathbf{b}, \psi)$ as $e^{-n f_n(\mathbf{b})}$ where
\[
f_n(\mathbf{b}) := -\frac{1}{n}\sum_{i=1}^n \log p(y_i\mid \mathbf{b},\psi),
\qquad\text{so that}\qquad
-\log p(\mathbf{y}\mid\mathbf{b},\psi)=n f_n(\mathbf{b}).
\]
Furthermore, denote the marginal likelihood by
$
z_n=\int_{\mathbb R^m} e^{-n f_n(\mathbf{b})}\,\pi(\mathbf{b})\,d\mathbf{b}.$
Define now the ``local'' parameter
\[
\mathbf{t} := \mathbf H_{n,\mathbf b^*}^{1/2}(\mathbf b-\hat{\mathbf b}_n),
\]
which zooms into an information-scaled neighborhood of $\hat{\mathbf b}_n$, since
$\mathbf b=\hat{\mathbf b}_n+\mathbf H_{n,\mathbf b^*}^{-1/2}\mathbf{t} \to \hat{\mathbf b}_n$ as $n\to\infty$ whenever $\|\mathbf H_{n,\mathbf b^*}^{-1/2}\|_{\mathrm{op}}\to 0$.

\paragraph{Step 1: Re-centering the LAN expansion at the posterior mode.}
The LAN expansion in Assumption~(i) is centered at the pseudo-true parameter $\mathbf b^*$.
To apply this expansion at the posterior mode, define
\[
\hat{\mathbf{t}}_n := \mathbf H_{n,\mathbf b^*}^{1/2}(\hat{\mathbf b}_n-\mathbf b^*).
\]
By Assumption~(ii),
\[
\hat{\mathbf{t}}_n
=
\mathbf H_{n,\mathbf b^*}^{-1/2}\sum_{i=1}^n \nabla_{\mathbf b}\log p(y_i\mid\mathbf b^*,\psi)
+o_{\mathbb P^*}(1).
\]
By Assumption~(i), the standardized score $\mathbf H_{n,\mathbf b^*}^{-1/2}\sum_{i=1}^n \nabla_{\mathbf b}\log p(y_i\mid\mathbf b^*,\psi)
$ is tight; hence $\hat{\mathbf t}_n=O_{\mathbb P^*}(1)$. Consequently, for any $\eta>0$ there exists a compact set $K_0\subset\mathbb R^m$ such that
\[
\mathbb P^*(\hat{\mathbf t}_n\in K_0)\ge 1-\eta
\]
for all sufficiently large $n$. On this event, and for any fixed compact set $K$, the shifted neighborhood
\[
\left\{\mathbf b^* + \mathbf H_{n,\mathbf b^*}^{-1/2}(\hat{\mathbf t}_n+\mathbf t):\; \mathbf t\in K\right\}
=
\left\{\hat{\mathbf b}_n+\mathbf H_{n,\mathbf b^*}^{-1/2}\mathbf t:\; \mathbf t\in K\right\}
\]
corresponds to local parameters $\hat{\mathbf t}_n+\mathbf t$ ranging over the compact set $K_0+K$. Writing the log-likelihood difference as
\begin{align*}
\ell_n(\hat{\mathbf b}_n + \mathbf H_{n,\mathbf b^*}^{-1/2}\mathbf t) - \ell_n(\hat{\mathbf b}_n)
&=
\big[\ell_n(\mathbf b^* + \mathbf H_{n,\mathbf b^*}^{-1/2}(\hat{\mathbf t}_n+\mathbf t)) - \ell_n(\mathbf b^*)\big]\\
&\quad-
\big[\ell_n(\mathbf b^* + \mathbf H_{n,\mathbf b^*}^{-1/2}\hat{\mathbf t}_n) - \ell_n(\mathbf b^*)\big],
\end{align*}
we may apply the LAN expansion in Assumption~(i) uniformly to the first bracketed term for $\mathbf t\in K$, since $\hat{\mathbf t}_n+\mathbf t\in K_0+K$, and likewise to the second bracketed term since $\hat{\mathbf t}_n\in K_0$. Thus,
\begin{align*}
\ell_n(\hat{\mathbf b}_n + \mathbf H_{n,\mathbf b^*}^{-1/2}\mathbf t) - \ell_n(\hat{\mathbf b}_n)
&=
\Big((\hat{\mathbf t}_n+\mathbf t)^\top \Delta_n -\tfrac12(\hat{\mathbf t}_n+\mathbf t)^\top(\hat{\mathbf t}_n+\mathbf t)\Big)\\
&\quad-
\Big(\hat{\mathbf t}_n^\top \Delta_n -\tfrac12\hat{\mathbf t}_n^\top\hat{\mathbf t}_n\Big)
+o_{\mathbb P^*}(1),
\end{align*}
where
\[
\Delta_n
:=
\mathbf H_{n,\mathbf b^*}^{-1/2}\sum_{i=1}^n \nabla_{\mathbf b}\log p(y_i\mid\mathbf b^*,\psi).
\]
Expanding the quadratic term gives
\[
-\tfrac12 \mathbf{t}^\top \mathbf{t}
+ \mathbf{t}^\top(\Delta_n-\hat{\mathbf{t}}_n)
+o_{\mathbb P^*}(1),
\]
uniformly for $\mathbf{t}\in K$. By Assumption~(ii), $\hat{\mathbf{t}}_n=\Delta_n+o_{\mathbb P^*}(1)$, and therefore $\mathbf{t}^\top(\Delta_n-\hat{\mathbf{t}}_n)=o_{\mathbb P^*}(1)$ uniformly on compact sets. Consequently,
\[
\ell_n(\hat{\mathbf b}_n + \mathbf H_{n,\mathbf b^*}^{-1/2}\mathbf{t}) - \ell_n(\hat{\mathbf b}_n)
=
-\tfrac12 \mathbf{t}^\top \mathbf{t}
+o_{\mathbb P^*}(1),
\]
uniformly for $\mathbf{t}$ ranging over compact sets. Thus, uniformly on compact subsets of the local $\mathbf{t}$-space, the log-likelihood re-centered at the posterior mode is asymptotically quadratic with leading term $-\frac{1}{2}\mathbf{t}^\top\mathbf{t}$.

\paragraph{Step 2: Convergence on compact sets.}
Define the sequence of functions $g_n$ by
\begin{equation}
g_n(\mathbf{t})
:=
\exp\!\Big(-n\big[f_n(\hat{\mathbf b}_n + \mathbf H_{n,\mathbf b^*}^{-1/2}\mathbf{t})-f_n(\hat{\mathbf b}_n)\big]\Big)\,
\pi(\hat{\mathbf b}_n + \mathbf H_{n,\mathbf b^*}^{-1/2}\mathbf{t}).
\end{equation}
Equivalently, $g_n$ is the unnormalized posterior kernel in the local parameterization. Indeed, if
\[
q_n(\mathbf t)
:=
\pi_n(\hat{\mathbf b}_n+\mathbf H_{n,\mathbf b^*}^{-1/2}\mathbf t\mid \mathbf y)\,
\big|\det(\mathbf H_{n,\mathbf b^*}^{-1/2})\big|
\]
denotes the posterior density of $\mathbf t$, then
\begin{equation}
\label{eq:post_kernel}
q_n(\mathbf t)
=
\frac{e^{-n f_n(\hat{\mathbf b}_n)}\,|\det(\mathbf H_{n,\mathbf b^*})|^{-1/2}}{z_n}\,g_n(\mathbf t).
\end{equation}
Define now the corresponding Gaussian approximation
\begin{equation}
g_0^{(n)}(\mathbf{t})
:=
\exp\!\big(-\tfrac12 \mathbf{t}^\top \mathbf{t}\big)\,\pi(\hat{\mathbf b}_n).
\end{equation}
Let $K\subset\mathbb R^m$ be compact, and define the event
\begin{equation}
\label{eq:En_misspec_infoscaled_t}
E_n(\varepsilon)
=
\left\{
\sup_{\mathbf t \in K}
\left|
\ell_n(\hat{\mathbf b}_n + \mathbf H_{n,\mathbf b^*}^{-1/2}\mathbf t)-\ell_n(\hat{\mathbf b}_n)
+\frac12 \mathbf t^\top \mathbf t
\right|
<
\varepsilon
\right\}.
\end{equation}
By Step 1, for every $\varepsilon>0$ we have $\mathbb P^*(E_n(\varepsilon))\to 1$. On $E_n(\varepsilon)$, the exponential factor in $g_n(\mathbf t)$ converges uniformly on $K$ to $
\exp\!\big(-\tfrac12 \mathbf t^\top \mathbf t\big).
$
Moreover, since $\hat{\mathbf b}_n\to \mathbf b^*$ in $\mathbb P^*$-probability and
\[
\sup_{\mathbf t\in K}\|\mathbf H_{n,\mathbf b^*}^{-1/2}\mathbf t\|
\le
\|\mathbf H_{n,\mathbf b^*}^{-1/2}\|_{\mathrm{op}} \sup_{\mathbf t\in K}\|\mathbf t\|
\to 0,
\]
prior continuity implies
\[
\sup_{\mathbf t\in K}
\left|
\pi(\hat{\mathbf b}_n + \mathbf H_{n,\mathbf b^*}^{-1/2}\mathbf t)-\pi(\hat{\mathbf b}_n)
\right|
\xrightarrow{\mathbb P^*} 0.
\]
Therefore,
\[
\sup_{\mathbf t\in K}\bigl|g_n(\mathbf t)-g_0^{(n)}(\mathbf t)\bigr|
\xrightarrow{\mathbb P^*} 0.
\]
It follows that
\[
\int_K g_n(\mathbf t)\,d\mathbf t \;\xrightarrow{\mathbb P^*}\; \int_K g_0^{(n)}(\mathbf t)\,d\mathbf t.
\]
Assume for now that we are allowed to extend the domain from a compact $K$ to $\mathbb R^m$. Then
\begin{align}
\int_{\mathbb R^m} g_0^{(n)}(\mathbf{t})\,d\mathbf{t}
&=
\pi(\hat{\mathbf b}_n)\int_{\mathbb R^m}\exp\!\big(-\tfrac12 \mathbf{t}^\top \mathbf{t}\big)\,d\mathbf{t} \nonumber\\
&=
\pi(\hat{\mathbf b}_n)(2\pi)^{m/2}. \label{eq:gauss_int_infoscaled_t}
\end{align}
On the other hand, by \eqref{eq:post_kernel} and the fact that $\int_{\mathbb{R}^m} q_n(\mathbf t)\, d\mathbf t = 1$, we see that
\begin{equation}
\label{eq:zn_changevar_t}
z_n
=
e^{-n f_n(\hat{\mathbf b}_n)}\,\big|\det(\mathbf H_{n,\mathbf b^*})\big|^{-1/2}\,
\int_{\mathbb R^m} g_n(\mathbf t)\,d\mathbf t.
\end{equation}
Combining \eqref{eq:gauss_int_infoscaled_t}--\eqref{eq:zn_changevar_t} (and extending to $\mathbb R^m$ in Step 3) yields
\[
z_n \sim e^{-n f_n(\hat{\mathbf b}_n)}\,\big|\det(\mathbf H_{n,\mathbf b^*})\big|^{-1/2}
\pi(\hat{\mathbf b}_n)\,(2\pi)^{m/2}
\;=:\; z_n^{LA,0}.
\]
\paragraph{Step 3: Extending to $\mathbb{R}^m$.}
We need to address two concerns: extending the domain for the convergence and for the integration from $K$ to $\mathbb R^m$.
Let
\[
K_n:=\{\mathbf t\in\mathbb R^m:\|\mathbf t\|\le M_n\},
\]
where $M_n\to\infty$ is as in Assumption~(iii). Then $K_n\uparrow\mathbb R^m$, and $K_n^c$ corresponds in the original $\mathbf b$-scale to the set
\[
\left\{\mathbf b:\;\|\mathbf H_{n,\mathbf b^*}^{1/2}(\mathbf b-\hat{\mathbf b}_n)\|\ge M_n\right\}.
\]
Assumption~(iii) enforces exactly a tail decay uniformly on such sets. In particular, on the event
\[
S_n
=
\left\{
\inf_{\|\mathbf H_{n,\mathbf b^*}^{1/2}(\mathbf b-\hat{\mathbf b}_n)\|\ge M_n}
\big[f_n(\mathbf b)-f_n(\hat{\mathbf b}_n)\big]\ge \delta
\right\},
\]
we have $\mathbb P^*(S_n)\to 1$ by Assumption~(iii), and for every $\mathbf t\in K_n^c$,
\[
f_n(\hat{\mathbf b}_n+\mathbf H_{n,\mathbf b^*}^{-1/2}\mathbf t)-f_n(\hat{\mathbf b}_n)\ge \delta.
\]
Therefore, on $S_n$,
\begin{align*}
\int_{K_n^c} g_n(\mathbf{t})\,d\mathbf{t}
&\le
e^{-n\delta}\int_{\mathbb R^m} \pi(\hat{\mathbf b}_n+\mathbf H_{n,\mathbf b^*}^{-1/2}\mathbf{t})\,d\mathbf{t}\\
&=
e^{-n\delta}\,\big|\det(\mathbf H_{n,\mathbf b^*})\big|^{1/2}\int_{\mathbb R^m}\pi(\mathbf b)\,d\mathbf b,
\end{align*}
where we used the same change of variables as in \eqref{eq:zn_changevar_t} in the last equality.
Under the information growth condition $\log\det(\mathbf H_{n,\mathbf b^*})=o(n)$ in Assumption~(iii), the right-hand side vanishes as $n\to\infty$.
Therefore, combining the uniform convergence on $K_n$ from Step~2 with the negligibility of the tail integral over $K_n^c$, we obtain
\[
\int_{\mathbb R^m} g_n(\mathbf{t})\,d\mathbf{t} \;\xrightarrow{\mathbb P^*}\; \int_{\mathbb R^m} g_0^{(n)}(\mathbf{t})\,d\mathbf{t},
\]
and thus $z_n^{LA,0}/z_n\to 1$ in $\mathbb P^*$-probability.
\paragraph{Step 4: Upgrading to Laplace with prior curvature.}
Recall that the Laplace approximation in the theorem is
\[
z_n^{LA}
=
p(\mathbf y\mid \hat{\mathbf b}_n,\psi)\,\pi(\hat{\mathbf b}_n)\,(2\pi)^{m/2}\,
\big|\det(\mathbf H_{n,\mathbf b^*}+\mathbf K^{-1})\big|^{-1/2}.
\]
By construction,
\[
\frac{z_n^{LA}}{z_n^{LA,0}}
=
\left(
\frac{\det(\mathbf H_{n,\mathbf b^*}+\mathbf K^{-1})}{\det(\mathbf H_{n,\mathbf b^*})}
\right)^{-1/2}.
\]
Since $\mathbf H_{n,\mathbf b^*}$ is symmetric positive definite, it admits a symmetric positive definite square root $\mathbf H_{n,\mathbf b^*}^{1/2}$. Hence
\[
\mathbf H_{n,\mathbf b^*}+\mathbf K^{-1}
=
\mathbf H_{n,\mathbf b^*}^{1/2}
\Bigl(
\mathbf I+\mathbf H_{n,\mathbf b^*}^{-1/2}\mathbf K^{-1}\mathbf H_{n,\mathbf b^*}^{-1/2}
\Bigr)
\mathbf H_{n,\mathbf b^*}^{1/2},
\]
and therefore, by multiplicativity of the determinant,
\begin{align*}
\det(\mathbf H_{n,\mathbf b^*}+\mathbf K^{-1})
&=
\det(\mathbf H_{n,\mathbf b^*}^{1/2})
\det\!\Bigl(
\mathbf I+\mathbf H_{n,\mathbf b^*}^{-1/2}\mathbf K^{-1}\mathbf H_{n,\mathbf b^*}^{-1/2}
\Bigr)
\det(\mathbf H_{n,\mathbf b^*}^{1/2}) \\
&=
\det(\mathbf H_{n,\mathbf b^*})
\det\!\Bigl(
\mathbf I+\mathbf H_{n,\mathbf b^*}^{-1/2}\mathbf K^{-1}\mathbf H_{n,\mathbf b^*}^{-1/2}
\Bigr).
\end{align*}
Thus,
\[
\frac{z_n^{LA}}{z_n^{LA,0}}
=
\det\!\left(
\mathbf I+\mathbf H_{n,\mathbf b^*}^{-1/2}\mathbf K^{-1}\mathbf H_{n,\mathbf b^*}^{-1/2}
\right)^{-1/2}.
\]
By Assumption~(iii), $M_n\to\infty$ and $M_n\|\mathbf H_{n,\mathbf b^*}^{-1/2}\|_{\mathrm{op}}\to 0$, hence
\[
\|\mathbf H_{n,\mathbf b^*}^{-1/2}\|_{\mathrm{op}}\to 0.
\]
Since $\mathbf K^{-1}$ is fixed, submultiplicativity of the operator norm gives
\[
\bigl\|
\mathbf H_{n,\mathbf b^*}^{-1/2}\mathbf K^{-1}\mathbf H_{n,\mathbf b^*}^{-1/2}
\bigr\|_{\mathrm{op}}
\le
\|\mathbf H_{n,\mathbf b^*}^{-1/2}\|_{\mathrm{op}}^2
\|\mathbf K^{-1}\|_{\mathrm{op}}
\to 0.
\]
Therefore
\[
\mathbf I+\mathbf H_{n,\mathbf b^*}^{-1/2}\mathbf K^{-1}\mathbf H_{n,\mathbf b^*}^{-1/2}
\to \mathbf I,
\]
and by continuity of the determinant,
\[
\det\!\left(
\mathbf I+\mathbf H_{n,\mathbf b^*}^{-1/2}\mathbf K^{-1}\mathbf H_{n,\mathbf b^*}^{-1/2}
\right)\to 1.
\]
Hence
\[
\frac{z_n^{LA}}{z_n^{LA,0}}\to 1
\qquad\text{in }\mathbb P^*\text{-probability}.
\]
Combining this with $z_n^{LA,0}/z_n\to 1$ yields $z_n^{LA}/z_n\to 1$, which concludes the proof.
\end{proof}

\subsection{Proof of Corollary \ref{cor:bvm_noprior}}\label{app_proof_BVM}
\begin{proof}[Proof]
We recall the objects and limits established in the proof of Theorem~\ref{thm:misspec_laplace}.
Let $\mathbf t=\mathbf H_{n,\mathbf b^*}^{1/2}(\mathbf b-\hat{\mathbf b}_n)$ and write
$\mathbf b=\hat{\mathbf b}_n+\mathbf H_{n,\mathbf b^*}^{-1/2}\mathbf t$.
Define the unnormalised posterior density in the $\mathbf t$-parametrisation by
\begin{equation}
\label{eq:gn_def_cor_t}
g_n(\mathbf t)
:=
\exp\!\Big(-n\big[f_n(\hat{\mathbf b}_n + \mathbf H_{n,\mathbf b^*}^{-1/2}\mathbf t)-f_n(\hat{\mathbf b}_n)\big]\Big)\,
\pi(\hat{\mathbf b}_n+\mathbf H_{n,\mathbf b^*}^{-1/2}\mathbf t).
\end{equation}
Let $q_n(\mathbf t)$ denote the posterior density of $\mathbf t$ induced by $\pi_n(\mathbf b\mid \mathbf y)$.
By the change of variables $\mathbf b\mapsto \mathbf t=\mathbf H_{n,\mathbf b^*}^{1/2}(\mathbf b-\hat{\mathbf b}_n)$,
its Jacobian is $\big|\det(\mathbf H_{n,\mathbf b^*}^{-1/2})\big|=\big|\det(\mathbf H_{n,\mathbf b^*})\big|^{-1/2}$ and thus
\begin{equation}
\label{eq:qn_def_cor_t}
q_n(\mathbf t)
=
\frac{e^{-n f_n(\hat{\mathbf b}_n+\mathbf H_{n,\mathbf b^*}^{-1/2}\mathbf t)}\,\pi(\hat{\mathbf b}_n+\mathbf H_{n,\mathbf b^*}^{-1/2}\mathbf t)}{z_n}\,
\big|\det(\mathbf H_{n,\mathbf b^*})\big|^{-1/2}.
\end{equation}
Combining \eqref{eq:gn_def_cor_t} and \eqref{eq:qn_def_cor_t}, we obtain the exact identity
\begin{equation}
\label{eq:gn_qn_relation_t}
g_n(\mathbf t)
=
q_n(\mathbf t)\; z_n\; e^{n f_n(\hat{\mathbf b}_n)}\; \big|\det(\mathbf H_{n,\mathbf b^*})\big|^{1/2}.
\end{equation}
Next define the Gaussian limit
\begin{equation}
\label{eq:g0_def_cor_t}
g_0(\mathbf t)
:=
\exp\!\Big(-\tfrac12 \mathbf t^\top \mathbf t\Big)\,\pi(\mathbf b^*).
\end{equation}

\medskip
\noindent
\textbf{Step 1: $L^1$ convergence of the unnormalised densities.}
Let
\[
g_0^{(n)}(\mathbf t):=\exp\!\Big(-\tfrac12 \mathbf t^\top \mathbf t\Big)\,\pi(\hat{\mathbf b}_n).
\]
From Steps~2--3 of the proof of Theorem~\ref{thm:misspec_laplace}, we established that
\begin{equation}
\label{eq:l1_unnorm_cor_t_aux}
\int_{\mathbb R^m} \big|g_n(\mathbf t)-g_0^{(n)}(\mathbf t)\big|\,d\mathbf t \xrightarrow{\mathbb P^*} 0.
\end{equation}
Moreover, since $\hat{\mathbf b}_n\to\mathbf b^*$ in $\mathbb P^*$-probability and $\pi$ is continuous at $\mathbf b^*$,
\[
\big|\pi(\hat{\mathbf b}_n)-\pi(\mathbf b^*)\big|\xrightarrow{\mathbb P^*}0.
\]
Therefore
\begin{align*}
\int_{\mathbb R^m} \big|g_0^{(n)}(\mathbf t)-g_0(\mathbf t)\big|\,d\mathbf t
&=
\big|\pi(\hat{\mathbf b}_n)-\pi(\mathbf b^*)\big|
\int_{\mathbb R^m}\exp\!\Big(-\tfrac12 \mathbf t^\top \mathbf t\Big)\,d\mathbf t 
\xrightarrow{\mathbb P^*}0.
\end{align*}
By the triangle inequality, it follows that
\begin{equation}
\label{eq:l1_unnorm_cor_t}
\int_{\mathbb R^m} \big|g_n(\mathbf t)-g_0(\mathbf t)\big|\,d\mathbf t \xrightarrow{\mathbb P^*} 0.
\end{equation}

\medskip
\noindent
\textbf{Step 2: convergence of the normalising constants.}
Define the normalising constants
\begin{equation}
\label{eq:an_def_cor_t}
a_n := \int_{\mathbb R^m} g_n(\mathbf t)\,d\mathbf t,
\qquad
a := \int_{\mathbb R^m} g_0(\mathbf t)\,d\mathbf t.
\end{equation}
By definition of $g_n$ and \eqref{eq:gn_qn_relation_t},
\[
a_n
=
\int_{\mathbb R^m} g_n(\mathbf t)\,d\mathbf t
=
z_n\,e^{n f_n(\hat{\mathbf b}_n)}\,\big|\det(\mathbf H_{n,\mathbf b^*})\big|^{1/2}
\int_{\mathbb R^m} q_n(\mathbf t)\,d\mathbf t
=
z_n\,e^{n f_n(\hat{\mathbf b}_n)}\,\big|\det(\mathbf H_{n,\mathbf b^*})\big|^{1/2},
\]
since $\int q_n(\mathbf t)\,d\mathbf t=1$.
Moreover, by \eqref{eq:g0_def_cor_t} the constant $a$ is finite and equals
\[
a
=
\pi(\mathbf b^*)\int_{\mathbb R^m}\exp\!\Big(-\tfrac12 \mathbf t^\top \mathbf t\Big)\,d\mathbf t
=
\pi(\mathbf b^*)(2\pi)^{m/2}.
\]
In the proof of Theorem~\ref{thm:misspec_laplace} we proved that
\[
z_n\,e^{n f_n(\hat{\mathbf b}_n)}\,\big|\det(\mathbf H_{n,\mathbf b^*})\big|^{1/2}
\xrightarrow{\mathbb P^*}
\pi(\mathbf b^*)(2\pi)^{m/2},
\]
hence
\begin{equation}
\label{eq:an_conv_cor_t}
a_n \xrightarrow{\mathbb P^*} a,
\qquad\text{with } a>0.
\end{equation}

\medskip
\noindent
\textbf{Step 3: $L^1$ convergence of the \emph{normalised} densities.}
Define the normalised densities
\begin{equation}
\label{eq:qn_q_def_cor_t}
\tilde q_n(\mathbf t):=\frac{g_n(\mathbf t)}{a_n},
\qquad
\tilde q(\mathbf t):=\frac{g_0(\mathbf t)}{a}.
\end{equation}
By construction, $\tilde q_n$ is a probability density on $\mathbb R^m$. Note that by
\eqref{eq:gn_qn_relation_t} and the definition of $a_n$ in \eqref{eq:an_def_cor_t},
\[
\tilde q_n(\mathbf t)=\frac{g_n(\mathbf t)}{a_n}
=
\frac{q_n(\mathbf t)\,z_n\,e^{n f_n(\hat{\mathbf b}_n)}\,|\det(\mathbf H_{n,\mathbf b^*})|^{1/2}}
{z_n\,e^{n f_n(\hat{\mathbf b}_n)}\,|\det(\mathbf H_{n,\mathbf b^*})|^{1/2}}
=
q_n(\mathbf t),
\]
so $\tilde q_n$ coincides exactly with the posterior density of $\mathbf t$.
We now bound the $L^1$ distance:
\[
\int_{\mathbb R^m}\big|\tilde q_n(\mathbf t)-\tilde q(\mathbf t)\big|\,d\mathbf t
=
\int_{\mathbb R^m}\left|\frac{g_n(\mathbf t)}{a_n}-\frac{g_0(\mathbf t)}{a}\right|\,d\mathbf t.
\]
Add and subtract $g_0(\mathbf t)/a_n$ and use the triangle inequality:
\begin{align*}
\int \left|\frac{g_n}{a_n}-\frac{g_0}{a}\right|
&\le
\int \left|\frac{g_n-g_0}{a_n}\right|
+
\int \left|g_0\left(\frac{1}{a_n}-\frac{1}{a}\right)\right| \\
&=
\frac{1}{a_n}\int |g_n-g_0|
+
\left|\frac{1}{a_n}-\frac{1}{a}\right|\int g_0.
\end{align*}
Since $a_n\xrightarrow{\mathbb P^*} a>0$ by \eqref{eq:an_conv_cor_t}, we have $1/a_n\xrightarrow{\mathbb P^*}1/a$
and in particular $(1/a_n)$ is bounded in probability. Together with \eqref{eq:l1_unnorm_cor_t} and
$\int g_0 = a <\infty$, this implies
\[
\int_{\mathbb R^m}\big|\tilde q_n(\mathbf t)-\tilde q(\mathbf t)\big|\,d\mathbf t \xrightarrow{\mathbb P^*} 0.
\]
Finally, $\tilde q(\mathbf t)=g_0(\mathbf t)/a$ is exactly the standard Gaussian density
$\mathcal N\!\left(\mathbf t;0,\mathbf I_m\right)$, hence the posterior density of
$\mathbf t=\mathbf H_{n,\mathbf b^*}^{1/2}(\mathbf b-\hat{\mathbf b}_n)$ converges in $L^1$ to this Gaussian density
in $\mathbb P^*$-probability.
\end{proof}

\subsection{Consistency of the TKC estimator}
\label{app:proof_cc}
In this section, we first show that the triangular kernel curvature (TKC) estimator corresponds to a triangular kernel density estimator. We then use this result together with the CC assumption to show that the TKC estimator is consistent for the true curvature when evaluated at $\boldsymbol{\mu}^*$.

\begin{proof}[Proof of proposition \ref{prop:cc_consistency}]
Fix $h=\Delta\mu$. For a single observation $y_i$ and location $\mu_i^*$, define $z_i:=y_i-\mu_i^*$ and consider the symmetric second difference of the pinball loss
\[
D_h(z_i):=\rho_\tau(z_i+h)-2\rho_\tau(z_i)+\rho_\tau(z_i-h),\qquad 
\rho_\tau(u)=u\big(\tau-\mathbbm 1\{u<0\}\big).
\]
A direct case distinction according to whether the interval $(z_i-h,z_i+h)$ crosses the kink at $0$ yields
\begin{equation}
\label{eq:kernel_repr}
D_h(z_i)=(h-|z_i|)\,\mathbbm 1\{|z_i|<h\}.
\end{equation}
Summing \eqref{eq:kernel_repr} over $i$ and scaling gives the TKC estimator at $\boldsymbol\mu^*$:
\[
\hat C_h(\boldsymbol\mu^*)
=\frac{1}{n\lambda h^2}\sum_{i=1}^n D_h(y_i-\mu_i^*)
=\frac{1}{n\lambda h^2}\sum_{i=1}^n (h-|y_i-\mu_i^*|)\,\mathbbm 1\{|y_i-\mu_i^*|<h\}.
\]
Equivalently, with $K(u):=(1-|u|)_+$,
\[
\hat C_h(\boldsymbol\mu^*)=\frac{1}{\lambda}\cdot \frac{1}{nh}\sum_{i=1}^n K\!\Big(\frac{y_i-\mu_i^*}{h}\Big).
\]
We now establish consistency. Write
\[
X_{n,i}:=\frac{1}{h}K\!\Big(\frac{y_i-\mu_i^*}{h}\Big),
\qquad\text{so that}\qquad
\hat C_h(\boldsymbol\mu^*)=\frac{1}{\lambda}\cdot \frac{1}{n}\sum_{i=1}^n X_{n,i}.
\]
For the mean, using the change of variables $u=(y-\mu_i^*)/h$,
\[
\mathbb E_{\mathbb P^*}[X_{n,i}]
=\int K(u)\, f_i^*(\mu_i^*+hu)\,du,
\]
where $f_i^*$ denotes the (true) density of $Y_i$ under $\mathbb P^*$. For brevity, all expectations and probabilities below are taken under $\mathbb P^*$. Since $K$ is supported on $[-1,1]$, bounded, and satisfies $\int K(u)\,du=1$, continuity of $f_i^*$ at $\mu_i^*$ implies
\[
\mathbb E[X_{n,i}] \to f_i^*(\mu_i^*) \quad\text{as }h\to0.
\]
Under \textbf{(CC)}, $f_i^*(\mu_i^*)=c^*(\boldsymbol\mu^*)$ for all $i$, hence
\[
\mathbb E\!\left[\hat C_h(\boldsymbol\mu^*)\right]
=\frac{1}{\lambda}\cdot \frac{1}{n}\sum_{i=1}^n \mathbb E[X_{n,i}]
\longrightarrow \frac{c^*(\boldsymbol\mu^*)}{\lambda}.
\]
For the variance, since $X_{n,i}$ depends only on $Y_i-\mu_i^*$ and the observations $Y_i$ are independent, the variables $X_{n,i}$ are independent and thus
\[
\mathrm{Var}\!\left(\hat C_h(\boldsymbol\mu^*)\right)
=\frac{1}{\lambda^2}\cdot \frac{1}{n^2}\sum_{i=1}^n \mathrm{Var}(X_{n,i})
\le \frac{1}{\lambda^2}\cdot \frac{1}{n^2}\sum_{i=1}^n \mathbb E[X_{n,i}^2].
\]
Since $0\le K\le 1$ and $K(u)=0$ for $|u|>1$,
\[
\mathbb E[X_{n,i}^2]\le \frac{1}{h^2}\,\mathbb P^*\big(|y_i-\mu_i^*|<h\big).
\]
By continuity of $f_i^*$ at $\mu_i^*$, $\mathbb P^*(|y_i-\mu_i^*|<h)=\int_{\mu_i^*-h}^{\mu_i^*+h} f_i^*(y)\,dy=O(h)$ as $h\to0$, hence $\mathbb E[X_{n,i}^2]=O(h^{-1})$ and therefore
\[
\mathrm{Var}\!\left(\hat C_h(\boldsymbol\mu^*)\right)=O\!\left(\frac{1}{nh}\right)\to 0
\qquad\text{whenever } nh\to\infty.
\]
We now combine the mean and variance bounds.
Let $\bar X_n := \frac{1}{n}\sum_{i=1}^n X_{n,i}$ so that $\hat C_h(\boldsymbol\mu^*)=\bar X_n/\lambda$. Decompose
\[
\hat C_h(\boldsymbol\mu^*)-\frac{c^*(\boldsymbol\mu^*)}{\lambda}
=
\underbrace{\frac{1}{\lambda}\big(\bar X_n-\mathbb E[\bar X_n]\big)}_{\text{stochastic term}}
+
\underbrace{\frac{1}{\lambda}\big(\mathbb E[\bar X_n]-c^*(\boldsymbol\mu^*)\big)}_{\text{bias term}}.
\]
The bias term converges to zero since $\mathbb E[\bar X_n]\to c^*(\boldsymbol\mu^*)$ as $h\to0$ under continuity and \textbf{(CC)}. For the stochastic term, Chebyshev’s inequality gives for any $\varepsilon>0$,
\[
\mathbb P^*\!\left(\left|\bar X_n-\mathbb E[\bar X_n]\right|>\lambda\varepsilon\right)
\le
\frac{\mathrm{Var}(\bar X_n)}{\lambda^2\varepsilon^2}
=
\frac{1}{\lambda^2\varepsilon^2}\cdot \frac{1}{n^2}\sum_{i=1}^n \mathrm{Var}(X_{n,i})
=O\!\left(\frac{1}{nh}\right)\xrightarrow[n\to\infty]{}0,
\]
where we used $\mathrm{Var}(X_{n,i})\le \mathbb E[X_{n,i}^2]=O(h^{-1})$ and $nh\to\infty$. Hence $\bar X_n-\mathbb E[\bar X_n]\xrightarrow{\mathbb P^*}0$, and together with the bias term this implies $\hat C_h(\boldsymbol\mu^*)\xrightarrow{\mathbb P^*}c^*(\boldsymbol\mu^*)/\lambda$.
\end{proof}

\subsection{Proof of Laplace consistency with plug-in curvature}
\label{appendix:proof_plugin}
\begin{proof}[Proof of Proposition~\ref{prop:laplace_plugin_cc}]
Write the population-curvature Laplace approximation from Theorem~\ref{thm:misspec_laplace} as
\[
z_n^{LA}
=
p(\mathbf y\mid \hat{\mathbf b}_n,\psi)\,\pi(\hat{\mathbf b}_n)\,(2\pi)^{m/2}\,
\big|\det(\mathbf H_{n,\mathbf b^*}+\mathbf K^{-1})\big|^{-1/2},
\]
and define the plug-in version
\[
\hat z_n^{LA}
=
p(\mathbf y\mid \hat{\mathbf b}_n,\psi)\,\pi(\hat{\mathbf b}_n)\,(2\pi)^{m/2}\,
\big|\det(\hat{\mathbf H}_{n,\mathbf b^*}+\mathbf K^{-1})\big|^{-1/2}.
\]
All terms coincide except for the determinant factor, hence
\begin{equation}
\label{eq:ratio_plugin_laplace}
\frac{\hat z_n^{LA}}{z_n^{LA}}
=
\left(
\frac{\det(\mathbf H_{n,\mathbf b^*}+\mathbf K^{-1})}{\det(\hat{\mathbf H}_{n,\mathbf b^*}+\mathbf K^{-1})}
\right)^{1/2}.
\end{equation}
Under \textbf{(CC)},
\[
\mathbf H_{n,\mathbf b^*}=\alpha\,\mathbf Z^\top\mathbf Z,
\qquad
\hat{\mathbf H}_{n,\mathbf b^*}=\hat\alpha_n\,\mathbf Z^\top\mathbf Z,
\]
with
\[
\alpha:=\frac{c^*(\boldsymbol\mu^*)}{\lambda}>0,
\qquad
\hat\alpha_n:=\frac{\hat c_n(\hat{\boldsymbol\mu}_n)}{\lambda}.
\]
Let $\mathbf A:=\mathbf Z^\top\mathbf Z$, so that $\mathbf A$ is symmetric positive semidefinite.
Since $\mathbb P^*(\hat c_n(\hat{\boldsymbol\mu}_n)>0)\to 1$ and $\mathbf K^{-1}$ is positive semidefinite, both
$\mathbf K^{-1}+\alpha\mathbf A$ and $\mathbf K^{-1}+\hat\alpha_n\mathbf A$ are positive definite with probability tending to one.

Consider the function
\[
\varphi(x):=\log\det(\mathbf K^{-1}+x\,\mathbf A),\qquad x>0.
\]
This function is continuous on $(0,\infty)$: for each $x>0$, the matrix $\mathbf K^{-1}+x\mathbf A$ is symmetric positive definite, its eigenvalues depend continuously on $x$, and $\log\det$ is the sum of the logarithms of these eigenvalues.
Therefore, by the continuous mapping theorem and the assumed consistency
$
\hat\alpha_n\xrightarrow{\mathbb P^*}\alpha,
$
we obtain
\[
\varphi(\hat\alpha_n)-\varphi(\alpha)\xrightarrow{\mathbb P^*}0,
\]
that is,
\[
\log\frac{\det(\mathbf K^{-1}+\hat\alpha_n\mathbf A)}{\det(\mathbf K^{-1}+\alpha\mathbf A)}
\xrightarrow{\mathbb P^*}0.
\]
Exponentiating yields
\[
\frac{\det(\mathbf K^{-1}+\alpha\mathbf A)}{\det(\mathbf K^{-1}+\hat\alpha_n\mathbf A)}
\xrightarrow{\mathbb P^*}1.
\]
Plugging this into \eqref{eq:ratio_plugin_laplace} gives
\[
\frac{\hat z_n^{LA}}{z_n^{LA}}\xrightarrow{\mathbb P^*}1.
\]
Finally, by assumption,
\[
\frac{z_n^{LA}}{z_n}\xrightarrow{\mathbb P^*}1.
\]
Hence Slutsky's theorem implies
\[
\frac{\hat z_n^{LA}}{z_n}
=
\frac{\hat z_n^{LA}}{z_n^{LA}}\cdot \frac{z_n^{LA}}{z_n}
\xrightarrow{\mathbb P^*}1.
\]
\end{proof}

\subsection{Discussion of the assumptions of Theorems ~\ref{thm:FL_consistency} and \ref{thm:misspec_laplace}}
\label{app:discussion_assumptions_misspec}

We briefly discuss the role of each assumption in Theorems ~\ref{thm:FL_consistency} and \ref{thm:misspec_laplace}, how it relates to existing asymptotic theory in the i.i.d.\ case, and what additional conditions are typically needed in regression settings. To unify the discussion, we write generically $\mathbf H_{n, \textbf{b}^*}$ for the relevant local curvature matrix, which equals the Fisher information in the correctly specified case and the expected Hessian of the population criterion in the misspecified case.

\paragraph{\textbf{Assumption (i): LAN around the (pseudo-)true parameter.}}
Its role in the proof is to provide a local quadratic expansion of the log-likelihood on the natural information scale, which is the starting point for both the Gaussian approximation of the posterior and the Laplace approximation of the marginal likelihood. In particular, the proof only uses the expansion locally, after rescaling by $\mathbf H_{n,\mathbf b^*}^{-1/2}$.

In the correctly specified i.i.d.\ case, a standard route to such an expansion is differentiability in quadratic mean. In regular parametric models, DQM implies local asymptotic normality at the classical root-$n$ scale; see, for example, \cite{van2000asymptotic}[Theorem 7.2]. In that setting, one typically writes
\[
\mathbf H_{n,\mathbf b^*}=n\,\mathbf H_{\mathbf b^*},
\]
so that the information-scaled perturbation $\mathbf H_{n,\mathbf b^*}^{-1/2}\mathbf t$ reduces to the familiar $n^{-1/2}$ localization.

Under misspecification, the relevant local expansion is instead centered at the pseudo-true parameter $\mathbf b^*$ and involves the curvature of the population criterion rather than the Fisher information of a correctly specified model. In the i.i.d.\ setting, this is the framework of misspecified Bernstein--von Mises theory; see \cite{Miss-BvM}. Their conditions are formulated in terms of local regularity of the likelihood ratio around the pseudo-true parameter and yield precisely the type of local quadratic expansion required here.

In regression settings, verifying Assumption~(i) requires additional conditions beyond those needed in the i.i.d.\ location case. First, one needs identifiability of the pseudo-true parameter, which typically amounts to full-rank or positive-definiteness conditions on the relevant design matrix. Second, one needs a central limit theorem for the normalized score, which in linear quantile regression corresponds to a multivariate CLT for design-weighted quantile scores. Finally, positivity and local regularity of the true conditional density at the target quantile ensure that the expected Hessian is well defined and positive definite. These ingredients are standard in regression quantile asymptotics; see \cite{1aa6b708-cbcf-320f-b3f3-298154ee2aeb} and \cite{koenker2005quantile}.

\paragraph{\textbf{Assumption (ii): asymptotic linearity of the mode.}}
Its role in the proof is to ensure that the posterior mode lies on the same local scale as the LAN expansion, so that the local quadratic approximation can be recentered at $\hat{\mathbf b}_n$. This is exactly what allows Step~1 of the proof to transfer the expansion from $\mathbf b^*$ to a neighborhood of the mode.

In the i.i.d.\ setting, this is the familiar asymptotic linearity of quantile-type M-estimators. For the sample quantile, it is the classical Bahadur representation; more generally, for M-estimators based on convex loss functions one obtains root-$n$ consistency and asymptotic linearity under standard smoothness and identifiability conditions. 

In regression settings, the same conclusion requires additional design conditions. In particular, unlike the i.i.d.\ location model, identifiability is no longer automatic from the pinball loss alone: it must be induced by the design through full-rank conditions. Once this is in place, asymptotic normality and Bahadur-type representations of regression quantiles provide the natural analogue of asymptotic linearity. Since the asymmetric Laplace criterion coincides with the pinball loss up to constants, the leading first-order expansion of the posterior mode is governed by the same score structure. The prior contributes an additional smooth term, but under the information-growth conditions of the theorem this does not affect the leading stochastic term. See again \cite{1aa6b708-cbcf-320f-b3f3-298154ee2aeb} and \cite{koenker2005quantile}.

\paragraph{\textbf{Assumption (iii): separation and information growth.}}
Its role in the proof is to control the tail contribution to the marginal likelihood outside the local quadratic region. The separation part ensures that, once one leaves an expanding ball in the local $\mathbf t$-coordinates, the log-likelihood is uniformly lower by order $n$. This makes the tail contribution negligible relative to the local Gaussian part. The information-growth part ensures that the Hessian-based reparameterization does not offset this exponential decay through the Jacobian term.

In the classical i.i.d.\ setting, such localization is often hidden inside the usual root-$n$ normalization, since one writes $\mathbf H_{n,\mathbf b^*}=n\mathbf H_{\mathbf b^*}$ and the information growth is then automatic. Our formulation makes this dependence explicit by working directly with $\mathbf H_{n,\mathbf b^*}^{-1/2}$. This is particularly useful in settings such as grouped random effects, where information accumulates anisotropically through matrices such as $\mathbf Z^\top\mathbf Z$ rather than through a single scalar factor $n$.

The separation condition is stronger than what is strictly necessary for posterior concentration alone. In more classical Bernstein--von Mises arguments, related control is often obtained via uniformly consistent tests or equivalent localization conditions. In generalized or misspecified posterior settings, however, such conditions are less transparent, and stronger direct separation assumptions are often used instead. This is also the perspective taken by \cite{miller2021asymptotic}, who emphasizes that such assumptions are stronger than necessary but provide a clean route to concentration and Laplace approximations in generalized posterior settings.

In regression settings, Assumption~(iii) should be read as a nonlocal identifiability condition on the sample criterion, ensuring that the local quadratic region around the mode captures all asymptotically relevant posterior mass. Verifying it typically requires a uniform law of large numbers for the empirical criterion together with strict local convexity of the population risk around the pseudo-true parameter.

\paragraph{\textbf{Assumption (iv): prior regularity.}}
Its role in the proof is twofold. First, local positivity ensures that the prior does not eliminate posterior mass near the pseudo-true parameter. Second, local smoothness ensures that the prior can be treated as approximately constant on the local information scale, and in the refined Laplace approximation it also justifies the inclusion of the prior curvature term.

For first-order Bernstein--von Mises type arguments, local positivity and continuity of the prior are typically sufficient. In our theorem, however, we also cover the practical Laplace approximation based on the curvature of the full log-posterior, which includes the prior curvature term $\mathbf K^{-1}=-\nabla^2\log\pi(\mathbf b)\rvert_{\mathbf b=\mathbf b^*}$. This is why local twice differentiability is required when the prior curvature is retained in the approximation. By contrast, generalized-posterior results such as those of \cite{miller2021asymptotic} focus on likelihood curvature alone and therefore do not need this extra smoothness requirement.

In our setting, Assumption~(iv) is immediate for the Gaussian priors used throughout: they are strictly positive, smooth, and twice continuously differentiable, so both the local positivity and the prior-curvature term are automatic.

\clearpage
\section{Estimating the scale parameter helps mitigate misspecification of the Fisher-Laplace approximation}\label{fisher_laplace_missp}
\begin{figure}[ht!]
    \centering
    \includegraphics[width=1\linewidth]{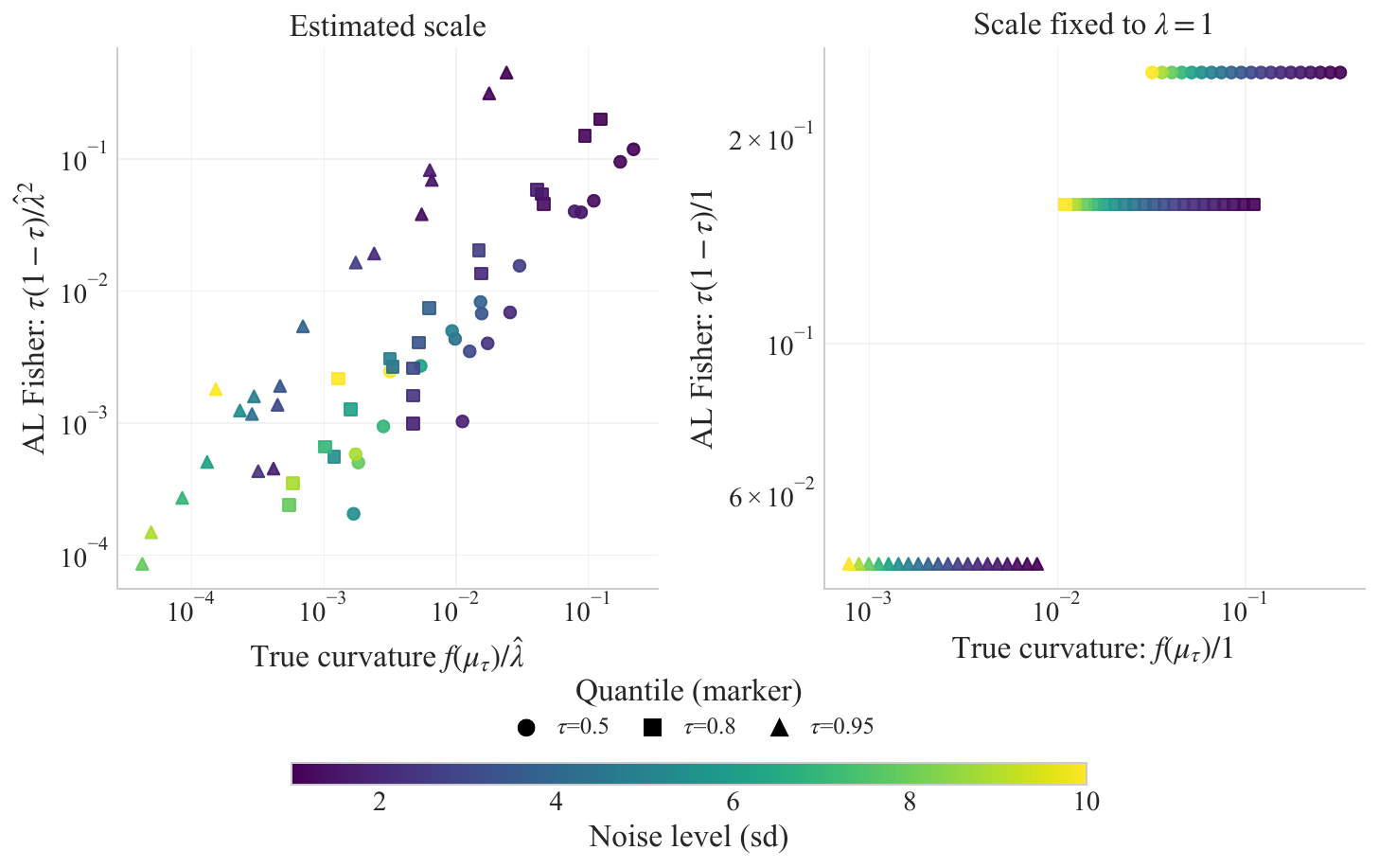}
    \caption{
    \textbf{Comparison of asymmetric Laplace Fisher curvature with the asymptotic curvature}.
    \textit{Left:} Fisher curvature $\tau(1-\tau)/\hat \lambda^2$ using the estimated asymmetric Laplace scale $\hat \lambda$ plotted against the correct asymptotic curvature $f(\mu_\tau)/\hat \lambda$. \textit{Right:} same comparison when the Fisher curvature is computed with fixed scale $\lambda = 1$, i.e. $\tau(1-\tau)$ vs. $f(\mu_\tau)$. Marker shape indicates the target quantile $\tau$; color indicates the noise level (standard deviation). Using an estimated $\hat\lambda$ yields a Fisher curvature $\tau(1-\tau)/\hat\lambda^2$ that tracks the correct asymptotic curvature $f(\mu_\tau)/\hat\lambda$ much more closely (left panel) than fixing $\lambda=1$ (right panel).
    }
    \label{fig:estimated_scale}
\end{figure}

\clearpage
\section{Supplementary results referenced in the main text}
\label{app:supplementary}
Best performance ($\pm$ 2 std. err.) are in bold. A missing value --- indicates that the method is not applicable or convergence failure.
\vspace{-0.25cm}

\subsection{Additional results for the simulation experiments}

\subsubsection{Single-level Grouped Random Effects: Student-t noise distribution}
\label{app:student_t}

\begin{table}[ht!]
\centering
\begin{tabular}{lccccccc}
\toprule
\multicolumn{3}{c}{} & \multicolumn{5}{c}{RMSE} \\
\cmidrule(lr){4-8}
Noise & $m$ & $n_j$ & \multicolumn{1}{c}{TKC} & \multicolumn{1}{c}{Fisher} & \multicolumn{1}{c}{BayesQR} & \multicolumn{1}{c}{BRMS} & \multicolumn{1}{c}{LQMM} \\
\midrule
t & 100 & 10 & \textbf{0.25$_\pm{\text{\tiny 0.012}}$} & \textbf{0.25$_\pm{\text{\tiny 0.011}}$} & \textbf{0.24$_\pm{\text{\tiny 0.013}}$} & 0.48$_\pm{\text{\tiny 0.0062}}$ & \textbf{0.26$_\pm{\text{\tiny 0.014}}$} \\
t & 100 & 100 & \textbf{0.067$_\pm{\text{\tiny 0.0021}}$} & \textbf{0.067$_\pm{\text{\tiny 0.0021}}$} & \textbf{0.067$_\pm{\text{\tiny 0.002}}$} & 0.49$_\pm{\text{\tiny 0.003}}$ & 0.36$_\pm{\text{\tiny 0.11}}$ \\
t & 100 & 500 & \textbf{0.029$_\pm{\text{\tiny 2.8e-04}}$} & \textbf{0.029$_\pm{\text{\tiny 2.8e-04}}$} & \textbf{0.029$_\pm{\text{\tiny 3.0e-04}}$} & --- & 0.6$_\pm{\text{\tiny 0.18}}$ \\
\bottomrule
\end{tabular}
\caption{RMSE for quantile predictions when using a Student-t noise distribution.}
\end{table}
\vspace{-0.25cm}
\begin{table}[htbp]
\centering
\begin{tabular}{lccccccc}
\toprule
\multicolumn{3}{c}{} & \multicolumn{5}{c}{Runtime} \\
\cmidrule(lr){4-8}
Noise & $m$ & $n_j$ & \multicolumn{1}{c}{TKC} & \multicolumn{1}{c}{Fisher} & \multicolumn{1}{c}{BayesQR} & \multicolumn{1}{c}{BRMS} & \multicolumn{1}{c}{LQMM} \\
\midrule
t & 100 & 10 & 0.18$_\pm{\text{\tiny 0.016}}$ & 0.26$_\pm{\text{\tiny 0.15}}$ & 48.5$_\pm{\text{\tiny 0.5}}$ & 467$_\pm{\text{\tiny 10}}$ & \textbf{0.04$_\pm{\text{\tiny 0.0012}}$} \\
t & 100 & 100 & 6.7$_\pm{\text{\tiny 0.57}}$ & 1.4$_\pm{\text{\tiny 0.13}}$ & 461$_\pm{\text{\tiny 2.1}}$ & 2781$_\pm{\text{\tiny 124}}$ & \textbf{0.3$_\pm{\text{\tiny 0.026}}$} \\
t & 100 & 500 & 20.4$_\pm{\text{\tiny 2.2}}$ & \textbf{1.1$_\pm{\text{\tiny 0.14}}$} & 2315$_\pm{\text{\tiny 11.5}}$ & --- & 3.6$_\pm{\text{\tiny 0.77}}$ \\
\bottomrule
\end{tabular}
\caption{Runtime (s) when using a Student-t noise distribution.}
\end{table}

\vspace{-0.25cm}
\begin{table}[htbp]
\centering
\begin{tabular}{lcccccc}
\toprule
\multicolumn{3}{c}{} & \multicolumn{4}{c}{MSE of Random Effect Variance} \\
\cmidrule(lr){4-7}
Noise & $m$ & $n_j$ & \multicolumn{1}{c}{TKC} & \multicolumn{1}{c}{Fisher} & \multicolumn{1}{c}{BRMS} & \multicolumn{1}{c}{LQMM} \\
\midrule
t & 100 & 10 & 0.031 & \textbf{0.011} & 0.028 & 0.038 \\
t & 100 & 100 & \textbf{0.0033} & 0.021 & --- & 0.43 \\
t & 100 & 500 & \textbf{0.013} & 0.015 & --- & 0.78 \\

\bottomrule
\end{tabular}
\caption{Hyperparameter MSE when using a Student-t noise distribution.}
\label{tab:hyperparams_mse}
\end{table}

\subsubsection{Single-level Grouped Random Effects: Hyperparameter MSE}
\label{app:single_hyper}
\begin{table}[ht!]
\centering
\begin{tabular}{lcccccc}
\toprule
\multicolumn{3}{c}{} & \multicolumn{4}{c}{MSE of Random Effect Variance (1)} \\
\cmidrule(lr){4-7}
Noise & $m$ & $n_j$ & \multicolumn{1}{c}{TKC} & \multicolumn{1}{c}{Fisher} & \multicolumn{1}{c}{BRMS} & \multicolumn{1}{c}{LQMM} \\
\midrule
ALD & 100 & 10 & 0.052 & 0.021 & \textbf{0.019} & 0.055 \\
ALD & 100 & 100 & \textbf{6.7e-05} & 0.004 & --- & 0.71 \\
ALD & 100 & 500 & 4.9e-06 & \textbf{4.6e-06} & --- & 0.72 \\
N & 100 & 10 & 0.016 & \textbf{0.013} & 0.017 & 0.046 \\
N & 100 & 100 & \textbf{0.0017} & 0.016 & 0.0097 & 0.56 \\
N & 100 & 500 & \textbf{7.9e-04} & 0.0023 & --- & 0.8 \\

\bottomrule
\end{tabular}
\caption{Hyperparameter MSE.}
\end{table}

\newpage
\subsubsection{Crossed Random Effects}\label{app_crossed}
We consider a two-factor crossed random effects model with $m_1=100$ levels for the first factor and $m_2=50$ levels for the second factor. 
For each level of the first factor, we generate $n_j\in\{100,500\}$ observations, yielding $N=\sum_{j=1}^{m_1} n_j$ total observations. 
The levels of the second factor are then assigned cyclically within each first-factor level, yielding a completely crossed design.  
Both random effects follow Gaussian distributions with variances $\sigma^2_{u_1} = 1$ and $\sigma^2_{u_2} = 2$. We use the same noise generating likelihoods and the same signal-to-noise ratios as before. Since \texttt{lqmm} does not support crossed random effects, we compare our Laplace approximations against \texttt{brms} and \texttt{bayesQR} with identical MCMC specifications as in the single-level experiments. 

Table~\ref{tab:crossed_sim} shows that the TKC and Fisher-Laplace approximations match \texttt{bayesQR} in accuracy whenever \texttt{bayesQR} does not fail to converge, while having lower runtimes by orders of magnitude. For example, for $n_j=100$, the three methods achieve essentially identical RMSE across both noise models, but \texttt{bayesQR} requires more than $1,000$ seconds, whereas TKC and Fisher-Laplace finish in only a few seconds. For $n_j=500$, \texttt{bayesQR} fails to run within the one-hour time limit (---). \texttt{brms} did not finish within the one-hour time limit in any crossed-effects configuration and is therefore omitted.

\begin{table}[ht!]
\centering
\begin{tabular}{@{}lccccc@{\hspace{1.4em}}ccc@{}}
\toprule
\multicolumn{3}{c}{} & \multicolumn{3}{c}{RMSE} & \multicolumn{3}{c}{Runtime} \\
\cmidrule(lr){4-9}
Noise & $m_1$ & $n_j$ & \multicolumn{1}{c}{TKC} & \multicolumn{1}{c}{Fisher} & \multicolumn{1}{c}{BayesQR} & \multicolumn{1}{c}{TKC} & \multicolumn{1}{c}{Fisher} & \multicolumn{1}{c}{BayesQR} \\
\cmidrule(lr){1-6} \cmidrule(l){7-9}
ALD & 100 & 100 & \textbf{0.035$_\pm{\text{\tiny 7.3e-04}}$} & \textbf{0.035$_\pm{\text{\tiny 7.4e-04}}$} & \textbf{0.034$_\pm{\text{\tiny 6.0e-04}}$} & 7.3$_\pm{\text{\tiny 0.2}}$ & \textbf{2.6$_\pm{\text{\tiny 0.22}}$} & 1264$_\pm{\text{\tiny 7.5}}$ \\
ALD & 100 & 500 & \textbf{0.014$_\pm{\text{\tiny 2.4e-04}}$} & \textbf{0.014$_\pm{\text{\tiny 2.4e-04}}$} & --- & 25$_\pm{\text{\tiny 1.1}}$ & \textbf{8.9$_\pm{\text{\tiny 0.99}}$} & --- \\
N & 100 & 100 & \textbf{0.09$_\pm{\text{\tiny 0.0015}}$} & \textbf{0.09$_\pm{\text{\tiny 0.0015}}$} & \textbf{0.088$_\pm{\text{\tiny 0.0013}}$} & 5.7$_\pm{\text{\tiny 0.21}}$ & \textbf{2.5$_\pm{\text{\tiny 0.26}}$} & 1285$_\pm{\text{\tiny 4.4}}$ \\
N & 100 & 500 & \textbf{0.04$_\pm{\text{\tiny 7.6e-04}}$} & \textbf{0.04$_\pm{\text{\tiny 7.6e-04}}$} & --- & 22.2$_\pm{\text{\tiny 0.49}}$ & \textbf{7.3$_\pm{\text{\tiny 0.68}}$} & --- \\
\bottomrule
\end{tabular}
\caption{Crossed Random Effects: Average quantile test RMSE $\pm$ standard error and runtime (in seconds) for estimation and prediction.
Boldface indicates methods whose performance is within two standard errors of the best result.
Dashes (---) indicate that the method did not converge or exceeded the 1-hour time limit.
\texttt{brms} did not finish within the time limit in any configuration and is omitted from the table.}
\label{tab:crossed_sim}
\end{table}

\begin{table}[htbp]
\centering
\begin{tabular}{lcccccccccc}
\toprule
\multicolumn{3}{c}{} & \multicolumn{4}{c}{MSE of Random Effect Variance (1)} & \multicolumn{4}{c}{MSE of Random Effect Variance (2)} \\
\cmidrule(lr){4-11}
Noise & $m$ & $n_j$ & \multicolumn{1}{c}{TKC} & \multicolumn{1}{c}{Fisher} & \multicolumn{1}{c}{BRMS} & \multicolumn{1}{c}{BayesQR} & \multicolumn{1}{c}{TKC} & \multicolumn{1}{c}{Fisher} & \multicolumn{1}{c}{BRMS} & \multicolumn{1}{c}{BayesQR} \\
\midrule
ALD & 100 & 100 & 0.055 & \textbf{0.042} & --- & --- & 0.49 & \textbf{0.096} & --- & --- \\
ALD & 100 & 500 & 0.11 & \textbf{0.026} & --- & --- & 1.1 & \textbf{0.35} & --- & --- \\
N & 100 & 100 & \textbf{0.042} & 0.048 & --- & --- & 0.19 & \textbf{0.15} & --- & --- \\
N & 100 & 500 & 0.065 & \textbf{0.022} & --- & --- & 0.71 & \textbf{0.31} & --- & --- \\

\bottomrule
\end{tabular}
\caption{Hyperparameters MSE.}
\end{table}

\clearpage
\subsubsection{Additional Results for Gaussian Process Simulated Experiments}

\begin{table}[ht!]
\centering
\begin{tabular}{lllcccccc}
\toprule
\multicolumn{3}{c}{} & \multicolumn{6}{c}{RMSE} \\
\cmidrule(lr){4-9}
Noise & $n$ & d & TKC & FL & VI & VIVA & QGAM & QGAM Int. \\
\midrule
N & 1{,}000 & 2 & 0.22$_\pm{\text{\tiny 0.0012}}$ & 0.22$_\pm{\text{\tiny 0.0099}}$ & \textbf{0.2$_\pm{\text{\tiny 0.0036}}$} & 0.44$_\pm{\text{\tiny 0.068}}$ & 0.67$_\pm{\text{\tiny 0.086}}$ & 0.28$_\pm{\text{\tiny 0.012}}$ \\
N & 1{,}000 & 5 & 0.55$_\pm{\text{\tiny 0.016}}$ & 0.56$_\pm{\text{\tiny 0.012}}$ & \textbf{0.5$_\pm{\text{\tiny 0.013}}$} & 0.55$_\pm{\text{\tiny 0.016}}$ & 0.89$_\pm{\text{\tiny 0.016}}$ & 0.89$_\pm{\text{\tiny 0.014}}$ \\
N & 10{,}000 & 2 & \textbf{0.19$_\pm{\text{\tiny 0.025}}$} & \textbf{0.19$_\pm{\text{\tiny 0.028}}$} & \textbf{0.2$_\pm{\text{\tiny 0.035}}$} & 0.3$_\pm{\text{\tiny 0.012}}$ & 0.84$_\pm{\text{\tiny 0.074}}$ & 0.6$_\pm{\text{\tiny 0.11}}$ \\
N & 10{,}000 & 5 & \textbf{0.71$_\pm{\text{\tiny 0.12}}$} & \textbf{0.69$_\pm{\text{\tiny 0.11}}$} & \textbf{0.74$_\pm{\text{\tiny 0.13}}$} & \textbf{0.69$_\pm{\text{\tiny 0.11}}$} & 0.98$_\pm{\text{\tiny 0.053}}$ & 0.97$_\pm{\text{\tiny 0.058}}$ \\
HetN & 1{,}000 & 2 & \textbf{0.24$_\pm{\text{\tiny 0.025}}$} & \textbf{0.25$_\pm{\text{\tiny 0.026}}$} & \textbf{0.24$_\pm{\text{\tiny 0.025}}$} & 0.52$_\pm{\text{\tiny 0.073}}$ & 0.75$_\pm{\text{\tiny 0.069}}$ & 0.36$_\pm{\text{\tiny 0.019}}$ \\
HetN & 1{,}000 & 5 & \textbf{0.66$_\pm{\text{\tiny 0.037}}$} & \textbf{0.65$_\pm{\text{\tiny 0.032}}$} & \textbf{0.63$_\pm{\text{\tiny 0.032}}$} & 0.71$_\pm{\text{\tiny 0.04}}$ & 0.99$_\pm{\text{\tiny 0.03}}$ & 0.99$_\pm{\text{\tiny 0.029}}$ \\
HetN & 10{,}000 & 2 & \textbf{0.28$_\pm{\text{\tiny 0.058}}$} & \textbf{0.28$_\pm{\text{\tiny 0.057}}$} & \textbf{0.31$_\pm{\text{\tiny 0.066}}$} & 0.52$_\pm{\text{\tiny 0.16}}$ & 1$_\pm{\text{\tiny 0.12}}$ & 0.75$_\pm{\text{\tiny 0.12}}$ \\
HetN & 10{,}000 & 5 & \textbf{0.81$_\pm{\text{\tiny 0.13}}$} & \textbf{0.79$_\pm{\text{\tiny 0.12}}$} & \textbf{0.85$_\pm{\text{\tiny 0.14}}$} & \textbf{0.83$_\pm{\text{\tiny 0.12}}$} & 1.1$_\pm{\text{\tiny 0.059}}$ & 1.1$_\pm{\text{\tiny 0.065}}$ \\
t & 1{,}000 & 2 & \textbf{0.21$_\pm{\text{\tiny 0.0028}}$} & \textbf{0.21$_\pm{\text{\tiny 0.0032}}$} & \textbf{0.21$_\pm{\text{\tiny 0.0036}}$} & 0.38$_\pm{\text{\tiny 0.045}}$ & 0.69$_\pm{\text{\tiny 0.067}}$ & 0.29$_\pm{\text{\tiny 0.0085}}$ \\
t & 1{,}000 & 5 & 0.55$_\pm{\text{\tiny 0.015}}$ & \textbf{0.53$_\pm{\text{\tiny 0.0095}}$} & \textbf{0.52$_\pm{\text{\tiny 0.011}}$} & 0.58$_\pm{\text{\tiny 0.015}}$ & 0.94$_\pm{\text{\tiny 0.02}}$ & 0.93$_\pm{\text{\tiny 0.018}}$ \\
t & 10{,}000 & 2 & \textbf{0.18$_\pm{\text{\tiny 0.026}}$} & \textbf{0.18$_\pm{\text{\tiny 0.027}}$} & \textbf{0.21$_\pm{\text{\tiny 0.037}}$} & 0.29$_\pm{\text{\tiny 0.015}}$ & 0.87$_\pm{\text{\tiny 0.076}}$ & 0.63$_\pm{\text{\tiny 0.12}}$ \\
t & 10{,}000 & 5 & \textbf{0.72$_\pm{\text{\tiny 0.12}}$} & \textbf{0.68$_\pm{\text{\tiny 0.11}}$} & \textbf{0.76$_\pm{\text{\tiny 0.13}}$} & \textbf{0.71$_\pm{\text{\tiny 0.11}}$} & 1$_\pm{\text{\tiny 0.054}}$ & 1$_\pm{\text{\tiny 0.059}}$ \\
\bottomrule
\end{tabular}
\caption{GP results for simulated data: Average quantile test RMSE $\pm$ standard error.
Noise models are homoscedastic Gaussian (N), heteroscedastic Gaussian (HetN), and Student-$t$ with 2 degrees of freedom ($t$).
Boldface indicates methods within two standard errors of the minimum RMSE in each row.}
\label{tab:gp_results_full}
\end{table}

\begin{table}[htbp]
\centering
\begin{tabular}{lcccccc@{\hspace{6pt}}|@{\hspace{6pt}}cccc}
\toprule
\multicolumn{3}{c}{} & \multicolumn{4}{c}{Lengthscale MSE} & \multicolumn{4}{c}{Signal Variance MSE} \\
\cmidrule(lr){4-11}
Noise & $n$ & d & \multicolumn{1}{c}{TKC} & \multicolumn{1}{c}{FL} & \multicolumn{1}{c}{VI} & \multicolumn{1}{c}{VIVA} & \multicolumn{1}{c}{TKC} & \multicolumn{1}{c}{FL} & \multicolumn{1}{c}{VI} & \multicolumn{1}{c}{VIVA} \\
\midrule
N & 1{,}000 & 2 & 0.0041 & 0.0043 & \textbf{0.0017} & 0.025 & 0.048 & 0.084 & \textbf{0.014} & 0.11 \\
N & 1{,}000 & 5 & 0.017 & \textbf{0.0043} & 0.031 & 0.024 & 0.11 & 0.047 & \textbf{0.016} & 0.066 \\
N & 10{,}000 & 2 & 0.024 & 0.022 & \textbf{0.011} & 0.046 & \textbf{0.052} & 0.068 & 0.095 & 0.18 \\
N & 10{,}000 & 5 & \textbf{0.005} & 0.0096 & 0.21 & 0.0088 & 0.059 & 0.044 & 0.11 & \textbf{0.035} \\
HetN & 1{,}000 & 2 & 0.0014 & 0.0051 & \textbf{6.9e-04} & 0.026 & 0.044 & 0.33 & \textbf{0.031} & 0.1 \\
HetN & 1{,}000 & 5 & 0.039 & \textbf{0.0054} & 0.039 & 0.028 & 0.13 & 0.037 & \textbf{0.016} & 0.075 \\
HetN & 10{,}000 & 2 & 0.017 & 0.022 & \textbf{0.012} & 0.041 & 0.61 & 1.4 & \textbf{0.096} & 0.22 \\
HetN & 10{,}000 & 5 & 0.029 & 0.012 & 0.21 & \textbf{0.009} & 0.089 & 0.033 & 0.11 & \textbf{0.024} \\
t & 1{,}000 & 2 & \textbf{0.001} & 0.0036 & 0.0012 & 0.027 & 0.07 & 0.096 & \textbf{0.028} & 0.09 \\
t & 1{,}000 & 5 & 0.012 & \textbf{0.0035} & 0.033 & 0.026 & 0.11 & 0.051 & \textbf{0.015} & 0.064 \\
t & 10{,}000 & 2 & 0.019 & 0.021 & \textbf{0.011} & 0.045 & \textbf{0.045} & 0.06 & 0.094 & 0.19 \\
t & 10{,}000 & 5 & 0.032 & 0.01 & 0.21 & \textbf{0.0089} & 0.076 & 0.057 & 0.11 & \textbf{0.033} \\
\bottomrule
\end{tabular}
\caption{\textbf{Gaussian process: Hyperparameter MSE}. Mean squared error for lengthscale and signal variance across noise types, sample sizes, and dimensions. Bold indicates lowest MSE.}
\label{tab:hyperparameter_mse}
\end{table}

\label{app:gp_sim_runtime}
\begin{table}[htbp]
\centering
\begin{tabular}{lllcccccc}
\toprule
\multicolumn{3}{c}{} & \multicolumn{6}{c}{Runtime} \\
\cmidrule(lr){4-9}
Noise & N & d & TKC & FL & VI & VIVA & QGAM & QGAM Int. \\
\midrule
N & 1{,}000 & 2 & 4.8$_\pm{\text{\tiny 0.72}}$ & 3.3$_\pm{\text{\tiny 0.17}}$ & 45.9$_\pm{\text{\tiny 4.1}}$ & 56.3$_\pm{\text{\tiny 2.2}}$ & 0.59$_\pm{\text{\tiny 0.066}}$ & \textbf{0.47$_\pm{\text{\tiny 0.013}}$} \\
N & 1{,}000 & 5 & 5$_\pm{\text{\tiny 0.6}}$ & 4.2$_\pm{\text{\tiny 0.75}}$ & 82.3$_\pm{\text{\tiny 12}}$ & 133$_\pm{\text{\tiny 8.2}}$ & 4.5$_\pm{\text{\tiny 0.31}}$ & \textbf{1.6$_\pm{\text{\tiny 0.14}}$} \\
N & 10{,}000 & 2 & 162$_\pm{\text{\tiny 16.3}}$ & 192$_\pm{\text{\tiny 20.2}}$ & 673$_\pm{\text{\tiny 25.1}}$ & 653$_\pm{\text{\tiny 27.2}}$ & 6.2$_\pm{\text{\tiny 0.97}}$ & \textbf{5.5$_\pm{\text{\tiny 0.3}}$} \\
N & 10{,}000 & 5 & 4622$_\pm{\text{\tiny 566}}$ & 2410$_\pm{\text{\tiny 257}}$ & 545$_\pm{\text{\tiny 31.6}}$ & 1602$_\pm{\text{\tiny 45.2}}$ & 34.4$_\pm{\text{\tiny 3.7}}$ & \textbf{12$_\pm{\text{\tiny 0.85}}$} \\
HetN & 1{,}000 & 2 & 5.5$_\pm{\text{\tiny 0.43}}$ & 5.1$_\pm{\text{\tiny 0.6}}$ & 56.9$_\pm{\text{\tiny 6.2}}$ & 53.6$_\pm{\text{\tiny 2.5}}$ & 1.8$_\pm{\text{\tiny 0.4}}$ & \textbf{0.8$_\pm{\text{\tiny 0.16}}$} \\
HetN & 1{,}000 & 5 & 5$_\pm{\text{\tiny 0.41}}$ & 3.6$_\pm{\text{\tiny 0.46}}$ & 58.2$_\pm{\text{\tiny 3.7}}$ & 103$_\pm{\text{\tiny 3.6}}$ & 14.9$_\pm{\text{\tiny 1.3}}$ & \textbf{1.7$_\pm{\text{\tiny 0.28}}$} \\
HetN & 10{,}000 & 2 & 156$_\pm{\text{\tiny 11.8}}$ & 181$_\pm{\text{\tiny 17.2}}$ & 649$_\pm{\text{\tiny 5.4}}$ & 638$_\pm{\text{\tiny 11.2}}$ & \textbf{4.6$_\pm{\text{\tiny 0.35}}$} & 6.2$_\pm{\text{\tiny 0.37}}$ \\
HetN & 10{,}000 & 5 & 4795$_\pm{\text{\tiny 486}}$ & 2821$_\pm{\text{\tiny 155}}$ & 554$_\pm{\text{\tiny 35.9}}$ & 1595$_\pm{\text{\tiny 41.7}}$ & 26.4$_\pm{\text{\tiny 3.4}}$ & \textbf{13.7$_\pm{\text{\tiny 0.84}}$} \\
t & 1{,}000 & 2 & 5.5$_\pm{\text{\tiny 0.23}}$ & 4.7$_\pm{\text{\tiny 0.32}}$ & 58$_\pm{\text{\tiny 4.8}}$ & 54.7$_\pm{\text{\tiny 2.3}}$ & 0.9$_\pm{\text{\tiny 0.13}}$ & \textbf{0.56$_\pm{\text{\tiny 0.026}}$} \\
t & 1{,}000 & 5 & 5.6$_\pm{\text{\tiny 0.94}}$ & 3.9$_\pm{\text{\tiny 0.53}}$ & 55.9$_\pm{\text{\tiny 5.1}}$ & 100$_\pm{\text{\tiny 3.1}}$ & 7.6$_\pm{\text{\tiny 1.9}}$ & \textbf{1.6$_\pm{\text{\tiny 0.37}}$} \\
t & 10{,}000 & 2 & 152$_\pm{\text{\tiny 14.8}}$ & 178$_\pm{\text{\tiny 19}}$ & 630$_\pm{\text{\tiny 16.3}}$ & 643$_\pm{\text{\tiny 22.8}}$ & \textbf{8.4$_\pm{\text{\tiny 1.5}}$} & \textbf{10.1$_\pm{\text{\tiny 1.7}}$} \\
t & 10{,}000 & 5 & 4179$_\pm{\text{\tiny 556}}$ & 2632$_\pm{\text{\tiny 296}}$ & 547$_\pm{\text{\tiny 39.6}}$ & 1621$_\pm{\text{\tiny 61.5}}$ & 53.1$_\pm{\text{\tiny 7.5}}$ & \textbf{11.5$_\pm{\text{\tiny 0.51}}$} \\
\bottomrule
\end{tabular}
\caption{
\textbf{Gaussian process: Runtimes (s).}
}
\end{table}

\clearpage
\subsubsection{Approximation of the Log-Marginal Likelihood}\label{marg_ll_accuracy}

We next analyze how well the proposed Laplace approximations estimate the log-marginal likelihood $\log p(\mathbf{y}|\boldsymbol{\theta}, \boldsymbol{\beta},\lambda) = \text{log} \int 
p(\mathbf{y}|\mathbf{b}, \boldsymbol{\beta},\lambda)\pi(\mathbf{b}|\boldsymbol{\theta})d \mathbf{b}$. Since this quantity is not analytically available, we consider a single-level grouped random effects model and use adaptive Gauss-Hermite quadrature \citep{liu1994note} to compute the ground truth log-marginal likelihood.\footnote{We do not perform this comparison for Gaussian process models, as the corresponding integrals are over an $n$-dimensional latent space, making exact quadrature infeasible for the sample sizes of interest. While previous work has employed sampling-based techniques such as thermodynamic integration or annealed importance sampling for datasets with $n \approx 500$ \citep{nickisch2008approximations}, these approaches do not scale to larger sample sizes. In our experiments, we observed that sampling from high-dimensional Gaussian process posteriors is computationally prohibitive and Monte-Carlo-based marginal likelihood approximations (thermodynamic integration and annealed importance sampling) have high variances (results not shown).} We compare this to a Fisher-Laplace approximation, a triangular kernel curvature Laplace approximation, and an alternative numerical quadrature method based on adaptive Gauss-Kronrod integration\footnote{As implemented in the \texttt{scipy.integrate.quad} function of the SciPy library.}, applied over fixed integration bounds. Specifically, we consider $m=20$ groups with group sizes $n_j\in\{100,200,500, 1{,}000\}$, set $\tau=0.8$, draw random effects with standard deviation $\sigma=1.0$, and repeat each setting over $K=50$ simulated datasets. We consider a correctly specified asymmetric Laplace likelihood with a scale parameter $\lambda=1.0$ and misspecified Gaussian noise with variance $1$.

Figure~\ref{fig:numerical_combined} shows the differences between the approximations and adaptive-GH quadrature. We find that in the correctly specified scenario, both the Fisher-Laplace and the TKC Laplace approximations are accurate and converge to the correct quantity. In the misspecified scenario, as expected, the Fisher-Laplace approximation performs worse and systematically overestimates the negative log-marginal likelihood, whereas the TKC Laplace approximation converges to the correct quantity. The adaptive Gauss-Kronrod integration is accurate for small sample sizes $n_j$, but for larger sample sizes, the software implementation returns missing values since the integrand becomes sharply peaked around its mode.
\begin{figure}[ht!]
    \centering
    \includegraphics[width=1\linewidth]{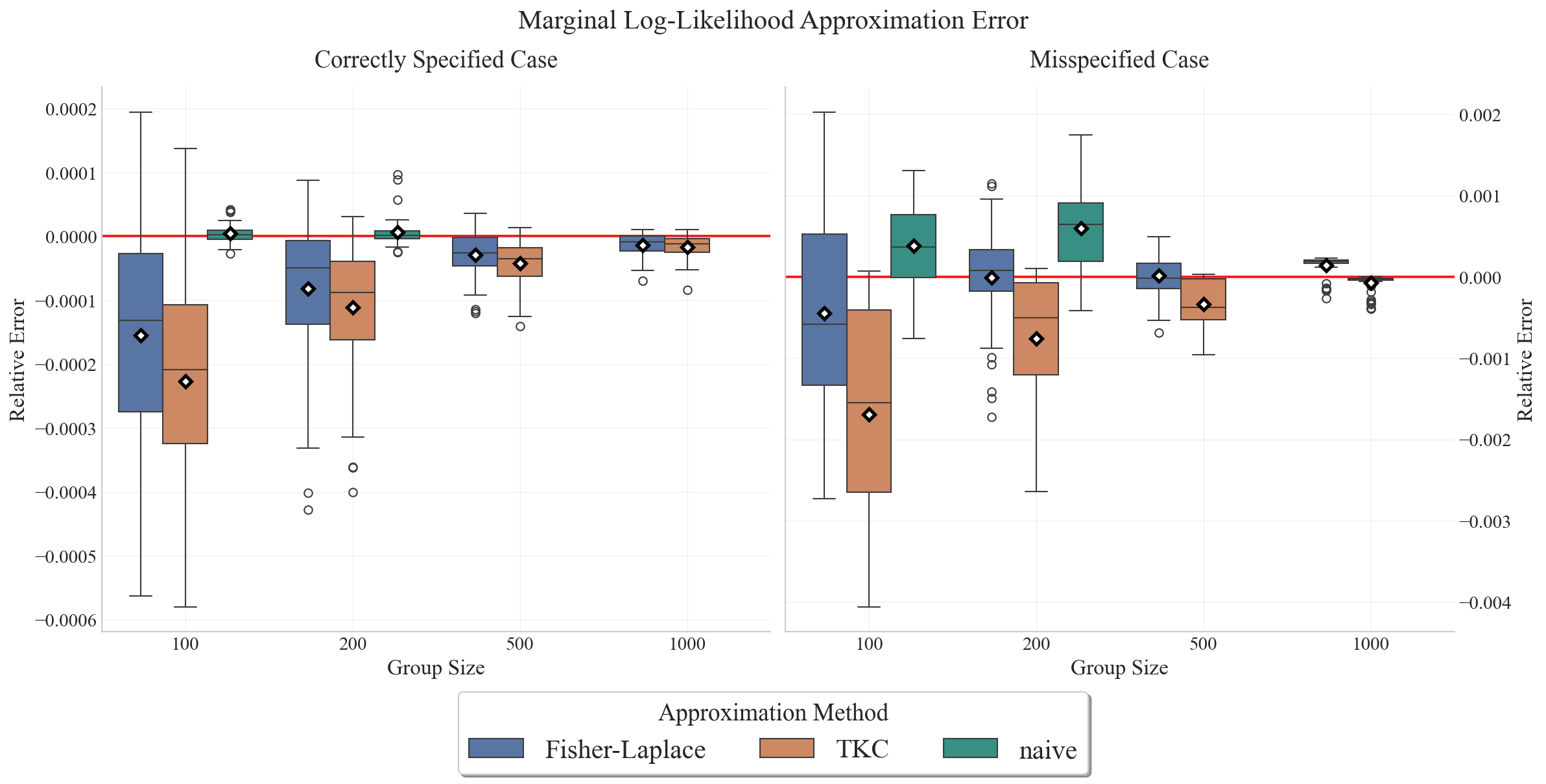}
    \caption{ Accuracy of log-marginal likelihood approximations, measured by the relative error with respect to adaptive Gauss--Hermite quadrature, across different sample sizes.
    Left: correctly specified likelihood. Right: misspecified likelihood.}
    \label{fig:numerical_combined}
\end{figure}

\clearpage
\subsection{Sensitivity Analysis for TKC's minimum likelihood drop parameter}
\label{app:sensitivity_tkc}
We use the same simulation setup as in Section~\ref{sec:simulated_exp} and study the sensitivity of TKC to the minimum likelihood-drop threshold. In particular, we compare the values \(10^{-2}, 3\times 10^{-2}, 10^{-1}, 3\times 10^{-1}, 10^{0}, 10^{1}, 10^{2}, 10^{3}, 10^{4}, 10^{5}\).
\subsubsection{Single-level Grouped Random Effects: $\tau = 0.8$}
\begin{figure}[ht!]
    \centering
    \includegraphics[width=1\linewidth]{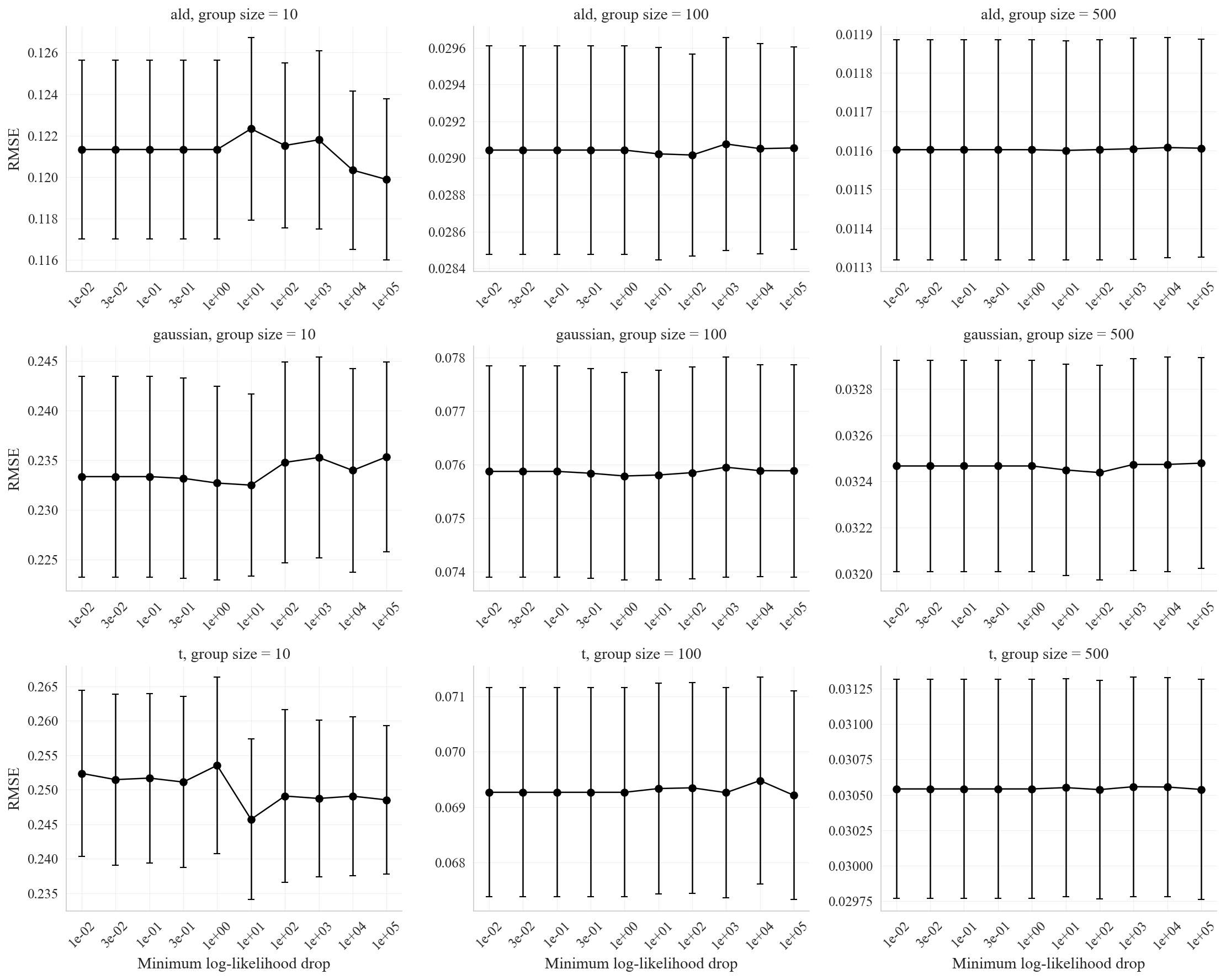}
    \caption{RMSE of the quantile for various minimum likelihood-drop thresholds.}
\end{figure}

\begin{figure}[ht!]
    \centering
    \includegraphics[width=1\linewidth]{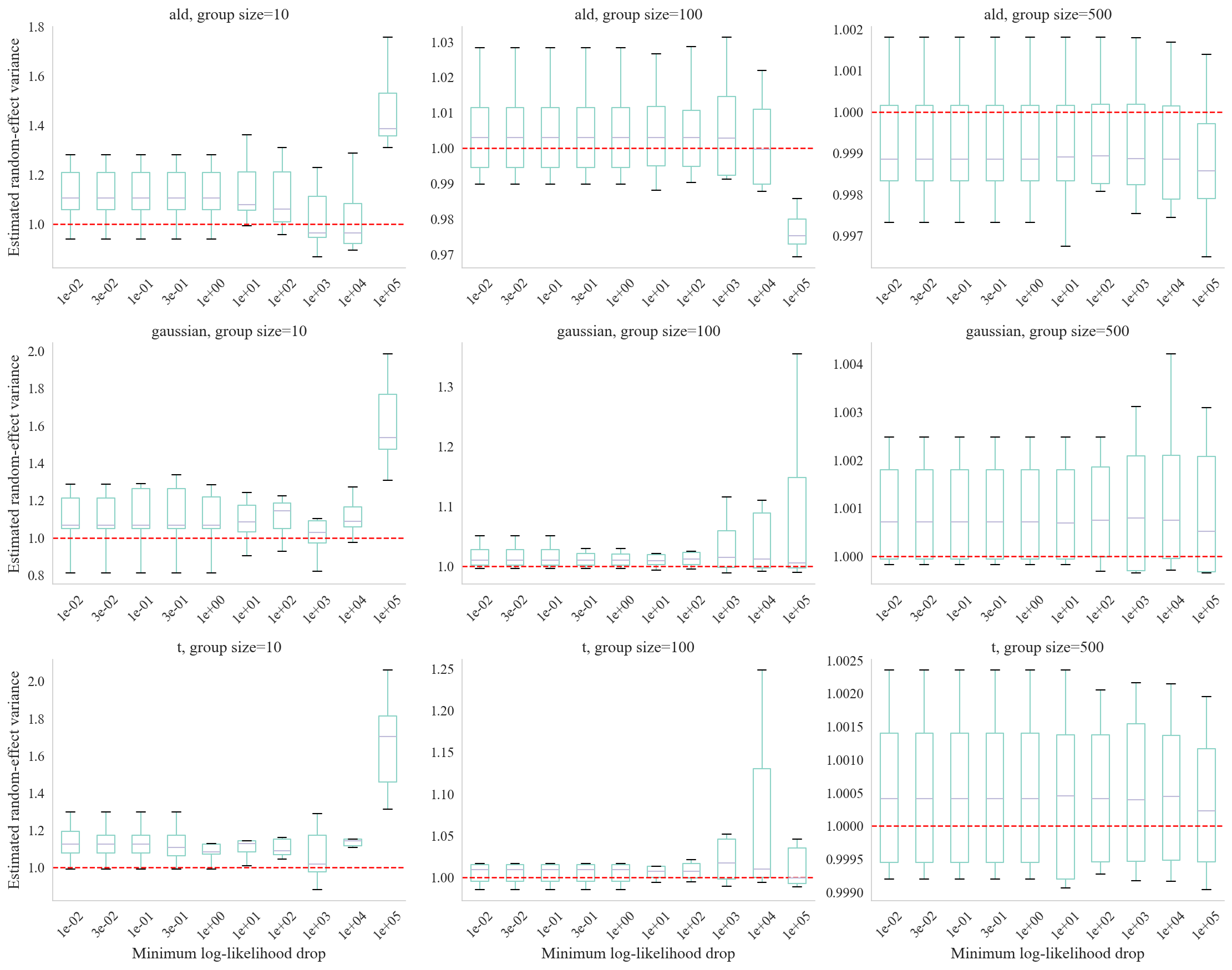}
    \caption{Estimated hyperparameter for various minimum likelihood-drop thresholds.}
\end{figure}

\clearpage
\subsubsection{Crossed Random Effects: $\tau = 0.8$}

\begin{figure}[ht!]
    \centering
    \includegraphics[width=1\linewidth]{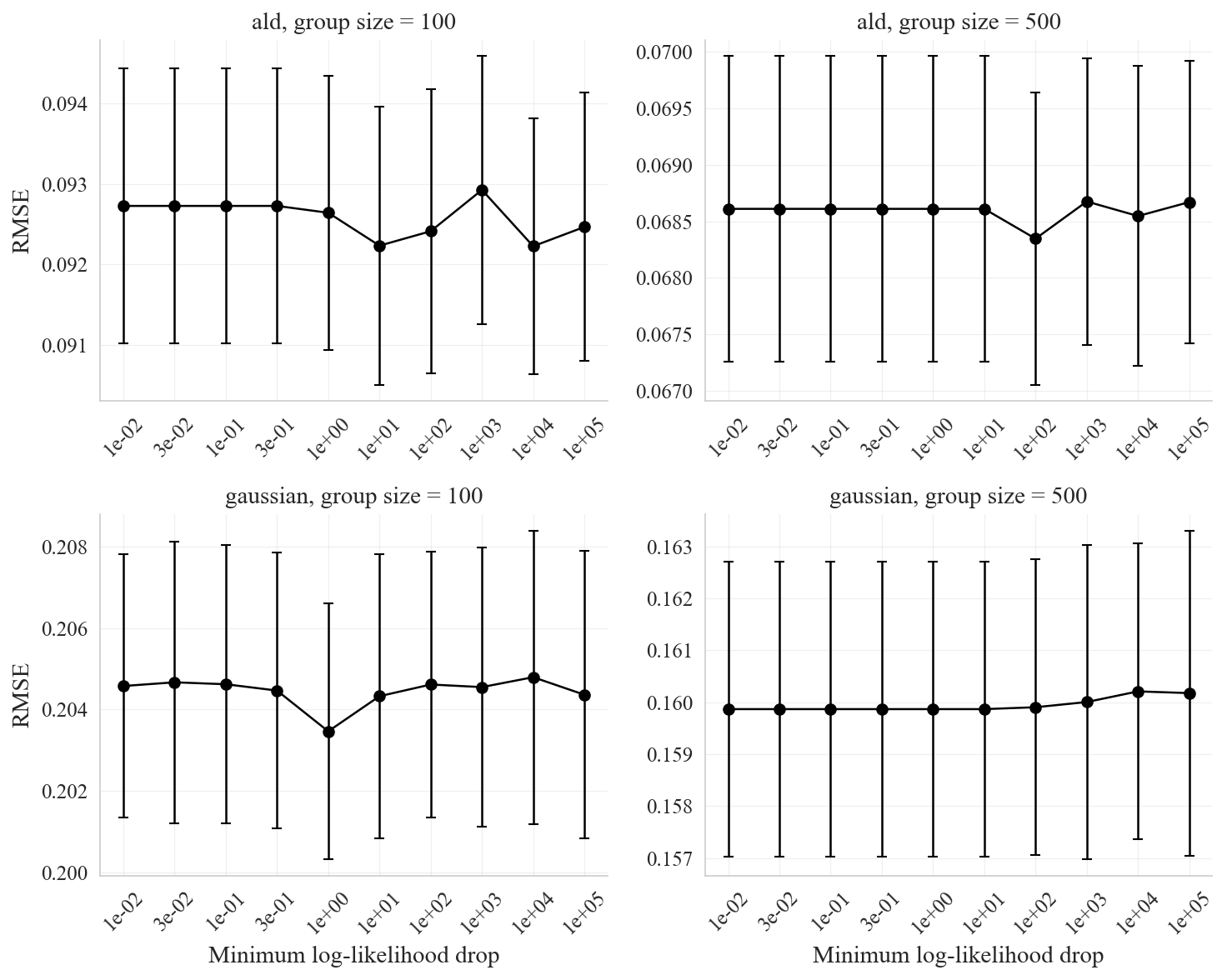}
    \caption{RMSE of the quantile for various minimum likelihood-drop thresholds}
\end{figure}

\begin{figure}[ht!]
    \centering
    \includegraphics[width=1\linewidth]{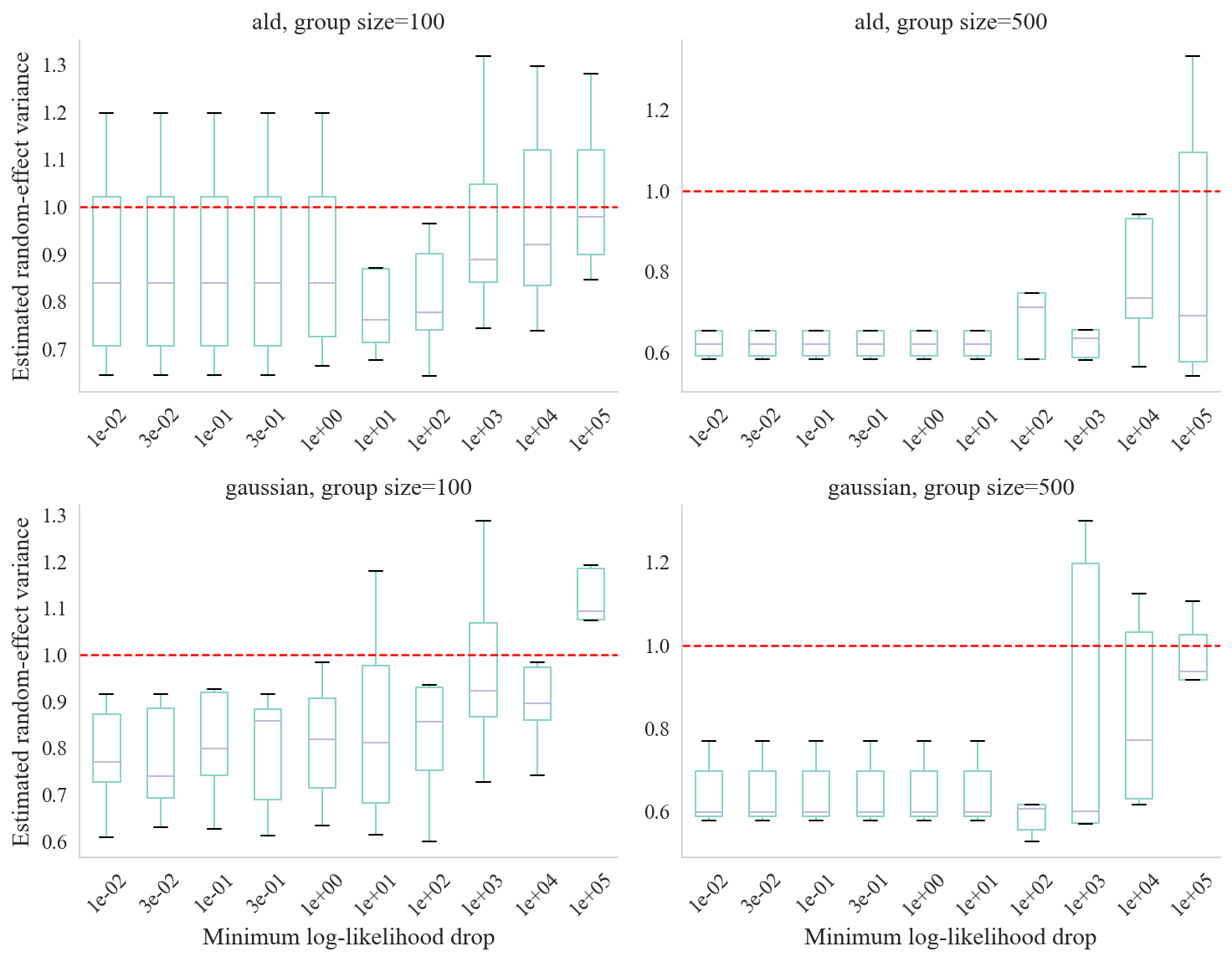}
    \caption{Estimated hyperparameter (variance of first random effect)  for various minimum likelihood-drop thresholds.}
\end{figure}

\begin{figure}[ht!]
    \centering
    \includegraphics[width=1\linewidth]{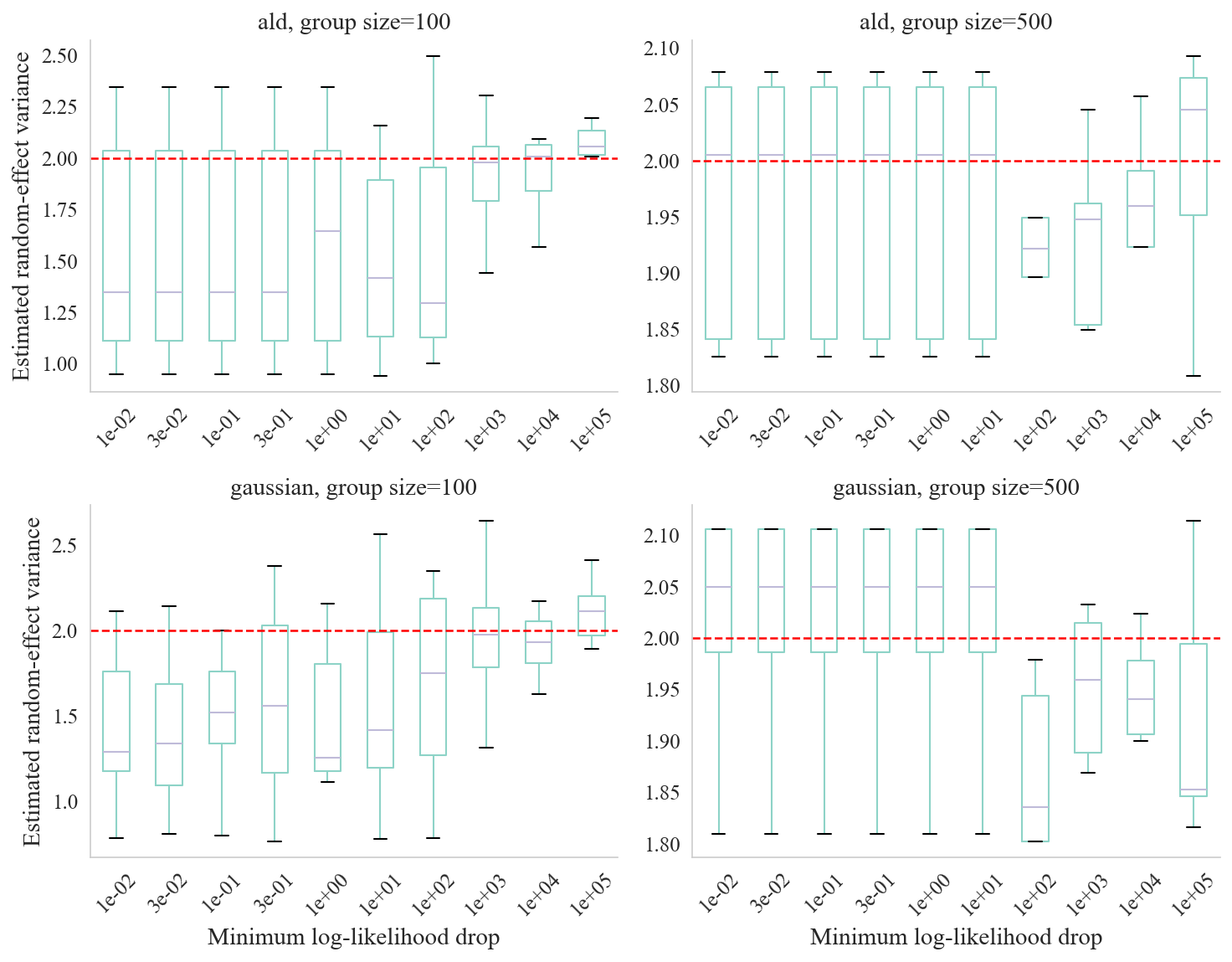}
    \caption{Estimated hyperparameter (variance of second random effect)  for various minimum likelihood-drop thresholds.}
\end{figure}

\clearpage
\subsubsection{Gaussian process: $\tau = 0.8$}
For the Gaussian process experiments, we present the sensitivity results for $n = 1{,}000$. The results are qualitatively similar for $n = 10{,}000$ (results not shown).
\begin{figure}[ht!]
    \centering
    \includegraphics[width=0.8\linewidth]{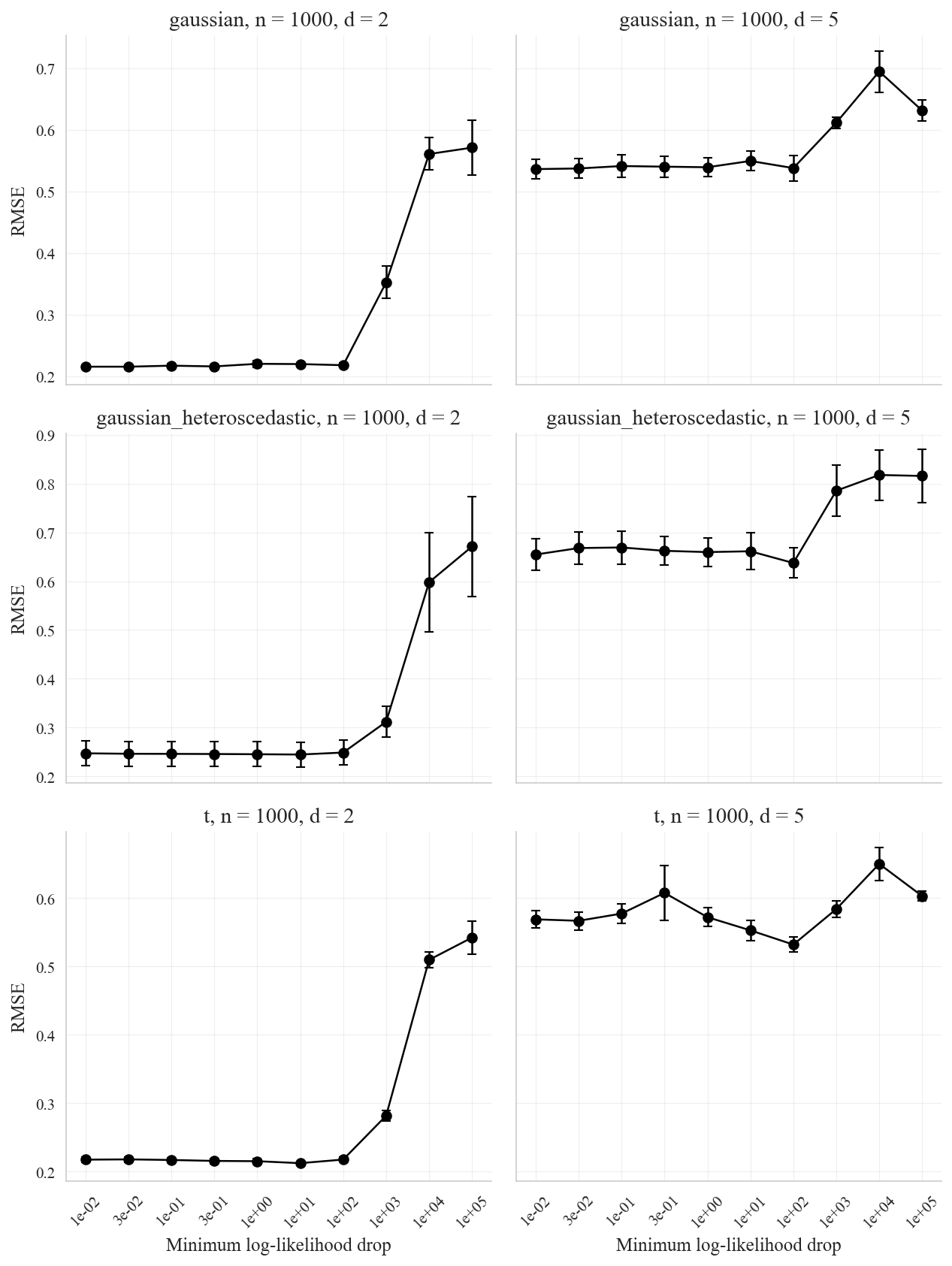}
    \caption{RMSE of the quantile for various minimum likelihood-drop thresholds.}
\end{figure}

\begin{figure}[ht!]
    \centering
    \includegraphics[width=1\linewidth]{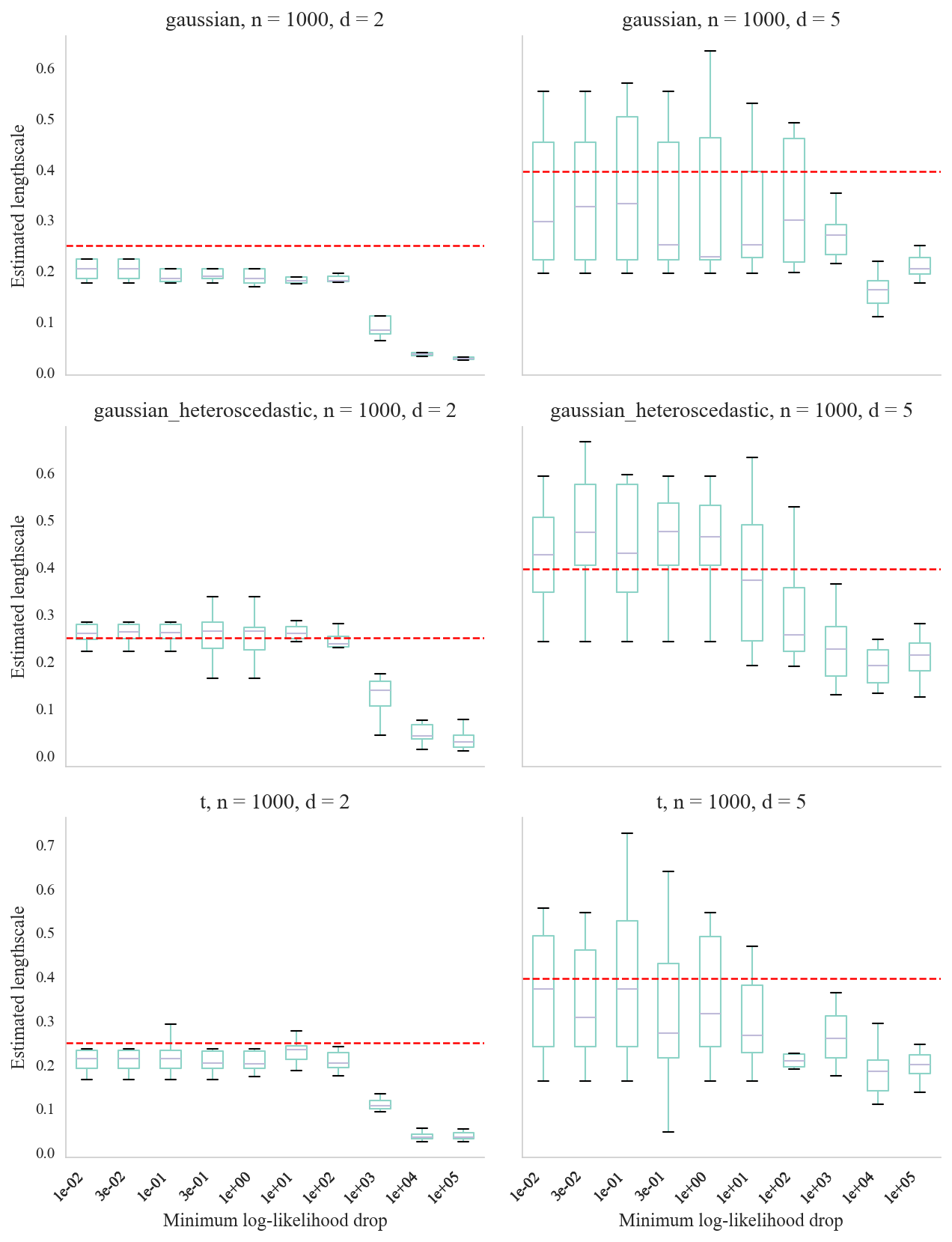}
    \caption{Estimated hyperparameter (length-scale) for various minimum likelihood-drop thresholds.}
\end{figure}

\begin{figure}[ht!]
    \centering
    \includegraphics[width=1\linewidth]{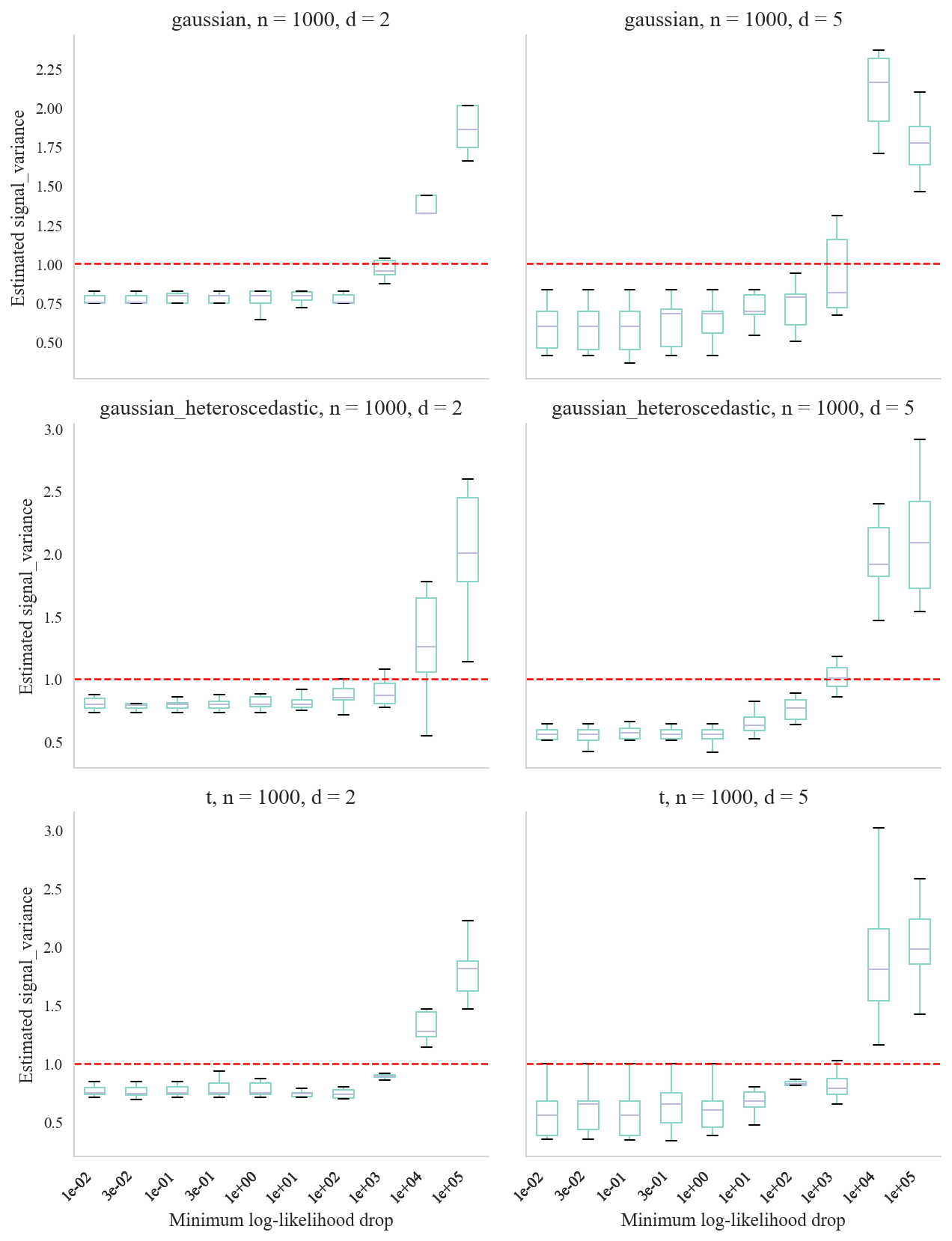}
    \caption{Estimated hyperparameter (signal variance) for various minimum likelihood-drop thresholds.}
\end{figure}

\clearpage
\subsection{Additional information and results for the real world data experiments}
\subsubsection{Description of real world datasets}\label{desc_data}
For single-level grouped random effects, we consider three datasets: the Orthodont dataset \citep{10.1093/biomet/51.3-4.313} with $n = 108$ measurements across $m = 27$ subjects and $p=2$ fixed effects, tracking \textit{dental growth} over time; the Labor dataset \citep{davis1991semi} contains $n = 358$ observations nested within $m = 83$ labor groups, $p=2$ fixed effects, and \textit{pain} is the response variable; the Cars dataset \citep{simchoni2023integrating}, with response \textit{log-price (\$)} of cars, which contains $n = 97,296$ observations with $m = 15,226$ car models as grouping variable and $p=2$ fixed effects.

For Gaussian processes, we consider four spatial datasets with continuous responses and two-dimensional coordinate inputs. The Laegern dataset ($n = 237{,}286$) \citep{schneider2017mapping} encompasses measurements of plant functional traits of the Laegern temperate mixed forest in Switzerland. The response is the \textit{canopy height}. Next, we consider two \textit{land surface temperatures} MODIS satellite data sets. The first is denoted as Heaton ($n = 148{,}309$) since it was used in \cite{heaton2019case}, and the second denoted as MODIS ($n = 600{,}000$) \citep{gyger2025iterativemethodsfullscalegaussian}. Finally, we consider the House dataset ($n = 21{,}554$) available from the R package \texttt{spData} \citep{spdata}, where our response is the \textit{house price}.

\subsubsection{Single-level Grouped Random Effects}

\label{app:single_real_runtime}
\begin{table}[htbp]
\centering
\begin{tabular}{lccccc}
\toprule
\multicolumn{1}{c}{} & \multicolumn{5}{c}{Runtime} \\
\cmidrule(lr){2-6}
Dataset & \multicolumn{1}{c}{TKC} & \multicolumn{1}{c}{Fisher} & \multicolumn{1}{c}{BayesQR} & \multicolumn{1}{c}{BRMS} & \multicolumn{1}{c}{LQMM} \\
\midrule
Orthodont & 0.97$_\pm{\text{\tiny 0.75}}$ & 0.098$_\pm{\text{\tiny 0.06}}$ & 0.77$_\pm{\text{\tiny 0.0053}}$ & 157$_\pm{\text{\tiny 6.8}}$ & \textbf{0.017$_\pm{\text{\tiny 0.0029}}$} \\
Labor & 4.4$_\pm{\text{\tiny 3.7}}$ & 0.62$_\pm{\text{\tiny 0.76}}$ & 15.5$_\pm{\text{\tiny 0.44}}$ & 334$_\pm{\text{\tiny 10.4}}$ & \textbf{0.042$_\pm{\text{\tiny 0.014}}$} \\
Cars & \textbf{32.5$_\pm{\text{\tiny 16}}$} & \textbf{26.1$_\pm{\text{\tiny 6.5}}$} & --- & --- & --- \\
\bottomrule
\end{tabular}
\caption{\textbf{Runtime (s).}}
\end{table}

\subsubsection{Crossed Random Effects}\label{app:real_crossed_runtime}
We evaluate crossed random effects models using two datasets. First, we extend the Cars dataset with an additional grouping variable representing $m_2 = 12,235$ locations, resulting in a crossed random effects structure across car models and locations. Second, we consider the MovieLens 100k dataset \citep{harper2015movielens}, which contains $n=100,000$ user-movie ratings with $m_1= 943$ users and $m_2=1,682$ movies, where the response variable corresponds to the recorded \textit{rating score}.

Our Laplace approximations are compared against \texttt{brms} and \texttt{bayesQR}. \texttt{lqmm} does not support crossed random effects structures. The comparison employs 5-fold cross-validation. Table~\ref{tab:crossed_real} reports the quantile loss for the methods that successfully returned results. We find that both Laplace approximations (TKC and Fisher-Laplace) run reliably within the time limit, whereas \texttt{brms} and \texttt{bayesQR} did not finish within the allocated budget for both crossed-effects datasets. The runtimes are reported in Table \ref{tab:crossed_real_time}.

\begin{table}[htbp]
\centering
\begin{tabular}{lcccc}
\toprule
\multicolumn{1}{c}{} & \multicolumn{4}{c}{Quantile Loss} \\
\cmidrule(lr){2-5}
Dataset & \multicolumn{1}{c}{TKC} & \multicolumn{1}{c}{Fisher} & \multicolumn{1}{c}{BayesQR} & \multicolumn{1}{c}{BRMS} \\
\midrule
Cars & 0.13$_\pm{\text{\tiny 0.0026}}$ & \textbf{0.12$_\pm{\text{\tiny 0.002}}$} & --- & --- \\
Ratings & 0.21$_\pm{\text{\tiny 0.0019}}$ & \textbf{0.21$_\pm{\text{\tiny 9.1e-04}}$} & --- & --- \\
\bottomrule
\end{tabular}
\caption{\textbf{Crossed Random Effects:} Mean quantile loss $\pm$ standard error over 5-fold cross-validation.
Dashes (---) indicate that the method did not converge or failed to finish within the pre-specified time limit. Boldface indicates the minimum loss within two standard errors.}
\label{tab:crossed_real}
\end{table}

\begin{table}[htbp]
\centering
\begin{tabular}{lcccc}
\toprule
\multicolumn{1}{c}{} & \multicolumn{4}{c}{Runtime} \\
\cmidrule(lr){2-5}
Dataset & \multicolumn{1}{c}{TKC} & \multicolumn{1}{c}{Fisher} & \multicolumn{1}{c}{BayesQR} & \multicolumn{1}{c}{BRMS} \\
\midrule
Cars & 885$_\pm{\text{\tiny 0.33}}$ & \textbf{335$_\pm{\text{\tiny 80.8}}$} & --- & --- \\
Ratings & 47.8$_\pm{\text{\tiny 3.6}}$ & \textbf{37.4$_\pm{\text{\tiny 3.8}}$} & --- & --- \\
\bottomrule
\end{tabular}
\caption{\textbf{Runtime (s).}}
\label{tab:crossed_real_time}
\end{table}

\clearpage
\subsubsection{Gaussian process}\label{app:real_GP_runtime}
\begin{table}[htbp]
\centering
\begin{tabular}{lcccccc}
\toprule
\multicolumn{1}{c}{} & \multicolumn{6}{c}{Runtime} \\
\cmidrule(lr){2-7}
Dataset & TKC & FL & VI & VIVA & QGAM & QGAM Int. \\
\midrule
Heaton & 966$_\pm$\text{\tiny 286} & 926$_\pm$\text{\tiny 162} & 2412$_\pm$\text{\tiny 9.7} & 3153$_\pm$\text{\tiny 149} & 4.1$_\pm$\text{\tiny 0.45} & \textbf{2.7$_\pm$\text{\tiny 0.14}} \\
House & 312$_\pm$\text{\tiny 113} & 641$_\pm$\text{\tiny 187} & 2412$_\pm$\text{\tiny 2.6} & 3309$_\pm$\text{\tiny 82.5} & \textbf{5.1$_\pm$\text{\tiny 2.1}} & \textbf{6$_\pm$\text{\tiny 0.86}} \\
Laegern & 837$_\pm$\text{\tiny 417} & 1054$_\pm$\text{\tiny 402} & 2932$_\pm$\text{\tiny 23.6} & 3136$_\pm$\text{\tiny 158} & \textbf{4.3$_\pm$\text{\tiny 1.5}} & \textbf{4.7$_\pm$\text{\tiny 1.2}} \\
MODIS & 829$_\pm$\text{\tiny 65.7} & 1097$_\pm$\text{\tiny 410} & 2402$_\pm$\text{\tiny 19.7} & 3104$_\pm$\text{\tiny 358} & \textbf{4.3$_\pm$\text{\tiny 2}} & \textbf{3.4$_\pm$\text{\tiny 0.81}} \\
\bottomrule
\end{tabular}
\caption{Gaussian process: Runtime (seconds).}
\label{table:gp_runtime_real}
\end{table}

\clearpage
\section{Additional results: experiments for the $\tau = 0.95$ quantile}
\label{app:tau095}
Best performance ($\pm$ 2 std. err.) in bold. --- indicates method not applicable or convergence failure.
\vspace{-0.25cm}
\subsection{Simulation experiments}
\subsubsection{Single-level Grouped Random Effects}
\begin{table}[htbp]
\centering
\begin{tabular}{lccccccc}
\toprule
\multicolumn{3}{c}{} & \multicolumn{5}{c}{RMSE} \\
\cmidrule(lr){4-8}
Noise & $m$ & $n_j$ & \multicolumn{1}{c}{TKC} & \multicolumn{1}{c}{Fisher} & \multicolumn{1}{c}{BayesQR} & \multicolumn{1}{c}{LQMM} & \multicolumn{1}{c}{BRMS} \\
\midrule
ALD & 100 & 10 & 0.23$_\pm{\text{\tiny 0.015}}$ & 0.26$_\pm{\text{\tiny 0.017}}$ & \textbf{0.14$_\pm{\text{\tiny 0.0033}}$} & 0.5$_\pm{\text{\tiny 0.0059}}$ & 0.21$_\pm{\text{\tiny 0.0093}}$ \\
ALD & 100 & 100 & 0.057$_\pm{\text{\tiny 0.0012}}$ & 0.057$_\pm{\text{\tiny 0.0012}}$ & \textbf{0.055$_\pm{\text{\tiny 0.0011}}$} & --- & 0.51$_\pm{\text{\tiny 0.17}}$ \\
ALD & 100 & 500 & \textbf{0.025$_\pm{\text{\tiny 3.2e-04}}$} & \textbf{0.025$_\pm{\text{\tiny 3.2e-04}}$} & \textbf{0.024$_\pm{\text{\tiny 2.7e-04}}$} & --- & --- \\
N & 100 & 10 & 0.35$_\pm{\text{\tiny 0.01}}$ & 0.35$_\pm{\text{\tiny 0.01}}$ & \textbf{0.29$_\pm{\text{\tiny 0.0068}}$} & 0.82$_\pm{\text{\tiny 0.0085}}$ & \textbf{0.3$_\pm{\text{\tiny 0.1}}$} \\
N & 100 & 100 & \textbf{0.11$_\pm{\text{\tiny 0.0021}}$} & \textbf{0.11$_\pm{\text{\tiny 0.0019}}$} & \textbf{0.11$_\pm{\text{\tiny 0.002}}$} & 0.87$_\pm{\text{\tiny 0.0019}}$ & 0.25$_\pm{\text{\tiny 0.1}}$ \\
N & 100 & 500 & \textbf{0.048$_\pm{\text{\tiny 9.3e-04}}$} & \textbf{0.048$_\pm{\text{\tiny 9.3e-04}}$} & \textbf{0.049$_\pm{\text{\tiny 7.1e-04}}$} & --- & --- \\
t & 100 & 10 & 0.5$_\pm{\text{\tiny 0.029}}$ & 0.51$_\pm{\text{\tiny 0.033}}$ & 0.51$_\pm{\text{\tiny 0.047}}$ & 0.94$_\pm{\text{\tiny 0.027}}$ & \textbf{0.24$_\pm{\text{\tiny 0.016}}$} \\
t & 100 & 100 & \textbf{0.18$_\pm{\text{\tiny 0.0069}}$} & \textbf{0.18$_\pm{\text{\tiny 0.0071}}$} & \textbf{0.18$_\pm{\text{\tiny 0.0071}}$} & 1.1$_\pm{\text{\tiny 3.0e-04}}$ & 0.39$_\pm{\text{\tiny 0.0097}}$ \\
t & 100 & 500 & \textbf{0.076$_\pm{\text{\tiny 0.0017}}$} & \textbf{0.077$_\pm{\text{\tiny 0.0019}}$} & --- & --- & --- \\
\bottomrule
\end{tabular}
\caption{\textbf{RMSE for quantile predictions.} }
\end{table}

\begin{table}[ht!]
\centering
\begin{tabular}{lccccccc}
\toprule
\multicolumn{3}{c}{} & \multicolumn{5}{c}{Runtime} \\
\cmidrule(lr){4-8}
Noise & $m$ & $n_j$ & \multicolumn{1}{c}{TKC} & \multicolumn{1}{c}{Fisher} & \multicolumn{1}{c}{BayesQR} & \multicolumn{1}{c}{LQMM} & \multicolumn{1}{c}{BRMS} \\
\midrule
ALD & 100 & 10 & 0.24$_\pm{\text{\tiny 0.03}}$ & 0.17$_\pm{\text{\tiny 0.014}}$ & 56.3$_\pm{\text{\tiny 3.6}}$ & 517$_\pm{\text{\tiny 19.1}}$ & \textbf{0.072$_\pm{\text{\tiny 0.0063}}$} \\
ALD & 100 & 100 & 0.82$_\pm{\text{\tiny 0.069}}$ & \textbf{0.44$_\pm{\text{\tiny 0.056}}$} & 813$_\pm{\text{\tiny 102}}$ & --- & 0.59$_\pm{\text{\tiny 0.093}}$ \\
ALD & 100 & 500 & 1.6$_\pm{\text{\tiny 0.041}}$ & \textbf{0.51$_\pm{\text{\tiny 0.034}}$} & 2949$_\pm{\text{\tiny 41.3}}$ & --- & --- \\
N & 100 & 10 & 0.31$_\pm{\text{\tiny 0.033}}$ & 0.22$_\pm{\text{\tiny 0.022}}$ & 67.5$_\pm{\text{\tiny 0.77}}$ & 459$_\pm{\text{\tiny 10.4}}$ & \textbf{0.077$_\pm{\text{\tiny 0.0033}}$} \\
N & 100 & 100 & 1.1$_\pm{\text{\tiny 0.042}}$ & 0.93$_\pm{\text{\tiny 0.078}}$ & 697$_\pm{\text{\tiny 149}}$ & 2608$_\pm{\text{\tiny 106}}$ & \textbf{0.33$_\pm{\text{\tiny 0.02}}$} \\
N & 100 & 500 & 2.7$_\pm{\text{\tiny 0.13}}$ & \textbf{1.6$_\pm{\text{\tiny 0.12}}$} & 3264$_\pm{\text{\tiny 62.8}}$ & --- & --- \\
t & 100 & 10 & \textbf{0.45$_\pm{\text{\tiny 0.09}}$} & \textbf{0.38$_\pm{\text{\tiny 0.058}}$} & 277$_\pm{\text{\tiny 16.5}}$ & 1662$_\pm{\text{\tiny 70.9}}$ & \textbf{0.41$_\pm{\text{\tiny 0.029}}$} \\
t & 100 & 100 & 1$_\pm{\text{\tiny 0.078}}$ & \textbf{0.86$_\pm{\text{\tiny 0.086}}$} & 1399$_\pm{\text{\tiny 188}}$ & 3338$_\pm{\text{\tiny 95.4}}$ & \textbf{0.9$_\pm{\text{\tiny 0.13}}$} \\
t & 100 & 500 & 2.8$_\pm{\text{\tiny 0.12}}$ & \textbf{1.6$_\pm{\text{\tiny 0.17}}$} & --- & --- & --- \\
\bottomrule
\end{tabular}
\caption{\textbf{Runtime (s).}}
\end{table}

\begin{table}[ht!]
\centering
\begin{tabular}{lcccccc}
\toprule
\multicolumn{3}{c}{} & \multicolumn{4}{c}{MSE of Random Effect Variance (1)} \\
\cmidrule(lr){4-7}
Noise & $m$ & $n_j$ & \multicolumn{1}{c}{TKC} & \multicolumn{1}{c}{Fisher} & \multicolumn{1}{c}{BRMS} & \multicolumn{1}{c}{LQMM} \\
\midrule
ALD & 100 & 10 & \textbf{0.0083} & 0.096 & 0.023 & 0.17 \\
ALD & 100 & 100 & \textbf{0.001} & 0.0025 & --- & 0.76 \\
ALD & 100 & 500 & \textbf{3.5e-06} & 1.3e-05 & --- & --- \\
N & 100 & 10 & \textbf{0.0067} & 0.017 & 0.018 & 0.18 \\
N & 100 & 100 & \textbf{0.01} & 0.035 & --- & 0.34 \\
N & 100 & 500 & \textbf{0.01} & 0.079 & --- & --- \\
t & 100 & 10 & 0.084 & 0.22 & --- & \textbf{0.057} \\
t & 100 & 100 & \textbf{0.032} & 0.085 & --- & 0.31 \\
t & 100 & 500 & \textbf{0.027} & 0.073 & --- & --- \\

\bottomrule
\end{tabular}
\caption{\textbf{MSE of estimated variance component.}}
\end{table}

\newpage
\subsubsection{Crossed Random Effects}
\begin{table}[htbp]
\centering
\begin{tabular}{lcccccc}
\toprule
\multicolumn{3}{c}{} & \multicolumn{4}{c}{RMSE} \\
\cmidrule(lr){4-7}
Noise & $m$ & $n_j$ & \multicolumn{1}{c}{TKC} & \multicolumn{1}{c}{Fisher} & \multicolumn{1}{c}{BayesQR} & \multicolumn{1}{c}{BRMS} \\
\midrule
ALD & 100 & 100 & \textbf{0.019$_\pm{\text{\tiny 5.4e-04}}$} & \textbf{0.019$_\pm{\text{\tiny 5.4e-04}}$} & --- & --- \\
ALD & 100 & 500 & \textbf{0.007$_\pm{\text{\tiny 1.3e-04}}$} & \textbf{0.007$_\pm{\text{\tiny 1.2e-04}}$} & --- & --- \\
N & 100 & 100 & \textbf{0.13$_\pm{\text{\tiny 0.0035}}$} & \textbf{0.13$_\pm{\text{\tiny 0.0036}}$} & --- & --- \\
N & 100 & 500 & \textbf{0.06$_\pm{\text{\tiny 8.1e-04}}$} & \textbf{0.06$_\pm{\text{\tiny 8.2e-04}}$} & --- & --- \\
t & 100 & 100 & \textbf{0.21$_\pm{\text{\tiny 0.0044}}$} & \textbf{0.21$_\pm{\text{\tiny 0.005}}$} & --- & --- \\
t & 100 & 500 & \textbf{0.096$_\pm{\text{\tiny 0.0016}}$} & \textbf{0.096$_\pm{\text{\tiny 0.0016}}$} & --- & --- \\
\bottomrule
\end{tabular}
\caption{\textbf{RMSE for quantile predictions.}}
\end{table}

\begin{table}[htbp]
\centering
\begin{tabular}{lcccccc}
\toprule
\multicolumn{3}{c}{} & \multicolumn{4}{c}{Runtime} \\
\cmidrule(lr){4-7}
Noise & $m$ & $n_j$ & \multicolumn{1}{c}{TKC} & \multicolumn{1}{c}{Fisher} & \multicolumn{1}{c}{BayesQR} & \multicolumn{1}{c}{BRMS} \\
\midrule
ALD & 100 & 100 & 4$_\pm{\text{\tiny 0.32}}$ & \textbf{1.9$_\pm{\text{\tiny 0.16}}$} & --- & --- \\
ALD & 100 & 500 & 69.7$_\pm{\text{\tiny 3.3}}$ & \textbf{21.4$_\pm{\text{\tiny 1.8}}$} & --- & --- \\
N & 100 & 100 & 22.8$_\pm{\text{\tiny 1.2}}$ & \textbf{11.9$_\pm{\text{\tiny 0.94}}$} & --- & --- \\
N & 100 & 500 & 61.3$_\pm{\text{\tiny 5.2}}$ & \textbf{15$_\pm{\text{\tiny 1.2}}$} & --- & --- \\
t & 100 & 100 & 26.5$_\pm{\text{\tiny 2.7}}$ & \textbf{9.2$_\pm{\text{\tiny 0.84}}$} & --- & --- \\
t & 100 & 500 & 56$_\pm{\text{\tiny 6.9}}$ & \textbf{15.9$_\pm{\text{\tiny 1}}$} & --- & --- \\
\bottomrule
\end{tabular}
\caption{\textbf{Runtime (s).}}
\end{table}

\begin{table}[ht!]
\centering
\begin{tabular}{lcccccccccc}
\toprule
\multicolumn{3}{c}{} & \multicolumn{4}{c}{MSE of Random Effect Variance (1)} & \multicolumn{4}{c}{MSE of Random Effect Variance (2)} \\
\cmidrule(lr){4-11}
Noise & $m$ & $n_j $& \multicolumn{1}{c}{TKC} & \multicolumn{1}{c}{Fisher} & \multicolumn{1}{c}{BRMS} & \multicolumn{1}{c}{BayesQR} & \multicolumn{1}{c}{TKC} & \multicolumn{1}{c}{Fisher} & \multicolumn{1}{c}{BRMS} & \multicolumn{1}{c}{BayesQR} \\
\midrule
ALD & 100 & 100 & \textbf{0.05} & 0.086 & --- & --- & 1.7 & \textbf{0.5} & --- & --- \\
ALD & 100 & 500 & 0.17 & \textbf{0.07} & --- & --- & 1.7 & \textbf{0.55} & --- & --- \\
N & 100 & 100 & 0.14 & \textbf{0.12} & --- & --- & 1.3 & \textbf{0.33} & --- & --- \\
N & 100 & 500 & 0.16 & \textbf{0.073} & --- & --- & 1.5 & \textbf{0.21} & --- & --- \\
t & 100 & 100 & \textbf{0.089} & 0.092 & --- & --- & 0.9 & \textbf{0.57} & --- & --- \\
t & 100 & 500 & 0.095 & \textbf{0.052} & --- & --- & 0.99 & \textbf{0.39} & --- & --- \\

\bottomrule
\end{tabular}
\caption{\textbf{Mean squared error of estimated hyperparameters.}}
\end{table}

\newpage
\subsubsection{Gaussian Process}

\begin{table}[ht!]
\centering
\begin{tabular}{lllcccccc}
\toprule
\multicolumn{3}{c}{} & \multicolumn{6}{c}{RMSE} \\
\cmidrule(lr){4-9}
Noise & $n$ & d & TKC & FL & VI & VIVA & QGAM & QGAM Int. \\
\midrule
N & 1{,}000 & 2 & 0.31$_\pm{\text{\tiny 0.034}}$ & 0.31$_\pm{\text{\tiny 0.024}}$ & \textbf{0.26$_\pm{\text{\tiny 0.0037}}$} & 0.63$_\pm{\text{\tiny 0.081}}$ & 0.77$_\pm{\text{\tiny 0.1}}$ & 0.33$_\pm{\text{\tiny 0.022}}$ \\
N & 1{,}000 & 5 & 0.61$_\pm{\text{\tiny 0.015}}$ & 0.57$_\pm{\text{\tiny 0.015}}$ & \textbf{0.54$_\pm{\text{\tiny 0.012}}$} & 0.57$_\pm{\text{\tiny 0.016}}$ & 1.1$_\pm{\text{\tiny 0.022}}$ & 1.1$_\pm{\text{\tiny 0.021}}$ \\
N & 10{,}000 & 2 & 0.35$_\pm{\text{\tiny 0.046}}$ & \textbf{0.27$_\pm{\text{\tiny 0.032}}$} & \textbf{0.27$_\pm{\text{\tiny 0.047}}$} & 0.43$_\pm{\text{\tiny 0.0083}}$ & 1$_\pm{\text{\tiny 0.11}}$ & 0.74$_\pm{\text{\tiny 0.15}}$ \\
N & 10{,}000 & 5 & \textbf{0.9$_\pm{\text{\tiny 0.12}}$} & \textbf{0.76$_\pm{\text{\tiny 0.12}}$} & \textbf{0.89$_\pm{\text{\tiny 0.17}}$} & \textbf{0.74$_\pm{\text{\tiny 0.11}}$} & 1.2$_\pm{\text{\tiny 0.078}}$ & 1.2$_\pm{\text{\tiny 0.085}}$ \\
HetN & 1{,}000 & 2 & \textbf{0.44$_\pm{\text{\tiny 0.045}}$} & \textbf{0.45$_\pm{\text{\tiny 0.06}}$} & \textbf{0.37$_\pm{\text{\tiny 0.05}}$} & 0.86$_\pm{\text{\tiny 0.13}}$ & 1.1$_\pm{\text{\tiny 0.09}}$ & 0.6$_\pm{\text{\tiny 0.059}}$ \\
HetN & 1{,}000 & 5 & \textbf{0.96$_\pm{\text{\tiny 0.078}}$} & \textbf{0.91$_\pm{\text{\tiny 0.073}}$} & \textbf{0.9$_\pm{\text{\tiny 0.07}}$} & \textbf{0.99$_\pm{\text{\tiny 0.095}}$} & 1.5$_\pm{\text{\tiny 0.068}}$ & 1.5$_\pm{\text{\tiny 0.067}}$ \\
HetN & 10{,}000 & 2 & \textbf{0.55$_\pm{\text{\tiny 0.12}}$} & \textbf{0.49$_\pm{\text{\tiny 0.1}}$} & \textbf{0.49$_\pm{\text{\tiny 0.1}}$} & 0.94$_\pm{\text{\tiny 0.34}}$ & 1.5$_\pm{\text{\tiny 0.24}}$ & 1.1$_\pm{\text{\tiny 0.2}}$ \\
HetN & 10{,}000 & 5 & \textbf{1.2$_\pm{\text{\tiny 0.16}}$} & \textbf{1.1$_\pm{\text{\tiny 0.15}}$} & \textbf{1.2$_\pm{\text{\tiny 0.2}}$} & \textbf{1.1$_\pm{\text{\tiny 0.14}}$} & 1.6$_\pm{\text{\tiny 0.097}}$ & 1.6$_\pm{\text{\tiny 0.1}}$ \\
t & 1{,}000 & 2 & 0.44$_\pm{\text{\tiny 0.03}}$ & 0.44$_\pm{\text{\tiny 0.023}}$ & \textbf{0.38$_\pm{\text{\tiny 0.022}}$} & 0.61$_\pm{\text{\tiny 0.049}}$ & 0.79$_\pm{\text{\tiny 0.074}}$ & \textbf{0.41$_\pm{\text{\tiny 0.019}}$} \\
t & 1{,}000 & 5 & 0.66$_\pm{\text{\tiny 0.031}}$ & \textbf{0.6$_\pm{\text{\tiny 0.015}}$} & \textbf{0.59$_\pm{\text{\tiny 0.019}}$} & \textbf{0.58$_\pm{\text{\tiny 0.015}}$} & 1.2$_\pm{\text{\tiny 0.025}}$ & 1.2$_\pm{\text{\tiny 0.028}}$ \\
t & 10{,}000 & 2 & 0.49$_\pm{\text{\tiny 0.045}}$ & 0.45$_\pm{\text{\tiny 0.0082}}$ & \textbf{0.33$_\pm{\text{\tiny 0.046}}$} & 0.48$_\pm{\text{\tiny 0.0077}}$ & 1$_\pm{\text{\tiny 0.11}}$ & 0.75$_\pm{\text{\tiny 0.15}}$ \\
t & 10{,}000 & 5 & \textbf{0.93$_\pm{\text{\tiny 0.11}}$} & \textbf{0.77$_\pm{\text{\tiny 0.12}}$} & \textbf{0.93$_\pm{\text{\tiny 0.17}}$} & \textbf{0.75$_\pm{\text{\tiny 0.11}}$} & 1.3$_\pm{\text{\tiny 0.079}}$ & 1.2$_\pm{\text{\tiny 0.085}}$ \\
\bottomrule
\end{tabular}
\caption{\textbf{RMSE.}}
\end{table}

\vspace{-0.25cm}
\begin{table}[ht!]
\centering
\begin{tabular}{lllcccccc}
\toprule
\multicolumn{3}{c}{} & \multicolumn{6}{c}{Runtime} \\
\cmidrule(lr){4-9}
Noise & $n$ & d & TKC & FL & VI & VIVA & QGAM & QGAM Int. \\
\midrule
N & 1{,}000 & 2 & 6.1$_\pm{\text{\tiny 0.69}}$ & 4.8$_\pm{\text{\tiny 0.84}}$ & 32.8$_\pm{\text{\tiny 1.1}}$ & 35.9$_\pm{\text{\tiny 1.6}}$ & 0.87$_\pm{\text{\tiny 0.39}}$ & \textbf{0.52$_\pm{\text{\tiny 0.074}}$} \\
N & 1{,}000 & 5 & 9.8$_\pm{\text{\tiny 1.4}}$ & 4.5$_\pm{\text{\tiny 0.46}}$ & 67.2$_\pm{\text{\tiny 4.4}}$ & 117$_\pm{\text{\tiny 4.1}}$ & 8.4$_\pm{\text{\tiny 1}}$ & \textbf{1.5$_\pm{\text{\tiny 0.17}}$} \\
N & 10{,}000 & 2 & 163$_\pm{\text{\tiny 10.6}}$ & 144$_\pm{\text{\tiny 23.6}}$ & 547$_\pm{\text{\tiny 22.3}}$ & 431$_\pm{\text{\tiny 13.6}}$ & 8.5$_\pm{\text{\tiny 2.9}}$ & \textbf{4.1$_\pm{\text{\tiny 0.17}}$} \\
N & 10{,}000 & 5 & 3976$_\pm{\text{\tiny 405}}$ & 2351$_\pm{\text{\tiny 166}}$ & 471$_\pm{\text{\tiny 28.8}}$ & 1174$_\pm{\text{\tiny 40.6}}$ & 24.6$_\pm{\text{\tiny 2.6}}$ & \textbf{11$_\pm{\text{\tiny 0.5}}$} \\
HetN & 1{,}000 & 2 & 7.3$_\pm{\text{\tiny 1.3}}$ & 5.4$_\pm{\text{\tiny 0.74}}$ & 44.9$_\pm{\text{\tiny 2.8}}$ & 38.7$_\pm{\text{\tiny 1.6}}$ & 1.8$_\pm{\text{\tiny 0.34}}$ & \textbf{0.76$_\pm{\text{\tiny 0.041}}$} \\
HetN & 1{,}000 & 5 & 9.4$_\pm{\text{\tiny 1}}$ & 5.6$_\pm{\text{\tiny 0.92}}$ & 44.1$_\pm{\text{\tiny 2.2}}$ & 73.5$_\pm{\text{\tiny 2.5}}$ & 9.1$_\pm{\text{\tiny 1.5}}$ & \textbf{2.9$_\pm{\text{\tiny 0.59}}$} \\
HetN & 10{,}000 & 2 & 197$_\pm{\text{\tiny 19.5}}$ & 125$_\pm{\text{\tiny 14.5}}$ & 543$_\pm{\text{\tiny 35}}$ & 414$_\pm{\text{\tiny 19.6}}$ & \textbf{5.9$_\pm{\text{\tiny 0.84}}$} & \textbf{6.2$_\pm{\text{\tiny 0.38}}$} \\
HetN & 10{,}000 & 5 & 6063$_\pm{\text{\tiny 723}}$ & 2356$_\pm{\text{\tiny 238}}$ & 485$_\pm{\text{\tiny 40.1}}$ & 1168$_\pm{\text{\tiny 35.4}}$ & 29.2$_\pm{\text{\tiny 2.2}}$ & \textbf{12.4$_\pm{\text{\tiny 1.4}}$} \\
t & 1{,}000 & 2 & 8.3$_\pm{\text{\tiny 0.8}}$ & 4.7$_\pm{\text{\tiny 0.56}}$ & 41.9$_\pm{\text{\tiny 2.7}}$ & 38.3$_\pm{\text{\tiny 1.6}}$ & 2.1$_\pm{\text{\tiny 0.3}}$ & \textbf{0.82$_\pm{\text{\tiny 0.17}}$} \\
t & 1{,}000 & 5 & 6.6$_\pm{\text{\tiny 1.2}}$ & \textbf{5.2$_\pm{\text{\tiny 0.51}}$} & 45.2$_\pm{\text{\tiny 1.8}}$ & 74$_\pm{\text{\tiny 2.1}}$ & 12.8$_\pm{\text{\tiny 1.9}}$ & \textbf{4.3$_\pm{\text{\tiny 1.1}}$} \\
t & 10{,}000 & 2 & 220$_\pm{\text{\tiny 25.8}}$ & 155$_\pm{\text{\tiny 18.4}}$ & 537$_\pm{\text{\tiny 24.7}}$ & 435$_\pm{\text{\tiny 15.6}}$ & 10.1$_\pm{\text{\tiny 2.3}}$ & \textbf{7.3$_\pm{\text{\tiny 0.95}}$} \\
t & 10{,}000 & 5 & 4170$_\pm{\text{\tiny 406}}$ & 2808$_\pm{\text{\tiny 295}}$ & 474$_\pm{\text{\tiny 23}}$ & 1224$_\pm{\text{\tiny 39.9}}$ & 49.3$_\pm{\text{\tiny 8.2}}$ & \textbf{11.1$_\pm{\text{\tiny 1.1}}$} \\
\bottomrule
\end{tabular}
\caption{\textbf{Runtime (s).}}
\end{table}
\vspace{-0.25cm}
\begin{table}[ht!]
\centering
\begin{tabular}{lcccccc@{\hspace{6pt}}|@{\hspace{6pt}}cccc}
\toprule
\multicolumn{3}{c}{} & \multicolumn{4}{c}{Lengthscale MSE} & \multicolumn{4}{c}{Signal Variance MSE} \\
\cmidrule(lr){4-11}
Noise & $n$ & d & \multicolumn{1}{c}{TKC} & \multicolumn{1}{c}{FL} & \multicolumn{1}{c}{VI} & \multicolumn{1}{c}{VIVA} & \multicolumn{1}{c}{TKC} & \multicolumn{1}{c}{FL} & \multicolumn{1}{c}{VI} & \multicolumn{1}{c}{VIVA} \\
\midrule
N & 1{,}000 & 2 & 0.0049 & \textbf{0.0025} & 0.0035 & 0.027 & 0.057 & 0.055 & \textbf{0.0078} & 0.068 \\
N & 1{,}000 & 5 & 0.091 & 0.16 & 0.063 & \textbf{0.048} & 0.096 & 0.23 & \textbf{0.013} & 0.062 \\
N & 10{,}000 & 2 & 0.03 & 0.021 & \textbf{0.01} & 0.045 & 0.1 & 0.18 & \textbf{0.046} & 0.18 \\
N & 10{,}000 & 5 & 0.013 & 0.046 & 0.1 & \textbf{0.0086} & 0.18 & 0.043 & \textbf{0.024} & 0.029 \\
HetN & 1{,}000 & 2 & 0.0034 & \textbf{0.0031} & 0.0038 & 0.028 & \textbf{0.06} & 0.26 & 0.078 & 0.082 \\
HetN & 1{,}000 & 5 & 0.055 & 0.32 & 0.052 & \textbf{0.042} & 0.29 & 4.4 & \textbf{0.038} & 0.054 \\
HetN & 10000 & 2 & 0.023 & 0.025 & \textbf{0.012} & 0.043 & 3.5 & 27 & \textbf{0.093} & 0.15 \\
HetN & 10{,}000 & 5 & 0.063 & 0.023 & 0.081 & \textbf{0.0069} & 0.97 & 0.028 & \textbf{0.014} & 0.021 \\
t & 1{,}000 & 2 & 0.0089 & \textbf{0.008} & 0.0081 & 0.029 & 0.16 & 0.041 & \textbf{0.019} & 0.066 \\
t & 1{,}000 & 5 & 0.061 & 0.21 & 0.05 & \textbf{0.042} & 0.055 & 0.83 & \textbf{0.012} & 0.057 \\
t & 10{,}000 & 2 & 0.038 & 0.04 & \textbf{0.011} & 0.047 & 0.21 & 0.24 & \textbf{0.041} & 0.1 \\
t & 10{,}000 & 5 & 0.027 & 0.012 & 0.078 & \textbf{0.0061} & 0.035 & 0.05 & 0.02 & \textbf{0.017} \\

\bottomrule
\end{tabular}
\caption{\textbf{Mean squared error of estimated hyperparameters.}}
\end{table}

\clearpage
\subsection{Real world data experiments}
\subsubsection{Single-level Grouped Random Effects}
\begin{table}[ht!]
\centering
\begin{tabular}{lccccc}
\toprule
\multicolumn{1}{c}{} & \multicolumn{5}{c}{Quantile Loss} \\
\cmidrule(lr){2-6}
Dataset & \multicolumn{1}{c}{TKC} & \multicolumn{1}{c}{Fisher} & \multicolumn{1}{c}{BayesQR} & \multicolumn{1}{c}{LQMM} & \multicolumn{1}{c}{BRMS} \\
\midrule
Orthodont & \textbf{0.14$_\pm{\text{\tiny 0.07}}$} & \textbf{0.13$_\pm{\text{\tiny 0.04}}$} & \textbf{0.09$_\pm{\text{\tiny 0.04}}$} & \textbf{0.09$_\pm{\text{\tiny 0.07}}$} & 0.25$_\pm{\text{\tiny 0.11}}$ \\
Labor & \textbf{0.10$_\pm{\text{\tiny 0.03}}$} & \textbf{0.10$_\pm{\text{\tiny 0.03}}$} & \textbf{0.09$_\pm{\text{\tiny 0.03}}$} & \textbf{0.09$_\pm{\text{\tiny 0.04}}$} & 0.28$_\pm{\text{\tiny 0.04}}$ \\
Cars & \textbf{0.095$_\pm{\text{\tiny 0.0098}}$} & \textbf{0.085$_\pm{\text{\tiny 0.0068}}$} & --- & --- \\
\bottomrule
\end{tabular}
\caption{\textbf{Quantile Loss.} }
\end{table}

\begin{table}[ht!]
\centering
\begin{tabular}{lccccc}
\toprule
\multicolumn{1}{c}{} & \multicolumn{5}{c}{Runtime} \\
\cmidrule(lr){2-6}
Dataset & \multicolumn{1}{c}{TKC} & \multicolumn{1}{c}{Fisher} & \multicolumn{1}{c}{BayesQR} & \multicolumn{1}{c}{LQMM} & \multicolumn{1}{c}{BRMS} \\
\midrule
Orthodont & 1.07$_\pm{\text{\tiny 0.49}}$ & \textbf{0.31$_\pm{\text{\tiny 0.17}}$} & 138.18$_\pm{\text{\tiny 69.35}}$ & 1.07$_\pm{\text{\tiny 0.04}}$ & 383.31$_\pm{\text{\tiny 20.50}}$ \\
Labor & 1.53$_\pm{\text{\tiny 0.56}}$ & \textbf{0.42$_\pm{\text{\tiny 0.24}}$} & 2015.07$_\pm{\text{\tiny 1100.49}}$ & 3.18$_\pm{\text{\tiny 0.04}}$ & 948.60$_\pm{\text{\tiny 96.67}}$ \\
Cars & 87.7$_\pm{\text{\tiny 57.4}}$ & \textbf{61.1$_\pm{\text{\tiny 13.1}}$} & --- & --- \\
\bottomrule
\end{tabular}
\caption{\textbf{Runtime (s).}}
\end{table}

\subsubsection{Crossed Random Effects}

\begin{table}[ht!]
\centering
\begin{tabular}{lcccc}
\toprule
\multicolumn{1}{c}{} & \multicolumn{4}{c}{Quantile Loss} \\
\cmidrule(lr){2-5}
Dataset & \multicolumn{1}{c}{TKC} & \multicolumn{1}{c}{Fisher} & \multicolumn{1}{c}{BayesQR} & \multicolumn{1}{c}{BRMS} \\
\midrule
Cars & \textbf{0.093$_\pm{\text{\tiny 0.0088}}$} & \textbf{0.096$_\pm{\text{\tiny 0.013}}$} & --- & --- \\
Ratings & \textbf{0.081$_\pm{\text{\tiny 0.0015}}$} & \textbf{0.077$_\pm{\text{\tiny 0.0018}}$} & --- & --- \\
\bottomrule
\end{tabular}
\caption{\textbf{Quantile Loss.}}
\end{table}

\begin{table}[ht!]
\centering
\begin{tabular}{lcccc}
\toprule
\multicolumn{1}{c}{} & \multicolumn{4}{c}{Runtime} \\
\cmidrule(lr){2-5}
Dataset & \multicolumn{1}{c}{TKC} & \multicolumn{1}{c}{Fisher} & \multicolumn{1}{c}{BayesQR} & \multicolumn{1}{c}{BRMS} \\
\midrule
Cars & \textbf{7043$_\pm{\text{\tiny 1477}}$} & \textbf{4218$_\pm{\text{\tiny 1560}}$} & --- & --- \\
Ratings & 770$_\pm{\text{\tiny 195}}$ & \textbf{337$_\pm{\text{\tiny 64.6}}$} & --- & --- \\
\bottomrule
\end{tabular}
\caption{\textbf{Runtime (s).}}
\end{table}

\clearpage
\subsubsection{Gaussian Process}
\begin{table}[ht!]
\centering
\begin{tabular}{lcccccc}
\toprule
\multicolumn{1}{c}{} & \multicolumn{6}{c}{Quantile Loss} \\
\cmidrule(lr){2-7}
Dataset & TKC & FL & VI & VIVA & QGAM & QGAM Int. \\
\midrule
Heaton & 0.052$_\pm$\text{\tiny 0.0093} & 0.038$_\pm$\text{\tiny 0.0018} & \textbf{0.033$_\pm$\text{\tiny 9.9e-04}} & 0.039$_\pm$\text{\tiny 0.0016} & 0.046$_\pm$\text{\tiny 0.0013} & 0.038$_\pm$\text{\tiny 7.3e-04} \\
House & 0.064$_\pm$\text{\tiny 0.017} & 0.053$_\pm$\text{\tiny 0.0055} & \textbf{0.046$_\pm$\text{\tiny 0.0019}} & 0.083$_\pm$\text{\tiny 0.0062} & 0.065$_\pm$\text{\tiny 0.002} & 0.059$_\pm$\text{\tiny 8.8e-04} \\
Laegern & \textbf{0.066$_\pm$\text{\tiny 0.009}} & \textbf{0.065$_\pm$\text{\tiny 0.0037}} & \textbf{0.063$_\pm$\text{\tiny 0.0018}} & 0.086$_\pm$\text{\tiny 0.0074} & 0.075$_\pm$\text{\tiny 0.0026} & 0.071$_\pm$\text{\tiny 0.0026} \\
MODIS & 0.05$_\pm$\text{\tiny 0.015} & 0.041$_\pm$\text{\tiny 0.0033} & \textbf{0.036$_\pm$\text{\tiny 0.0021}} & 0.042$_\pm$\text{\tiny 0.0021} & 0.065$_\pm$\text{\tiny 0.0021} & 0.046$_\pm$\text{\tiny 0.0028} \\
\bottomrule
\end{tabular}
\caption{\textbf{Quantile loss}.}
\end{table}

\begin{table}[ht!]
\centering
\begin{tabular}{lcccccc}
\toprule
\multicolumn{1}{c}{} & \multicolumn{6}{c}{Runtime} \\
\cmidrule(lr){2-7}
Dataset & TKC & FL & VI & VIVA & QGAM & QGAM Int. \\
\midrule
Heaton & 1196$_\pm$\text{\tiny 238} & 869$_\pm$\text{\tiny 194} & 2283$_\pm$\text{\tiny 10.3} & 2915$_\pm$\text{\tiny 232} & \textbf{4.8$_\pm$\text{\tiny 0.94}} & \textbf{3$_\pm$\text{\tiny 0.91}} \\
House & 979$_\pm$\text{\tiny 267} & 415$_\pm$\text{\tiny 99.6} & 2282$_\pm$\text{\tiny 13.3} & 3031$_\pm$\text{\tiny 103} & \textbf{4$_\pm$\text{\tiny 1.1}} & 9.4$_\pm$\text{\tiny 3.2} \\
Laegern & 1025$_\pm$\text{\tiny 469} & 999$_\pm$\text{\tiny 157} & 2461$_\pm$\text{\tiny 27.1} & 2966$_\pm$\text{\tiny 412} & 12.9$_\pm$\text{\tiny 5.7} & \textbf{6.7$_\pm$\text{\tiny 1}} \\
MODIS & 1569$_\pm$\text{\tiny 367} & 878$_\pm$\text{\tiny 141} & 2285$_\pm$\text{\tiny 23.1} & 2872$_\pm$\text{\tiny 174} & \textbf{2.9$_\pm$\text{\tiny 0.25}} & 3.5$_\pm$\text{\tiny 0.37} \\
\bottomrule
\end{tabular}
\caption{\textbf{Runtime (s).}}
\end{table}

\clearpage
\section{Uncertainty-Aware Prediction Intervals}
\label{appendix:cqr}

We demonstrate how our predictive variance estimates can improve conditional coverage in the Conformalized Quantile Regression (CQR) framework \citep{romano2019conformalized, rossellini2024integrating}.

Given quantile estimates $\hat{q}_{\alpha/2}(x)$ and $\hat{q}_{1-\alpha/2}(x)$ for the lower and upper quantiles, standard CQR constructs prediction intervals by finding a scalar correction $t$ on a held-out calibration set:
\begin{equation}
C_{1-\alpha}(x) = [\hat{q}_{\alpha/2}(x) - t, \, \hat{q}_{1-\alpha/2}(x) + t]
\end{equation}
where $t$ is the smallest value such that 
\begin{equation}
\frac{1}{n_{\text{cal}}} \sum_{i=1}^{n_{\text{cal}}} \mathbbm{1}(y_i \in C_{1-\alpha}(x_i)) \geq 1 - \alpha
\end{equation}
on the calibration set. This procedure guarantees marginal coverage: $\mathbb{P}(y_{n+1} \in C_{1-\alpha}(x_{n+1})) \geq 1-\alpha$.

To improve conditional coverage, \cite{rossellini2024integrating} propose modulating the correction by a measure of local epistemic uncertainty $\hat{\sigma}(x)$:
\begin{equation}
C_{1-\alpha}(x) = [\hat{q}_{\alpha/2}(x) - t \cdot \hat{\sigma}(x), \, \hat{q}_{1-\alpha/2}(x) + t \cdot \hat{\sigma}(x)]
\end{equation}
The scalar $t$ is again calibrated to achieve marginal coverage, but the interval width now adapts to local uncertainty. In regions where the model is more uncertain, i.e. large $\hat{\sigma}(x)$, intervals are wider; in confident regions where $\hat{\sigma}(x)$ is smaller, they are narrower. This can improve conditional coverage:
\begin{equation}
\mathbb{P}(y_{n+1} \in C_{1-\alpha}(x_{n+1}) \mid x_{n+1} = x) \geq 1-\alpha \quad \text{for all } x.
\end{equation}
We use the \emph{predictive variance} from our Laplace approximation as the epistemic uncertainty measure $\hat{\sigma}(x)$. This variance reflects both parameter uncertainty (via the posterior covariance) and structural uncertainty captured by the Gaussian process. 

In the experiment, we generate heteroscedastic data with Gaussian noise and a non-uniform covariate distribution. 
Covariates are drawn from a compact interval with density increasing in $x$:
\begin{align}
x_i &\sim p(x), \qquad p(x) \propto x^2 \, \mathbf{1}_{[0,2]}(x).
\end{align}
This distribution is obtained by importance resampling from an initial uniform proposal, resulting in higher data density near the boundary of the covariate domain. Conditional on $x_i = x$ the response satisfies
\begin{align}
(y_i \mid x_i = x) &\sim \mathcal{N}\bigl(5 + \sin(5x), \, 1 + 0.6 |x|\bigr)
\end{align}
The combination of non-uniform covariate support and input-dependent noise creates regions with sparse data and increased uncertainty, posing a challenge for conditional coverage guarantees.

The evaluation procedure proceeds as follows:
\begin{enumerate}
    \item The data are split into a training set ($n = 500$), a calibration set ($n_{\mathrm{cal}} = 500$), and a large test set ($n_{\mathrm{test}} = 10'000$).
    \item Gaussian process quantile regression models are fitted on the training set for the lower and upper quantiles $\tau = \alpha/2$ and $\tau = 1 - \alpha/2$.
    \item Using the calibration set, conformal adjustment is performed:
    \begin{itemize}
        \item \textbf{Standard CQR}: a single scalar threshold $t$ is selected to achieve the nominal marginal coverage.
        \item \textbf{Uncertainty-aware CQR}: a scalar threshold $t$ is selected after modulating conformity scores by the predictive variance, again achieving nominal marginal coverage.
    \end{itemize}
    \item Conditional coverage is evaluated on the test set, which is sufficiently large to allow stable estimation across covariate regions.
\end{enumerate}

We assess conditional coverage by partitioning the test set into $K=10$ equally-spaced bins based on covariate values and computing the empirical coverage within each bin. The large test set size ensures reliable estimation of bin-specific coverage rates. Figure~\ref{fig:cqr_coverage} shows conditional coverage across bins for both methods. Standard CQR exhibits substantial variation in coverage across bins. In contrast, uncertainty-aware CQR using our predictive variance achieves more uniform coverage across bins, with coverage rates closer to the nominal $1-\alpha$ level in all regions. These results demonstrate that while our predictive distributions may not achieve exact frequentist calibration in all settings, they capture \textit{meaningful uncertainty information that improves conditional coverage in downstream tasks}. The predictive variance successfully identifies regions where additional interval width is needed, leading to more reliable inference across the covariate space.

\begin{figure}[ht!]
    \centering
    \includegraphics[width=1\linewidth]{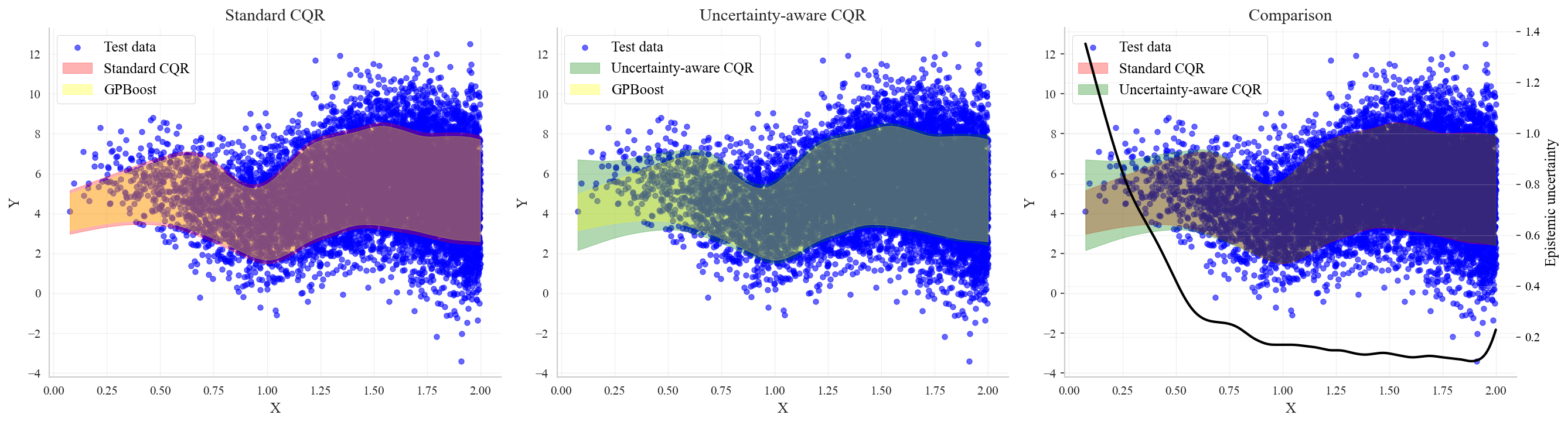}
    \caption{\textbf{Prediction intervals and predictive uncertainty on heteroscedastic test data}. \textbf{Left:} Standard CQR intervals (red) with fitted quantiles (yellow). The scalar correction is nearly invisible as the base model achieves reasonable marginal coverage. \textbf{Middle:} Uncertainty-aware CQR intervals adapt to local data density: intervals widen substantially in the low-density region (left) where epistemic uncertainty is high. \textbf{Right:} Predictive standard deviation from our Laplace approximation increases dramatically in sparse regions, correctly identifying areas requiring wider intervals for reliable coverage.}
\label{fig:cqr_intervals}
    \label{fig:cqr}
\end{figure}

\begin{figure}[ht!]
    \centering    \includegraphics[width=0.95\linewidth]{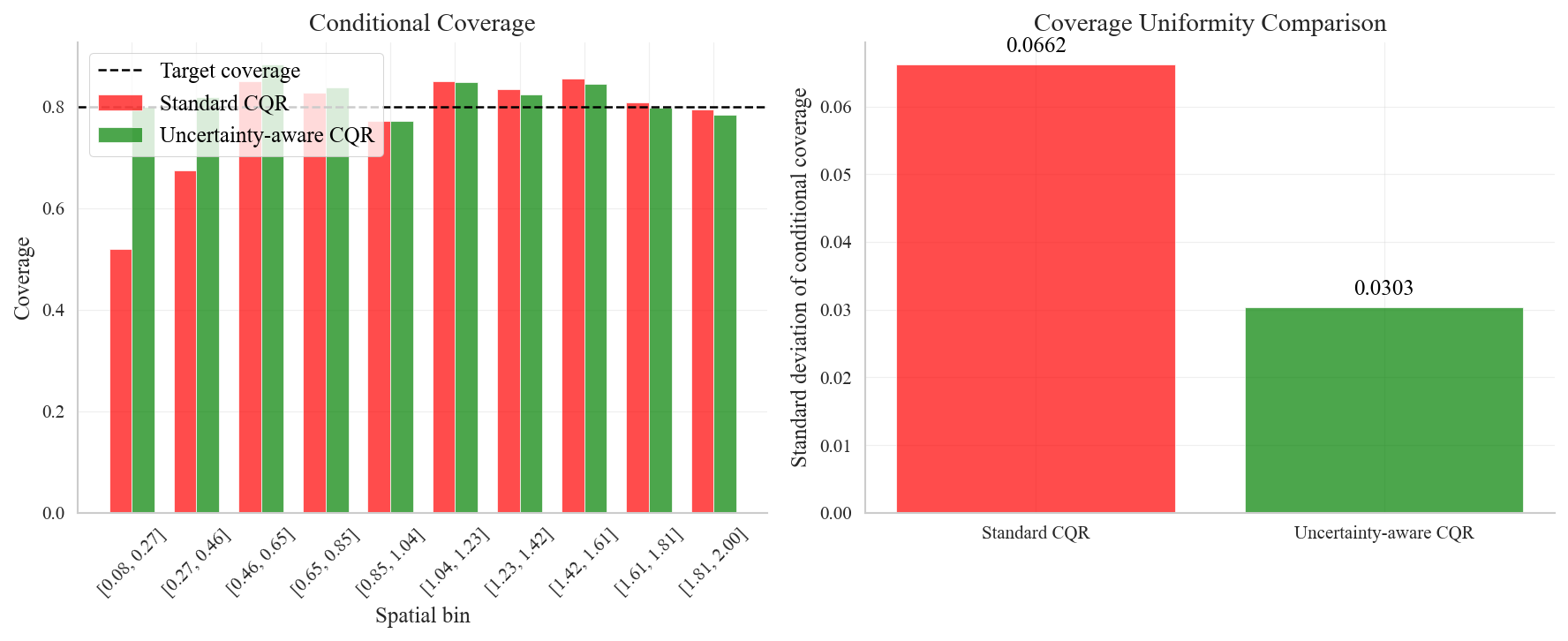}
    \caption{\textbf{Conditional coverage assessment across covariate bins}. \textbf{Left:} Empirical coverage in 10 equally-spaced bins compared to nominal target level (dashed line). Uncertainty-aware CQR (green) achieves more uniform coverage across bins than standard CQR (red). \textbf{Right:} Standard deviation of coverage across bins. Uncertainty-aware CQR exhibits lower variability, indicating more consistent conditional coverage and better adaptation to local uncertainty.}
    \label{fig:cqr_coverage}
\end{figure}

\clearpage
\renewcommand{\refname}{Appendix References}
\putbib[bib_name]
\end{bibunit}

\end{document}